\newcolumntype{V}{>{\centering\arraybackslash} m{.4\linewidth} }
\newcommand{\prob}[1]{\mathsf{Pr}\left(#1\right)}
\newcommand{\GacsKorner}{G\'{a}cs-K\"{o}rner\ }
\newcommand{\Gacs}{G\'{a}cs\ }
\DeclareMathOperator{\Ent}{Ent}
\DeclareMathOperator{\Ber}{Ber}
\DeclareMathOperator{\Var}{Var}
\DeclareMathOperator{\DSBS}{DSBS}
\DeclareMathOperator{\DSBStriple}{DSBS-triple}
\newcommand{\indicator}[1]{\mathds{1}_{\left[ {#1} \right] }}
\newtheorem{theorem}{Theorem}
\newtheorem{lemma}{Lemma}
\newtheorem{corollary}{Corollary}
\newtheorem{observation}{Observation}
\theoremstyle{definition}
\newtheorem{definition}{Definition}
\theoremstyle{remark}
\newtheorem{remark}{Remark}
\newtheorem{example}{Example}
\title{On Non-Interactive Simulation of Joint Distributions}
\begin{document}

\title{On Non-Interactive Simulation of Joint Distributions} \author{
  \IEEEauthorblockN{Sudeep Kamath\IEEEauthorrefmark{1}, Venkat
    Anantharam\IEEEauthorrefmark{2}\thanks{Part of this paper was
      presented at the 50th Annual Allerton Conference on
      Communications, Control and Computing 2012, Monticello,
      Illinois. This document is the final version of the paper to
      appear in the IEEE Transactions on Information Theory.}
  }\\
  \IEEEauthorblockA{\IEEEauthorrefmark{1}ECE Department, Princeton
    University,
    \\ sukamath@princeton.edu}\\
  \IEEEauthorblockA{\IEEEauthorrefmark{2}EECS Department, University
    of California, Berkeley, \\ ananth@eecs.berkeley.edu} }
\date{\today}
\maketitle
\thispagestyle{plain}
\pagestyle{plain}

\begin{abstract}
  We consider the following non-interactive simulation problem: Alice
  and Bob observe sequences $X^n$ and $Y^n$ respectively where
  $\{(X_i, Y_i)\}_{i=1}^n$ are drawn i.i.d. from $P(x,y),$ and they
  output $U$ and $V$ respectively which is required to have a joint
  law that is close in total variation to a specified $Q(u,v).$ It is
  known that the maximal correlation of $U$ and $V$ must necessarily
  be no bigger than that of $X$ and $Y$ if this is to be possible. Our
  main contribution is to bring hypercontractivity to bear as a tool
  on this problem. In particular, we show that if $P(x,y)$ is the
  doubly symmetric binary source, then hypercontractivity provides
  stronger impossibility results than maximal correlation. Finally, we
  extend these tools to provide impossibility results for the
  $k$-agent version of this problem.

\end{abstract}

\section{Introduction}

The problem of simulating random variables by two agents with suitable
resource constraints has had a rich history leading to different
formulations of this problem in the literature. The general setup for
the problem is as follows: Two or more agents wish to simulate a
specified joint distribution under resource constraints in the form of
limited communication, limited common randomness provided to all of
them, or limited correlation between their
observations. One then wishes to find the minimum resources required
to achieve the desired goal.

The simulation problem has natural applications in numerous areas ---
from game-theoretic co-ordination in a network against an adversary to
control of a dynamical system over a distributed network. These
problems are expected to be important in many future technologies with
remote-controlled applications, such as Amazon's drone-based delivery
system \cite{AmazonPrimeAir} and robotic environmental cleanup,
vegetation management, land clearing, and bio-mass harvesting
\cite{RoboticsEngineering}. In these technologies, individual robotic
components would need to take randomized actions under limited or no
communication with other components or the central system. Study of
the simulation problem can provide fundamental limits on the
capabilities of such robotic components and guide efficient usage of
the available resources.

The earliest studied two-agent simulation problems were considered by
G\'{a}cs and K\"{o}rner \cite{GacsKorner72}, and Wyner
\cite{Wyner75}. One may interpret their results, which we will
describe shortly, in the framework of a generalization of both their
problem setups as shown in Fig.~\ref{fig:GacsKornerWyner}. Let the
random variables $X,Y,U,V$ shown take values in finite sets.

\begin{figure}[h]
  \begin{center}
    \includegraphics[width = 3in, height=!]{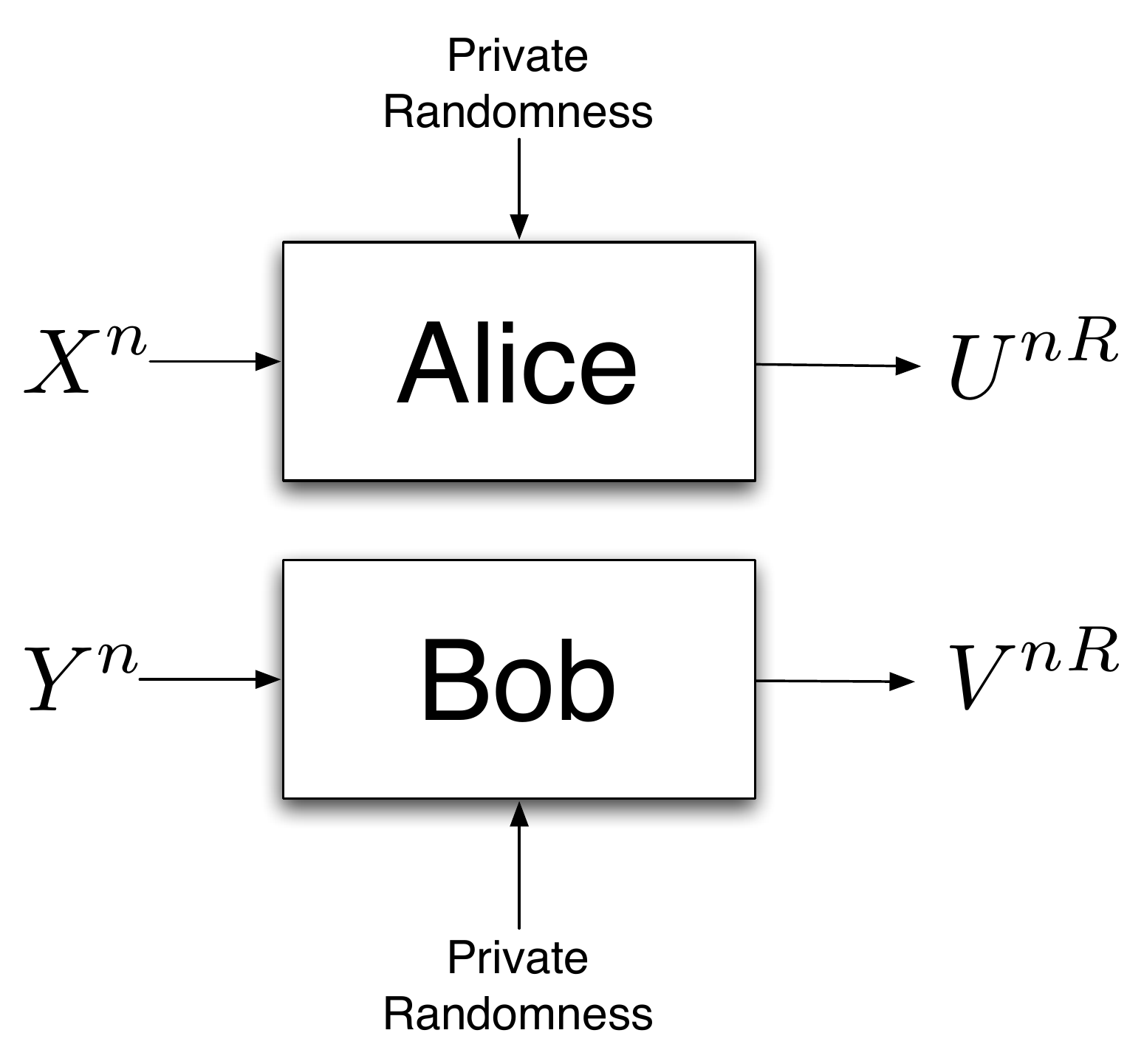}
    \caption{A generalization of the problem setups considered by
      \GacsKorner \cite{GacsKorner72} and Wyner\cite{Wyner75}}
    \label{fig:GacsKornerWyner}
  \end{center}
\end{figure}

In this formulation, two agents each having access to its own infinite
stream of private randomness, observe $n$ i.i.d. copies of samples
generated according to a specified law $P(x,y)$ as shown, and are
required to output $nR$ samples drawn from a distribution that is
close (in total variation) to the the distribution constructed by
taking i.i.d. copies of a specified law $Q(u,v).$ Let the simulation
capacity $R^*$ be defined as the supremum of all rates for which given
any $\epsilon>0,$ it is possible for some $n$ to carry out this task
to within total variation distance $\epsilon$.
\begin{itemize}
\item When $Q(u,v)$ is described by $U=V\sim\Ber(1/2),$ and $P(x,y)$
  is a general distribution, this problem considers fundamental limits
  for \emph{extracting} common randomness from the distribution of
  $(X,Y).$ G\'{a}cs and K\"{o}rner showed in \cite{GacsKorner72} that
  we have the \emph{simulation capacity} $R^* = K(X;Y),$ which has
  come to be known as the G\'{a}cs-K\"{o}rner common information of
  $X$ and $Y.$ This quantity $K(X;Y)$ can be described as $\sup
  H(\Theta)$ where $\Theta=f(X)=g(Y).$ In other words, the simulation
  capacity is non-zero only when the distribution of $(X,Y)$ is
  \emph{decomposable}, i.e. $\mathcal{X}$ may be partitioned as
  $\mathcal{X}_1\cup\mathcal{X}_2$ and $\mathcal{Y}$ may be
  partitioned as $\mathcal{Y}_1\cup\mathcal{Y}_2$ so that $\prob{X\in
    \mathcal{X}_1,Y\in \mathcal{Y}_2} = \prob{X\in
    \mathcal{X}_2,Y\in\mathcal{Y}_1}=0$ and $\prob{X\in
    \mathcal{X}_1,Y\in \mathcal{Y}_1}, \prob{X\in
    \mathcal{X}_2,Y\in\mathcal{Y}_2}>0.$ Further, they showed that in
  general, $K(X;Y)\leq I(X;Y).$
\item When $P(x,y)$ is described by $X=Y\sim\Ber(1/2),$ and $Q(u,v)$
  is a general distribution, this problem considers fundamental limits
  for common randomness needed for \emph{generating} the random
  variable pair $(U,V)$. Wyner showed in \cite{Wyner75} that the
  amount of common information needed for generation per sample is
  $(R^*)^{-1} = C(U;V),$ which has come to be known as the Wyner
  common information of $U$ and $V.$ This quantity $C(U;V)$ can be
  described as $\sup I(\Theta;U,V)$ over all $\Theta$ satisfying
  $U-\Theta-V$ with cardinality bound on the variable $\Theta$ given
  by $|\mathsf{\Theta}|\leq |\mathcal{U}|\cdot |\mathcal{V}|.$
  Further, Wyner showed that $C(U;V)\geq I(U;V)$ in general. To be
  precise, Wyner considered a problem setting that required $(U,V)$ to
  be simulated with vanishing normalized relative entropy, i.e. if
  $Q^\prime(u^{nR},v^{nR})$ is the law of the simulated samples, and
  $Q(u,v)$ was the target distribution, then simulation is considered
  possible in Wyner's formulation if
  \begin{align}
    \frac{1}{nR}D\left(Q^\prime(u^{nR},v^{nR})||\Pi_{i=1}^{nR}Q(u_i,v_i)\right)\to 0.
  \end{align}
  It has been recognized that the simulation capacity remains the same
  under the vanishing total variation constraint
  \cite[Lemma~5]{HanVerdu93}, \cite[Lemma~IV.1]{Cuff13}. A recent work
  \cite{KumarLiElGamal14} considers a variant of Wyner's problem with
  \emph{exact} generation of random variables as opposed to generation
  with a vanishing total variation distance.
\end{itemize}

The problem of characterizing $R^*$ is open for general distributions
$P(x,y)$ and $Q(u,v),$ and so is the problem of characterizing when $R^* >
0.$

In another stream of related work, the problem of simulation has been
considered under rate-limited interaction between the
agents. This began with the work of Cuff \cite{Cuff08} who studied
communication requirements for simulating a channel with rate-limited
communication and rate-limited common randomness. \cite{Cuff10}
studied communication requirements for establishing dependence among
nodes in a network setting. The former setup (of Cuff \cite{Cuff08})
was generalized by Gohari and Anantharam in \cite{GohariAnantharam11}
(see Fig.~\ref{fig:GohariAnantharam}). Two agents wish to simulate
i.i.d. samples of a specified joint distribution $P(x,y,u,v).$ Nature
supplies i.i.d. copies of $(X,Y)$ with the right marginal distribution
as shown and the agents can use a certain rate of common randomness,
certain rate-limited communication, and infinite streams of individual
private randomness to accomplish the desired task. We want to
understand the fundamental trade-offs between these rates to make this
task possible. This problem was completely solved by Yassaee, Gohari,
and Aref in \cite{Yassaee12}. However, this work does not address the
problem of computing the simulation capacity $R^*$ for the setup in
Fig.~\ref{fig:GacsKornerWyner}, since the problem formulation there is
different in two respects: In Fig.~\ref{fig:GohariAnantharam}, the
task is to output $n$ samples while in Fig.~\ref{fig:GacsKornerWyner},
the task is to output $nR$ samples. Furthermore, even if $R$ were say
chosen to be 1, in Fig.~\ref{fig:GohariAnantharam}, the joint
distribution of the quadruple $(X^n,Y^n,U^n,V^n)$ is required to be
close to i.i.d. copies of a specified joint distribution. However, in
Fig.~\ref{fig:GacsKornerWyner}, the requirement is only on the
marginal distribution of the output samples $(U^n,V^n)$ and the
quadruple $(X^n,Y^n,U^n,V^n)$ need not even be close to an
i.i.d. distribution.

\begin{figure}[h]
  \begin{center}
    \includegraphics[width = 3in, height=!]{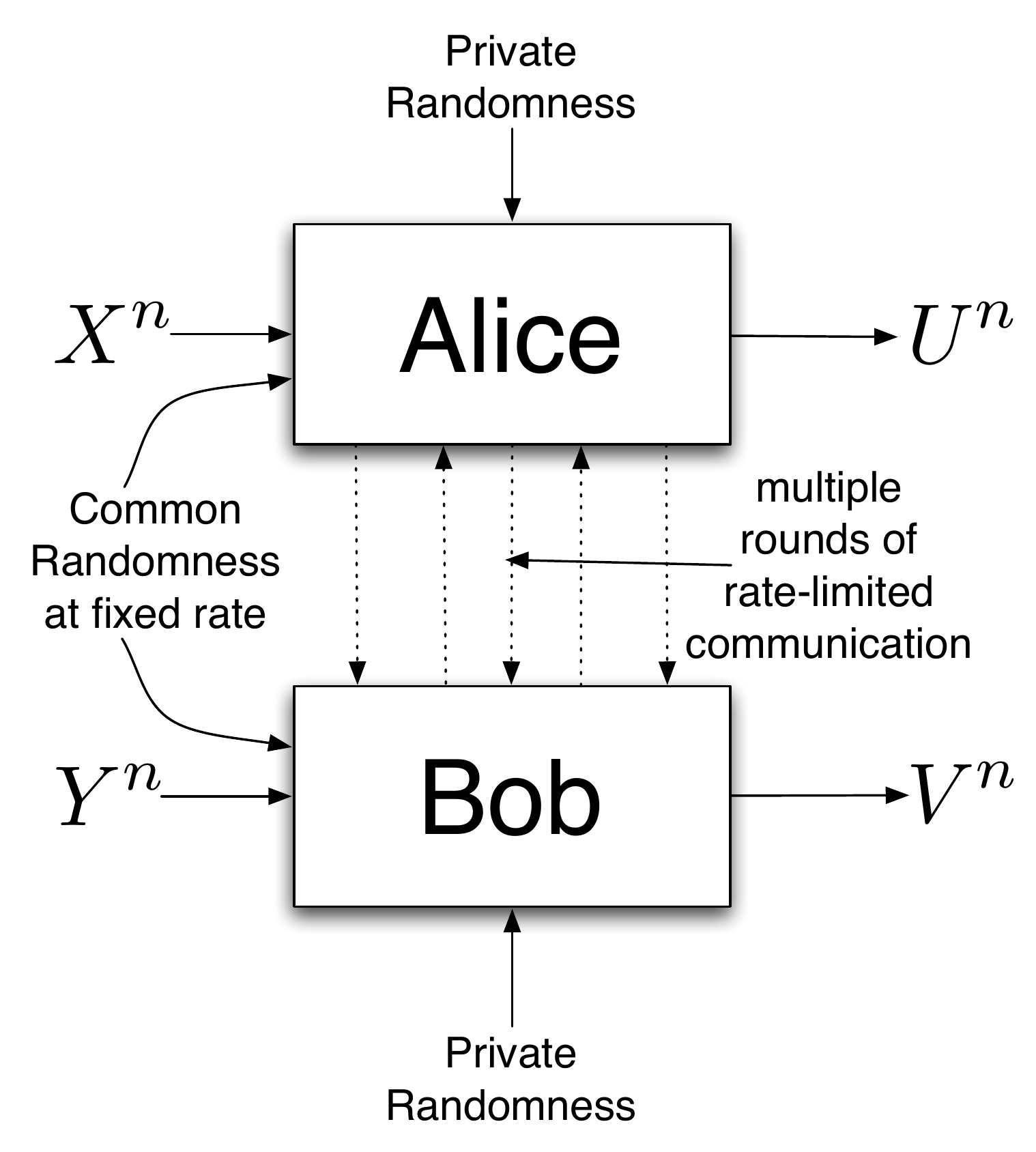}
    \caption{Generalization of Cuff's formulation \cite{Cuff08} by
      Gohari and Anantharam \cite{GohariAnantharam11}}
    \label{fig:GohariAnantharam}
  \end{center}
\end{figure}

In this paper, we consider the former non-interactive simulation setup
\`{a} la \GacsKorner and Wyner (Fig.~\ref{fig:GacsKornerWyner}).
Since the problem of characterizing whether $R^* > 0$ for general
distributions $P(x,y)$ and $Q(u,v),$ is also non-trivial, we propose a
relaxed problem where two agents observe an arbitrary finite number of
samples drawn i.i.d. from $P(x,y)$ as shown in
Fig.~\ref{fig:non_interactive_simulation} and are required to output
one random variable each with the requirement that the output
distribution be close in total variation to a specified $Q(u,v).$
Clearly, if it is impossible to generate even a single sample, we must
have $R^*=0.$ We therefore focus on impossibility results for this
problem which will be relevant to the formulation in
Fig.~\ref{fig:GacsKornerWyner}. It is not clear if the converse is
true, i.e. it is unclear whether the feasibility of generating one
sample asymptotically implies that we may generate samples at a rate
$R>0.$

Note that the notion of \textit{simulation} we consider is
distinct from the notion of \textit{exact generation} wherein a
certain distribution is required to be generated exactly. If we have a
strategic setting, such as a distributed game, in which a player,
represented by a number of distributed agents, is playing against an
adversary, the agents would often need to generate a joint
distribution exactly \cite{AnantharamBorkar07}, to avoid providing
unforeseen strategic advantages to the adversary.

\begin{figure}[h]
  \begin{center}
    \includegraphics[width = 3in, height=!]{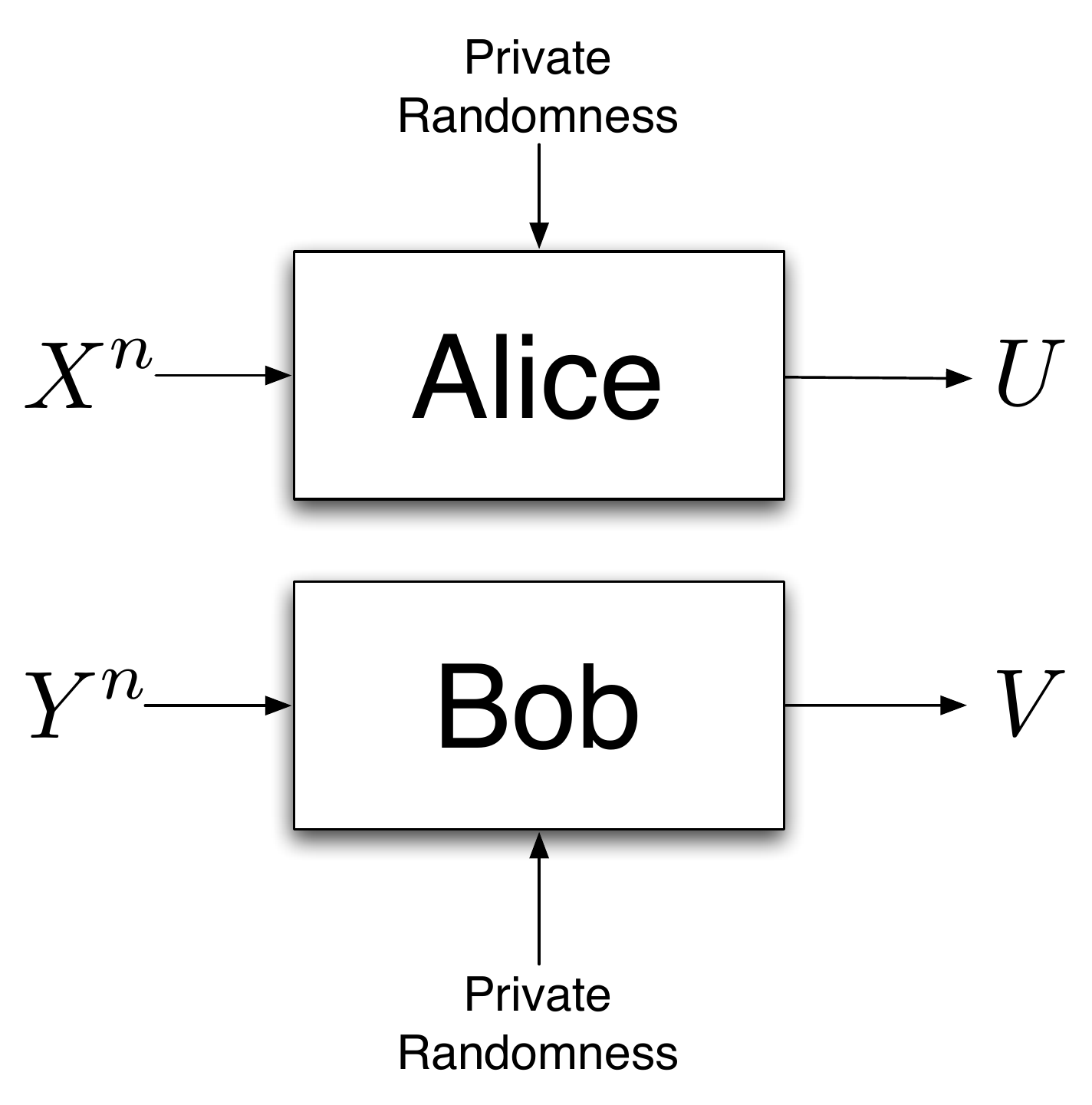}
    \caption{The non-interactive simulation problem considered in this
    paper}
    \label{fig:non_interactive_simulation}
  \end{center}
\end{figure}

When $(U,V)\sim Q(u,v)$ is described by $U=V\sim\Ber(1/2)$ while
$P(x,y)$ is a general distribution, the problem has recently come to
be called \emph{non-interactive correlation distillation}
\cite{Mossel04, Mossel10}. We therefore, call our formulation the
problem of \emph{non-interactive simulation of joint
  distributions}. In a remarkable strengthening of the
G\'{a}cs-K\"{o}rner result \cite{GacsKorner72}, Witsenhausen showed in
\cite{Witsenhausen75} that unless the G\'{a}cs-K\"{o}rner common
information $K(X;Y)$ is positive (i.e. the joint distribution of
$(X,Y)$ is decomposable), non-interactive correlation distillation is
impossible to achieve. The chief tool used in Witsenhausen's proof is
the \emph{maximal correlation} of two random variables, a quantity
which will be of prime importance in the present paper as well.

The second tool that we will be using is \emph{hypercontractivity},
which has found numerous applications in mathematics, physics, and
theoretical computer science. The origins of hypercontractivity lie in
the early works of Bonami \cite{Bonami68,Bonami70}, of Nelson
\cite{Nelson} in quantum field theory, of Gross \cite{Gross} who first
developed the connection to logarithmic Sobolev inequalities, and of
Beckner \cite{Beckner}. The meaning of hypercontractivity was
broadened by Borell \cite{Borell} to what is sometimes called reverse
hypercontractivity today \cite{Mossel11}. Hypercontractivity has found
powerful applications in a lot of fields, for example the study of
influence of variables on Boolean functions \cite{KKL88},
\cite{Friedgut98}, \cite{MOO05} and in voting system
theory\cite{MosselRacz11}. Ahlswede and \Gacs \cite{AhlswedeGacs76}
identified the use of hypercontractivity in studying the spreading of
sets in high dimensional product spaces. In recent works,
\cite{Raginsky13} showed an equivalence between hypercontractivity and
strong data processing inequalities for R\'{e}nyi divergences,
\cite{Polyanskiy13a} used hypercontractivity to show non-vanishing
lower bounds on hypothesis testing, \cite{Polyanskiy13b} studied
hypercontractivity for a noise operator that computed spherical
averages in Hamming space, \cite{AGKN_14} showed a connection between
hypercontractivity and strong data processing inequalities for mutual
information, and \cite{AGKN_Allerton13} used hypercontractivity to
study the mutual information between Boolean functions. As we shall
see, hypercontractivity has properties that make it naturally
well-suited for studying the non-interactive simulation problem.


Let us formally set up the non-interactive simulation problem
described earlier.

\begin{definition}
  Let $\mathcal{X}, \mathcal{Y}, \mathcal{U}, \mathcal{V}$ denote
  finite sets. Given a \textit{source distribution} $P(x,y)$ over
  $\mathcal{X}\times \mathcal{Y}$ and a \textit{target distribution}
  $Q(u,v)$ over $\mathcal{U}\times\mathcal{V},$ we say that
  \emph{non-interactive simulation} of $Q(u,v)$ using $P(x,y)$ is
  possible, if for any $\epsilon>0,$ there exists a positive integer
  $n,$ a finite set $\mathcal{R},$ and functions
  $f:\mathcal{X}^n\times \mathcal{R}\mapsto \mathcal{U},$\ 
  $g:\mathcal{Y}^n\times\mathcal{R}\mapsto \mathcal{V}$ such that
$$d_{\mathrm{TV}}\left((f(X^n,M_X),g(Y^n,M_Y));(U,V)\right)\leq \epsilon$$ where
$\{(X_i,Y_i)\}_{i=1}^n$ is a sequence of i.i.d. samples drawn from
$P(x,y),$ $M_X,M_Y$ are uniformly distributed in $\mathcal{R}$ and are
mutually independent of each other and the samples from the source,
$(U,V)$ is drawn from $Q(u,v)$ and $d_{\mathrm{TV}}(\cdot\,;\cdot)$ is
the total variation distance (defined as half the $L_1$ distance
between the distributions).
\end{definition}

For a fixed $P(x,y),$ the set of distributions $Q(u,v)$ on a fixed set
$\mathcal{U}\times\mathcal{V}$ for which non-interactive simulation is
possible is precisely the closure of the set of marginal distributions
of $(U,V)$ satisfying $U-X^k-Y^k-V$ for some $k.$ However, this set of
distributions appears to be very hard to characterize explicitly. In
this paper, we focus on outer bounds on this set, or in other words
impossibility results for non-interactive simulation.

Note that since we are interested only in determining the possibility
of simulation and not in the simulation capacity, the problem does not
have any less generality if we disallow the agents from using any
private randomness, since agents can obtain as much private randomness
as desired by using extended observations that are non-overlapping in
time, i.e. the agents observe $n_1+n_2+n_3$ symbols, they use
$(X_{1},\ldots, X_{n_1}), (Y_{1},\ldots, Y_{n_1})$ respectively as
their correlated observations, Alice uses $X_{n_1+1},\ldots, X_{n_2}$
as her private randomness, and Bob uses $Y_{n_2+1},\ldots, Y_{n_3}$ as
his private randomness. We make the choice to assume the availability
of private randomness as part of the model.

We will consider two examples to motivate the focus of this
study.

\subsection{Example 1}
\label{subsec:example1}
Let $X$ be a uniform Bernoulli random variable,
$X\sim\Ber(\frac{1}{2}).$ Let $Y$ be a noisy copy of $X,$ i.e. $Y=X+N$
where $N\sim\Ber(\alpha)$ for $0<\alpha<\frac{1}{2},$ is independent
of $X.$ Here, the addition is modulo 2. We say that $(X,Y)$ has the
\emph{doubly symmetric binary source} distribution with parameter
$\alpha,$ denoted $\DSBS(\alpha)$ following the notation of Wyner
\cite{Wyner75}. We consider $(U,V)\sim\DSBS(\beta)$ for $0\leq
\beta<\frac{1}{2}.$ We may ask whether non-interactive simulation of
$Q(u,v) = \DSBS(\beta)$ using $P(x,y) = \DSBS(\alpha)$ is
possible. Witsenhausen answered this question in the negative when
$\beta<\alpha$ in \cite{Witsenhausen75}, thus significantly
strengthening the result of G\'{a}cs and K\"{o}rner
\cite{GacsKorner72}. Witsenhausen established this by proving the
tensorization of the maximal correlation of an arbitrary pair of
random variables (both tensorization and maximal correlation are
defined and discussed in Section~\ref{subsec:mc}).  This can be used
to conclude that if non-interactive simulation is possible, then the
maximal correlation of the target distribution can be no more than
that of the source distribution. The parameter $n$ has disappeared in
this comparison thanks to the tensorization property. The maximal
correlation of a pair of binary random variables distributed as
$\DSBS(\alpha)$ equals $|1-2\alpha|.$ Thus, for instance, if the
non-interactive simulation of $\DSBS(\beta)$ using $\DSBS(\alpha)$ is
possible, with $0\leq \alpha,\beta\leq \frac{1}{2},$ then we must have
$\alpha\leq \beta.$ Furthermore, it is easy to see that if $\alpha\leq
\beta,$ then non-interactive simulation is indeed possible: Alice
outputs the first bit of her observation while Bob outputs a suitable
noisy copy of his first bit. Thus, for $0\leq \alpha,\beta\leq
\frac{1}{2},$ non-interactive simulation of $\DSBS(\beta)$ using
$\DSBS(\alpha)$ is possible if and only if $\alpha\leq \beta.$

\subsection{Example 2}
\label{subsec:example2}
Let $P(x,y)$ be given by $(X,Y)\sim \DSBS(\alpha)$ with
$0<\alpha<\frac{1}{2}.$ Consider binary random variables $(U,V)$
distributed as $Q(u,v)$ given by: $Q(0,0)=0, Q(0,1) = Q(1,0) = Q(1,1)
= \frac{1}{3}.$ We ask if non-interactive simulation of $Q(u,v)$ using
$\DSBS(\alpha)$ is possible. The maximal correlation of a
$\DSBS(\alpha)$ source distribution is $|1-2\alpha|$ while that of
$Q(u,v)$ is $\frac{1}{2}.$ Since non-interactive simulation is
impossible unless the maximal correlation of the source exceeds that of
the target, we have non-interactive simulation impossible if
$|1-2\alpha|\leq \frac{1}{2},$ i.e. $\frac{1}{4}<\alpha<\frac{1}{2}.$
But what about the case when $0<\alpha\leq \frac{1}{4}?$ Can we come
up with a suitable scheme to simulate $Q(u,v)$? The answer turns out
to be no for each $0<\alpha\leq \frac{1}{4}$ and can be proved using the
following inequality which holds for $\{(X_i, Y_i)\}_{i=1}^n$ being
i.i.d. $\DSBS(\alpha),$ and for arbitrary sets $S,T\subseteq\{0,1\}^n:$ 
\begin{equation}\label{eqn:basic-reverse-hc}
\prob{X^n\in S, Y^n\in T}\geq \prob{X^n\in S}^{\frac{1}{2\alpha}}\prob{Y^n\in T}^{\frac{1}{2\alpha}}.
\end{equation}
The above inequality follows from a so-called reverse hypercontractive
inequality \cite[Thm. 3.4]{Mossel04}. We will revisit this inequality
in Section~\ref{subsec:hc-ribbon}. If non-interactive simulation of
$Q(u,v)$ using $\DSBS(\alpha)$ were possible, we should be able to
find sets $S,T$ such that
$\prob{X^n\in S}\approx \frac{1}{3}, \prob{Y^n\in T}\approx
\frac{1}{3}$
and $\prob{X^n\in S, Y^n\in T}\approx 0.$ Inequality
\eqref{eqn:basic-reverse-hc} rules out this possibility (assuming
private randomness is not available, which we had argued is without
loss of generality). Thus, hypercontractivity or reverse
hypercontractivity can provide impossibility results when the maximal
correlation approach cannot.  Is it true that one is always stronger
than the other? One of the main results in our paper is that
hypercontractivity allows for stronger impossibility results than the
maximal correlation when $P(x,y) = \DSBS(\alpha).$ More generally, we
give necessary and sufficient conditions on $P(x,y)$ for this
subsumption. This arises from an inequality obtained by Ahlswede and
G\'{a}cs\cite{AhlswedeGacs76} in the hypercontractive case which we
extend to the reverse hypercontractive case.

The rest of the paper is organized as follows. Section~\ref{sec:mc-hc}
discusses preliminaries on maximal correlation and
hypercontractivity. We present our main results in
Section~\ref{sec:main-results}. As mentioned earlier, one of our main
results is a necessary and sufficient condition on the source
distribution $P(x,y)$ which allows one to definitively conclude that
hypercontractivity will provide stronger impossibility results than
maximal correlation. As our second main result, we give a
characterization of a limiting hypercontractivity parameter (that we
call $s^*$) as a strong data processing constant for KL
divergences. This characterization was first proven by Ahlswede-Gacs
\cite{AhlswedeGacs76}. However, our proof has the advantage of being
more intuitive - arising naturally from a Taylor series expansion -
while at the same time extending immediately to reverse
hypercontractivity.  This hypercontractivity parameter has recently
been shown to also be the tightest constant in strong data processing
inequalities for mutual information
\cite{AGKN_14}. Section~\ref{sec:triple} discusses the extension of
the non-interactive simulation problem for $k\geq 3$ agents. We
provide a couple of interesting three-user non-interactive simulation
examples where every two agents can simulate the corresponding
pairwise marginal of the desired joint distribution but the triple
cannot simulate the triple joint distribution.

\section{Main Tools: Maximal Correlation and Hypercontractivity}
\label{sec:mc-hc}
In this paper, all sets are finite and all probability distributions
are discrete and have finite support. We denote the marginals of
$P(x,y)$ and $Q(u,v)$ by $P_X(x), P_Y(y)$ and $Q_U(u), Q_V(v)$
respectively. We will use $\mathbb{R}_{\geq 0}$ and $\mathbb{R}_{>0}$
to denote non-negative reals and strictly positive reals respectively.
In the following subsections, we will review the definition and
properties of maximal correlation and hypercontractivity.

\subsection{Maximal Correlation}
\label{subsec:mc}

For jointly distributed random variables $(X,Y),$ define their
\emph{maximal correlation} $\rho_m(X;Y):=\sup \mathbb{E}f(X)g(Y)$ where
the supremum is taken over $f:\mathcal{X}\mapsto\mathbb{R},
g:\mathcal{Y}\mapsto\mathbb{R}$ such that
$\mathbb{E}f(X)=\mathbb{E}g(Y)=0$ and
$\mathbb{E}f(X)^2,\mathbb{E}g(Y)^2\leq 1.$

\begin{example}
 If $(X,Y)\sim\DSBS(\alpha),$ then the only functions $f,g$ satisfying the
 conditions $\mathbb{E}f(X)=\mathbb{E}g(Y)=0$ and
$\mathbb{E}f(X)^2,\mathbb{E}g(Y)^2\leq 1$ are $f(x) = a(1_{x=0} -
1_{x=1})$ and $g(y) = b(1_{y=0} -1_{y=1})$ with $|a|,|b|\leq 1.$ The
optimum is then achieved with $a=b=1$ if $\alpha<\frac 12$ and with
$a=b=-1$ if $\alpha\geq \frac 12.$ Thus, 

 \begin{equation}\rho_m(X;Y)=|1-2\alpha|.\label{eqn:DSBS-mc}\end{equation}
\end{example}

The following properties of the maximal correlation of two discrete
random variables with finite support can be shown easily
\cite{Renyi59}.
\begin{enumerate}
\item $0\leq \rho_m(X;Y)\leq 1.$
\item $\rho_m(X;Y)=0$ if and only if $X$ is independent of $Y.$
\item $\rho_m(X;Y)=1$ if and only if the \GacsKorner common
  information $K(X;Y)>0,$ i.e. if and only if $(X,Y)$ is
  \emph{decomposable}.
\end{enumerate}

The three key properties of maximal correlation that are useful for
the non-interactive simulation problem are as follows:

\vspace{0.1in}
\begin{itemize}
\item (data processing inequality) For any functions $\phi, \psi,$
  $\rho_m(X;Y)\geq \rho_m(\phi(X),\psi(Y)).$ 
\item (tensorization) If $(X_1,Y_1), (X_2,Y_2)$ are independent, then
  $\rho_m(X_1,X_2;Y_1,Y_2)=\max\{\rho_m(X_1;Y_1),\rho_m(X_2;Y_2)\}$
  \cite[Thm.~1]{Witsenhausen75}. 
\item (lower semi-continuity) (Recall that if $\mathcal{U}$ is a
  metric space, $u$ is a point in $\mathcal{U}$ and
  $f : \mathcal{U}\mapsto\mathbb{R}$ is a real-valued function, then
  we say $f$ is lower semi-continuous at $u$ if $u_n\to u$ implies
  $\liminf_n f(u_n) \geq f(u).$) If the space of probability
  distributions on $\mathcal{X}\times\mathcal{Y}$ is endowed with the
  total variation distance metric, then $\rho_m(X;Y)$ is a lower
  semi-continuous function of the joint distribution $P(x,y).$ [An
  example will be provided to show that $\rho_m$ is not a continuous
  function of the joint distribution.]
\end{itemize}
\vspace{0.1in}

To keep the paper self-contained, proofs of these properties are
sketched in Appendix~\ref{subsec:rho_m_properties}. Now, using the
above three properties, maximal correlation can be used to prove
impossibility results for the non-interactive simulation problem.

\vspace{0.1in}
\begin{adjustwidth}{0.6in}{0.6in}
\begin{observation}\label{observation:mc-machine}
  Non-interactive simulation of $(U,V)\sim Q(u,v)$ using $(X,Y)\sim
  P(x,y)$ is possible only if $\rho_m(X;Y)\geq \rho_m(U;V).$
\end{observation}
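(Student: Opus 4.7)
The plan is to combine the three properties of maximal correlation recalled just above — data processing, tensorization, and lower semi-continuity — to push a bound from the simulated output back to the target distribution.

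Fix $\epsilon > 0$ and let $n,\mathcal{R},f,g$ be a simulator achieving total variation at most $\epsilon$, producing the pair $(\tilde U,\tilde V):=(f(X^n,M_X),g(Y^n,M_Y))$. The $n+1$ blocks $(X_1,Y_1),\ldots,(X_n,Y_n),(M_X,M_Y)$ are mutually independent, and within the last block $M_X\independent M_Y$, so $\rho_m(M_X;M_Y)=0$. Tensorization applied across these $n+1$ independent blocks gives
\begin{equation*}
\rho_m(X^n,M_X;Y^n,M_Y)=\max\{\rho_m(X;Y),0\}=\rho_m(X;Y).
\end{equation*}
Since $\tilde U$ is a deterministic function of $(X^n,M_X)$ and $\tilde V$ a deterministic function of $(Y^n,M_Y)$, the data processing inequality yields
\begin{equation*}
\rho_m(\tilde U;\tilde V)\leq \rho_m(X^n,M_X;Y^n,M_Y)=\rho_m(X;Y).
\end{equation*}

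To transfer this bound from $(\tilde U,\tilde V)$ to the target $(U,V)$, invoke the definition of simulation along a sequence $\epsilon_k\downarrow 0$, obtaining pairs $(\tilde U_k,\tilde V_k)$ on the fixed finite alphabet $\mathcal{U}\times\mathcal{V}$ whose joint laws converge to $Q(u,v)$ in total variation, with $\rho_m(\tilde U_k;\tilde V_k)\leq \rho_m(X;Y)$ for every $k$. Lower semi-continuity of $\rho_m$ with respect to the total variation metric then gives
\begin{equation*}
\rho_m(U;V)\leq \liminf_{k\to\infty}\rho_m(\tilde U_k;\tilde V_k)\leq \rho_m(X;Y),
\end{equation*}
which is the desired inequality.

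No step presents a serious obstacle: the three cited properties do essentially all of the work. The only point that needs to be respected carefully is the passage to the limit, since $\rho_m$ is \emph{not} continuous as a function of the joint distribution in general, so one genuinely needs the lower semi-continuity property rather than continuity in order to close the argument as $\epsilon\to 0$. The independence of the auxiliary randomness $(M_X,M_Y)$ is likewise essential: without it, the maximal correlation of the augmented source could strictly exceed $\rho_m(X;Y)$ and the tensorization step would break down.
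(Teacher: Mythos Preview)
Your proof is correct and follows essentially the same approach as the paper: both apply data processing and tensorization to bound $\rho_m(\tilde U;\tilde V)\leq \rho_m(X;Y)$ for each approximate simulator, then invoke lower semi-continuity along a sequence with vanishing total variation to pass to $\rho_m(U;V)$. The only cosmetic difference is the order in which you state the tensorization and data processing steps.
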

\end{adjustwidth}
\vspace{0.1in}

\begin{proof}Suppose non-interactive simulation of $(U,V)\sim Q(u,v)$
  using $(X,Y)\sim P(x,y)$ is possible. This means, there exists a
  sequence of integers $(k_n: n\geq 1),$ a sequence of finite
  alphabets $\mathcal{R}_n,$ and a sequence of functions
  $f_n:\mathcal{X}^{k_n}\times \mathcal{R}_n\mapsto\mathcal{U},$
  $g_n:\mathcal{Y}^{k_n}\times \mathcal{R}_n\mapsto\mathcal{V},$ such
  that if $\{X_i,Y_i\}_{i=1}^{k_n}$ are drawn i.i.d. $P(x,y)$ and
  $M_X,M_Y$ are uniformly distributed in $\mathcal{R}_n,$ with
  $\{X_i,Y_i\}_{i=1}^{k_n}, M_X, M_Y$ mutually independent, and
  $U_n=f_n(X^{k_n},M_X), V_n=g_n(Y^{k_n},M_Y),$ then
  $d_{\mathrm{TV}}((U_n,V_n);(U,V))\to 0$ as $n\to\infty.$ We
  therefore, have
\begin{align}
\rho_m(U_n;V_n) & \leq \rho_m(X^{k_n}, M_X;Y^{k_n}, M_Y) \mbox{ (Data
  Processing Inequality)} \\
& = \max\{\rho_m(X_1;Y_1), \rho_m(X_2;Y_2), \ldots,
\rho_m(X_{k_n},Y_{k_n}), \rho_m(M_X;M_Y)\} \mbox{ (Tensorization)} \\
& = \max\{\rho_m(X_1;Y_1), 0\} \\
& = \rho_m(X;Y) 
\end{align}
By lower semi-continuity of $\rho_m$,
$d_{\mathrm{TV}}((U_n,V_n);(U,V))\to 0$ implies
$$\rho_m(U;V)\leq \liminf_{n\to\infty}\rho_m(U_n;V_n)\leq \rho_m(X;Y).$$
\end{proof}

\subsection{Hypercontractivity}
\label{subsec:hc}

\begin{definition}For any real-valued random variable $W$ with finite
  support, and any real number $p,$ define

  \begin{align}||W||_p := \begin{cases} \left(\mathbb{E}|W|^p\right)^{1/p}, &
    p\neq 0; \\
    \exp\left(\mathbb{E}\log |W|\right) & p=0,
  \end{cases}
  \end{align}
  with the understanding that for $p\leq 0,$ $||W||_p=0$ if
  $\prob{|W|=0}>0.$

\end{definition}

$||W||_p$ is continuous and non-decreasing in $p.$ If $W$ is not
almost surely a constant, then $||W||_p$ is strictly increasing for
$p\geq 0.$ If in addition, $\prob{|W|=0}=0,$ then $||W||_p$ is
strictly increasing for all $p.$

\begin{definition}
  For any real $p\neq 0,1,$ define its \emph{H\"{o}lder conjugate} $p^\prime$ by
  $\frac{1}{p}+\frac{1}{p^\prime} = 1.$ For $p=0,$ define $p^\prime=0.$
\end{definition}

Suppose $X,Y$ are real-valued random variables with finite support. We
write $X\geq 0$ if $\prob{X\geq 0}=1.$ The following are well-known
\cite{HardyLittlewoodPolya}:
\begin{itemize}
\item (Minkowski's inequality) For $p\geq 1,$ $||X+Y||_p\leq ||X||_p+||Y||_p.$
\item (Reverse Minkowski's inequality) For $p\leq 1$ and $X,Y\geq 0,$ $||X+Y||_p\geq ||X||_p+||Y||_p.$
\item (H\"{o}lder's inequality) For $p>1,$ $\mathbb{E}[XY]\leq ||X||_{p^\prime}||Y||_p.$
\item (Reverse H\"{o}lder's inequality) For $p<1$ and $X,Y\geq 0,$ $\mathbb{E}[XY]\geq ||X||_{p^\prime}||Y||_p.$
\end{itemize}

\begin{definition}
  For a pair of random variables $(X,Y)\sim P(x,y)$ on
  $\mathcal{X}\times\mathcal{Y},$ we say $(X,Y)$ is
  $(p,q)$-hypercontractive if
  \begin{itemize}
  \item $1\leq q\leq p,$ and 
    \begin{equation}\label{eqn:hc}||\mathbb{E}[g(Y)|X]||_p\leq
      ||g(Y)||_q\ \ \forall g:\mathcal{Y}\mapsto\mathbb{R}
      ;\end{equation}
    (If $h(Y)=|g(Y)|,$ then  $-\mathbb{E}[h(Y)|X]\leq
    \mathbb{E}[g(Y)|X]\leq \mathbb{E}[h(Y)|X]$ pointwise, thus we 
    may equivalently restrict $g$ to map to $\mathbb{R}_{\geq
      0}.$ If $W_n$ supported on at most $k$ values (for some fixed $k$) converges to $W$ in distribution, then
    $||W_n||_p\to||W||_p$ for any $p,$ so we may further equivalently
    restrict $g$ to map to $\mathbb{R}_{>
      0}$.)
  \item $1\geq q\geq p,$ and 
    \begin{equation}\label{eqn:rhc}||\mathbb{E}[g(Y)|X]||_p\geq
      ||g(Y)||_q\ \ \forall
      g:\mathcal{Y}\mapsto\mathbb{R}_{\geq 0}.\end{equation}
    (If $W_n$ supported on at most $k$ values (for some fixed $k$) converges to $W$ in distribution, then
    $||W_n||_p\to||W||_p$ for any $p,$ so we may equivalently
    restrict $g$ to map to $\mathbb{R}_{>
      0}$.)

  \end{itemize}

  Note that in the conventional definitions in \eqref{eqn:hc} and
  \eqref{eqn:rhc}, we have functions taking values in $\mathbb{R}$ and
  $\mathbb{R}_{\geq 0}$ respectively. As explained above, for
  \eqref{eqn:hc}, we may restrict to functions taking values in
  $\mathbb{R}_{\geq 0}.$ However, in \eqref{eqn:rhc}, the functions
  must take non-negative values. This is conventional and necessary in
  various``reverse'' inequalities such as the reverse Minkowski and
  reverse H\"{o}lder inequalities.

  Define the \emph{hypercontractivity ribbon} $\mathcal{R}(X;Y)$ as
  the set of pairs $(p,q)$ for which $(X,Y)$ is
  $(p,q)$-hypercontractive.
\end{definition}

It is easy to check that the inequalities \eqref{eqn:hc},
\eqref{eqn:rhc} always hold for $p=q.$ The conditional expectation
operator is thus always contractive when $p\geq 1,$ and reverse
contractive for positive-valued functions when $p\leq 1.$ For random
variables $(X,Y)$ with a specific distribution $P(x,y),$ the operator
may be hypercontractive (i.e. more than contractive) in this precise
sense. $\mathcal{R}(X;Y)$ is a region in $\mathbb{R}^2$ pinching to a
point at $(1,1)$ resembling a ribbon, explaining our choice of the
name (see Fig.~\ref{fig:ribbon}). Inequality \eqref{eqn:rhc} is also
referred to as \emph{reverse hypercontractivity} in the literature
\cite{Mossel11}.




\begin{figure}[h]
  \begin{center}
    \includegraphics[viewport = 17 13 1594 1421, width = 2in, height=!]{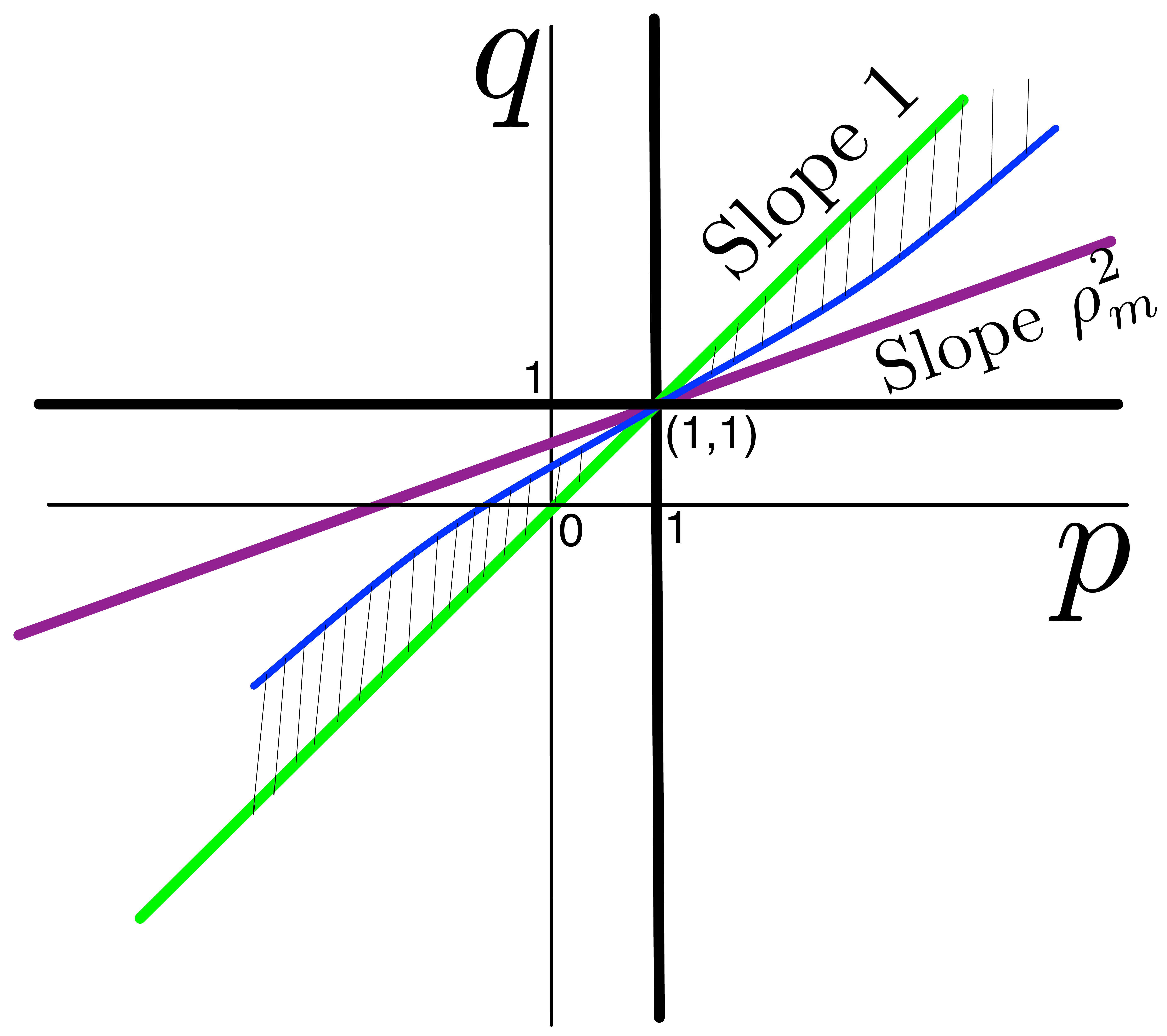}
    \caption{The hypercontractivity ribbon $\mathcal{R}(X;Y)$ is the
      shaded region. Also shown a straight line of slope
      $\rho_m^2:=\rho_m^2(X;Y)$ through $(1,1)$ (from Thm.~\ref{thm:main-inequality}).}
    \label{fig:ribbon}
  \end{center}
\end{figure}

\subsubsection{Interpretation of hypercontractivity as H\"{o}lder-contractivity}
  It is well-known \cite{Mossel11} that an equivalent definition of $\mathcal{R}(X;Y)$ can be given by
  observing how much the corresponding H\"{o}lder's and reverse
  H\"{o}lder's inequalities may be tightened:
  \begin{itemize}
  \item $(1,1)\in\mathcal{R}(X;Y);$
  \item For $1\leq q\leq p, 1<p$ we have $(p,q)\in \mathbb{R}(X;Y)$ iff
    \begin{align}
      \mathbb{E}f(X)g(Y)&\leq ||f(X)||_{p^\prime}||g(Y)||_q\ \ \  \forall f:\mathcal{X}\mapsto\mathbb{R}, 
      g:\mathcal{Y}\to\mathbb{R};\label{eqn:holder-tighten}
    \end{align}
  \item For $1\geq q\geq p, 1>p$ we have $(p,q)\in \mathbb{R}(X;Y)$ iff
    \begin{align}
      \mathbb{E}f(X)g(Y)&\geq ||f(X)||_{p^\prime}||g(Y)||_q\ \ \forall 
      f:\mathcal{X}\mapsto\mathbb{R}_{> 0}, 
      g:\mathcal{Y}\mapsto\mathbb{R}_{> 0};\label{eqn:reverse-holder-tighten}
    \end{align}
  \end{itemize}

  We will refer to inequalities \eqref{eqn:holder-tighten},
  \eqref{eqn:reverse-holder-tighten} as H\"{o}lder-contractive
  inequalities since they tighten H\"{o}lder's inequality (using the
  knowledge that $X$ and $Y$ are not `too correlated' in a suitable
  sense).

  To see the equivalence for $1\geq q \geq p, 1>p$ observe that if
  \eqref{eqn:rhc} holds for any strictly positive-valued function $g,$
  then for any fixed strictly positive-valued function $f,$ we have
\begin{align}
\mathbb{E}f(X)g(Y) &= \mathbb{E}\left[f(X)\mathbb{E}[g(Y)|X]\right] \\
& \geq ||f(X)||_{p^\prime} ||\mathbb{E}[g(Y)|X]||_{p}
\mbox{\ \ \ (Reverse H\"{o}lder's inequality and $\mathbb{E}[g(Y)|X]>0$)}\\
&\geq ||f(X)||_{p^\prime} ||g(Y)||_{q}.
\end{align}
Conversely, suppose \eqref{eqn:reverse-holder-tighten} holds for any
strictly positive-valued functions $f, g.$ First assume $p\neq 0.$ By fixing
$g$ and choosing $f(X) = \mathbb{E}[g(Y)|X]^{p-1},$ we get
\begin{align}
\mathbb{E}\left[\mathbb{E}[g(Y)|X]^p\right]& =
\mathbb{E}\left[\mathbb{E}[g(Y)|X]^{p-1}g(Y)\right] \\
& \geq ||\mathbb{E}[g(Y)|X]^{p-1}||_{p^\prime} ||g(Y)||_q \\
& =
\left(\mathbb{E}\left[\mathbb{E}[g(Y)|X]^p\right]\right)^{1-\frac{1}{p}}||g(Y)||_q.
\end{align}
Since $\mathbb{E}[g(Y)|X]>0,$ we obtain $||\mathbb{E}[g(Y)|X]||_p\geq ||g(Y)||_q.$

Now, consider the case $p=0.$ If \eqref{eqn:reverse-holder-tighten}
holds for any strictly positive-valued functions $f,g$ with
$p=p^\prime=0,$ then by monotonicity of $||\cdot||_r$ in $r,$ we also
have
    \begin{align}
      \mathbb{E}f(X)g(Y)&\geq ||f(X)||_{-\epsilon}||g(Y)||_q\ \ \forall 
      f:\mathcal{X}\mapsto\mathbb{R}_{> 0}, 
      g:\mathcal{Y}\mapsto\mathbb{R}_{> 0};\label{eqn:reverse-holder-epsilon}
    \end{align}
By our previous argument, this gives 
$||\mathbb{E}[g(Y)|X]||_{\frac{\epsilon}{1+\epsilon}}\geq ||g(Y)||_q.$
Since this holds for each $\epsilon>0,$ we get from continuity of
$||\cdot||_p$ in $p$ that $||\mathbb{E}[g(Y)|X]||_0\geq ||g(Y)||_q.$

The equivalence for the case $1\leq q\leq p, 1<p$ is similar. We only
need to note that for $(X,Y)$ to be $(p,q)$-hypercontractive with $1\leq q\leq p,$ it
suffices to have $||\mathbb{E}[g(Y)|X]||_p\leq ||g(Y)||_q$ hold only
for all strictly positive functions $g>0.$ The rest of the proof is identical.

\subsubsection{Duality between $\mathcal{R}(X;Y)$ and $\mathcal{R}(Y;X)$}

The equivalent description of $\mathcal{R}(X;Y)$ in
\eqref{eqn:holder-tighten}, \eqref{eqn:reverse-holder-tighten}
immediately gives the following duality between $\mathcal{R}(X;Y)$ and
$\mathcal{R}(Y;X)$:

\begin{align}
(p,q)\in\mathcal{R}(X;Y) \Leftrightarrow (q^\prime,
p^\prime)\in\mathcal{R}(Y;X), \ \ \ p,q\neq 1.
\end{align}

$\mathcal{R}(X;Y)$ is completely specified by its non-trivial boundary
$q^*_p(X;Y)$ defined for $p\neq 1$ as 
\begin{align}
  q^*_p(X;Y) & := \begin{cases} \inf\{q\geq 1: ||\mathbb{E}[g(Y)|X]||_p\leq
    ||g(Y)||_q \ \ \forall g:\mathcal{Y}\mapsto\mathbb{R}\} & p>1; \\
    \sup\{q\leq 1: ||\mathbb{E}[g(Y)|X]||_p\geq
    ||g(Y)||_q \ \ \forall g:\mathcal{Y}\mapsto\mathbb{R}_{>0}\} &
    p<1.
  \end{cases}
\end{align}

We will find it useful to define the `slope at $p$' by $s_p(X;Y) := \frac{q^*_p(X;Y)-1}{p-1}$
for $p\neq 1.$ 

The following properties may be easily shown.
\begin{enumerate}
\item $0\leq s_p(X;Y)\leq 1.$
\item $s_p(X;Y)=0$ if and only if $X$ is independent of $Y.$ [This is a
  consequence of Thm.~\ref{thm:main-inequality} and the
  corresponding property for $\rho_m(X;Y).$]
\end{enumerate}

One can show that for any $p\neq 1,$ $s_p(X;Y)$ satisfies the same three
key properties that maximal correlation satisfies (proofs of these
properties are sketched in Appendix~\ref{subsec:s_p_properties}).
\vspace{0.1in}
\begin{itemize}
\item (data processing inequality) For any functions $\phi, \psi,$
  $s_p(X;Y)\geq s_p(\phi(X),\psi(Y)).$
\item (tensorization) If $(X_1,Y_1), (X_2,Y_2)$ are independent, then
  $s_p(X_1,X_2;Y_1,Y_2)=\max\{s_p(X_1;Y_1),s_p(X_2;Y_2)\}$
  \cite{Witsenhausen75}.
\item (lower semi-continuity) If the space of probability
  distributions on $\mathcal{X}\times\mathcal{Y}$ is endowed with the
  total variation distance metric, then $s_p(X;Y)$ is a lower
  semi-continuous function of the joint distribution $P(x,y).$ [An
  example will be provided to show that $s_p$ is not a continuous
  function of the joint distribution.]
\end{itemize}
\vspace{0.1in}

Thus, we can use hypercontractivity to obtain impossibility results
for the non-interactive simulation problem.

\vspace{0.1in}
\begin{adjustwidth}{0.6in}{0.6in}
\begin{observation}\label{observation:hc-machine}
  Non-interactive simulation of $(U,V)\sim Q(u,v)$ using $(X,Y)\sim
  P(x,y)$ is possible only if $s_p(X;Y)\geq s_p(U;V)$ for each $p\neq
  1,$ in other words, only if
  $\mathcal{R}(X;Y)\subseteq\mathcal{R}(U;V).$
\end{observation}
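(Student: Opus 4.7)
The plan is to mimic the proof of Observation~\ref{observation:mc-machine} almost verbatim, replacing $\rho_m$ with $s_p$ throughout and invoking the three structural properties of $s_p$ (data processing, tensorization, and lower semi-continuity) that were just enumerated. The only additional wrinkle is converting the pointwise bound ``$s_p(X;Y)\geq s_p(U;V)$ for every $p\neq 1$'' into the containment statement $\mathcal{R}(X;Y)\subseteq \mathcal{R}(U;V)$, which is a routine translation once one remembers how $s_p$ parametrizes the boundary $q^*_p$ of the ribbon.

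First I would unpack the hypothesis: if non-interactive simulation is possible, then there exist sequences $(k_n)$, $(\mathcal{R}_n)$, and functions $f_n:\mathcal{X}^{k_n}\times\mathcal{R}_n\to\mathcal{U}$, $g_n:\mathcal{Y}^{k_n}\times\mathcal{R}_n\to\mathcal{V}$ such that, writing $U_n:=f_n(X^{k_n},M_X)$ and $V_n:=g_n(Y^{k_n},M_Y)$ with i.i.d.\ copies of $P(x,y)$ and independent uniform $M_X,M_Y$ on $\mathcal{R}_n$, one has $d_{\mathrm{TV}}((U_n,V_n);(U,V))\to 0$. Fix any $p\neq 1$. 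By the data processing property of $s_p$ applied to $\phi=f_n$ and $\psi=g_n$, $s_p(U_n;V_n)\leq s_p(X^{k_n},M_X;Y^{k_n},M_Y)$. By tensorization (over the $k_n$ coordinates and the independent randomness pair $(M_X,M_Y)$),
\begin{align}
s_p(X^{k_n},M_X;Y^{k_n},M_Y)
&=\max\{s_p(X_1;Y_1),\ldots,s_p(X_{k_n};Y_{k_n}),s_p(M_X;M_Y)\}\\
&=\max\{s_p(X;Y),0\}=s_p(X;Y),
\end{align}
where $s_p(M_X;M_Y)=0$ because $M_X\independent M_Y$ and the second property of $s_p$ says vanishing slope is equivalent to independence. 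Finally, lower semi-continuity of $s_p$ on the space of joint laws gives $s_p(U;V)\leq \liminf_n s_p(U_n;V_n)\leq s_p(X;Y)$.

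To conclude the ribbon containment, fix $(p,q)\in\mathcal{R}(X;Y)$ with $p\neq 1$. For $p>1$ this means $q\geq q^*_p(X;Y)=1+(p-1)\,s_p(X;Y)\geq 1+(p-1)\,s_p(U;V)=q^*_p(U;V)$, so $(p,q)\in\mathcal{R}(U;V)$; for $p<1$ the factor $(p-1)$ flips the inequality in the right direction, again yielding $(p,q)\in\mathcal{R}(U;V)$. The degenerate point $(1,1)$ always belongs to every ribbon. Hence $\mathcal{R}(X;Y)\subseteq\mathcal{R}(U;V)$.

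I do not anticipate a genuine obstacle: once the three properties of $s_p$ are in hand (as asserted just before the observation, with sketches deferred to Appendix~\ref{subsec:s_p_properties}), the argument is a template copy of the maximal-correlation proof. The only place one must be slightly careful is that tensorization for $s_p$ is stated for two independent pairs; iterating it $k_n$ times and then once more against the independent pair $(M_X,M_Y)$ is routine but should be spelled out in one line. A second minor care point is that lower semi-continuity is needed because a priori $s_p$ is not continuous in the joint law, so one cannot pass to the limit with equality — but $\liminf$ is exactly what the $\leq$ direction of the chain of inequalities requires, which is why lower semi-continuity suffices.
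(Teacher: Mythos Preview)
Your proposal is correct and follows exactly the approach the paper intends: the paper does not spell out a separate proof of Observation~\ref{observation:hc-machine}, but introduces it immediately after listing the data-processing, tensorization, and lower semi-continuity properties of $s_p$, making clear that the argument is the verbatim analogue of the proof of Observation~\ref{observation:mc-machine}. Your added paragraph translating the pointwise inequality $s_p(X;Y)\geq s_p(U;V)$ into the ribbon containment via the boundary parametrization $q^*_p=1+(p-1)s_p$ is the natural (and only) way to finish, and is implicit in the paper's phrasing ``in other words.''
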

\end{adjustwidth}
\vspace{0.1in}

\begin{example}
A classical result states that for $(X,Y)\sim\DSBS(\alpha),$

\begin{align}
\frac{q_p^*(X;Y)-1}{p-1} = s_p(X;Y) = (1-2\alpha)^2, \ \ p\neq 1.\label{eqn:DSBS-hc}
\end{align}

This was proved by Bonami \cite{Bonami70} and Beckner \cite[Lemma~1,
Appendix~Sec.~2]{Beckner} for $p>1$ and by Borell
\cite[Thm~3.2]{Borell} for $p<1.$
\end{example}

\subsection{Proving impossibility results for non-interactive simulation
  using the hypercontractivity ribbon $\mathcal{R}(X;Y)$}
\label{subsec:hc-ribbon}

In this subsection, we state explicitly a simple observation that is
well-known.  Suppose non-interactive simulation of $(U,V)\sim Q(u,v)$
using $(X,Y)\sim P(x,y)$ is possible. This means, there exists a
sequence of integers $(k_n: n\geq 1),$ a sequence of finite alphabets
$\mathcal{R}_n,$ and a sequence of functions
$f_n:\mathcal{X}^{k_n}\times \mathcal{R}_n\mapsto\mathcal{U},$
$g_n:\mathcal{Y}^{k_n}\times \mathcal{R}_n\mapsto\mathcal{V},$ such
that if $\{X_i,Y_i\}_{i=1}^{k_n}$ are drawn i.i.d. $P(x,y)$ and
$M_X,M_Y$ are uniformly distributed in $\mathcal{R}_n,$ with
$\{X_i,Y_i\}_{i=1}^{k_n}, M_X, M_Y$ mutually independent, and
$U_n=f_n(X^{k_n},M_X), V_n=g_n(Y^{k_n},M_Y),$ then
$d_{\mathrm{TV}}((U_n,V_n);(U,V))\to 0$ as $n\to\infty.$ Let
$(U_n,V_n)\sim Q_n(u,v).$

A traditional approach to prove impossibility results for
non-interactive simulation is as follows. Fix $n.$ Suppose $(X,Y)$ is
$(p,q)$-hypercontractive with $1\leq q\leq p.$ Then, by tensorization
$((X^{k_n},M_X),(Y^{k_n},M_Y))$ is $(p,q)$-hypercontractive.

Consider the functions $\phi_n,\psi_n$ defined as:
\begin{align}\phi_n(x^{k_n},m_x) = \sum_{u\in \mathcal{U}}\lambda_u
  \indicator{f_n(x^{k_n},m_x)=u},
\label{eqn:f(x^n)}
\end{align}
\begin{align}
  \psi_n(y^{k_n},m_y) = \sum_{v\in \mathcal{V}}\mu_v
  \indicator{g_n(y^{k_n},m_y)=v}.
\label{eqn:g(y^n)}
\end{align}
By using
\eqref{eqn:holder-tighten}, we get
\begin{align}
\mathbb{E}\phi_n(X^{k_n},M_X)\psi(Y^{k_n},M_Y)\leq ||\phi(X^{k_n},M_X)||_{p^\prime}||\psi(Y^{k_n},M_Y)||_q,
\end{align}
which is 

\begin{align}
  \sum_{u\in \mathcal{U}} \sum_{v\in \mathcal{V}}\lambda_u\mu_v Q_n(u,v)
  & \leq \left(\sum_{u\in \mathcal{U}} \lambda_u^{p^\prime}
    Q_n(u)\right)^{1/p^\prime}\cdot \left(\sum_{v\in \mathcal{V}}
    \mu_v^{q} Q_n(v)\right)^{1/q}.
\end{align}
By letting $n\to\infty,$ we get 

\begin{align}
  \sum_{u\in \mathcal{U}} \sum_{v\in \mathcal{V}}\lambda_u\mu_v Q(u,v)
  & \leq \left(\sum_{u\in \mathcal{U}} \lambda_u^{p^\prime}
    Q(u)\right)^{1/p^\prime}\cdot \left(\sum_{v\in \mathcal{V}}
    \mu_v^{q} Q(v)\right)^{1/q}.\label{eqn:two-function-hc-bound}
\end{align}

For any fixed $\lambda_u,\mu_v,$ we find that non-interactive
simulation of $(U,V)\sim Q(u,v)$ from $(X,Y)\sim P(x,y)$ is possible
only if $Q$ satisfies the inequality \eqref{eqn:two-function-hc-bound}.

Similarly, if $(X,Y)$ is $(p,q)$-hypercontractive with $1\geq q\geq p$ then,
for any fixed $\lambda_u,\mu_v>0,$ non-interactive simulation of
$(U,V)\sim Q(u,v)$ from $(X,Y)\sim P(x,y)$ is possible only if $Q$
satisfies the following inequality:
\begin{align}
\sum_{u\in \mathcal{U}} \sum_{v\in \mathcal{V}}\lambda_u\mu_v
  Q(u,v)
  & \geq \left(\sum_{u\in \mathcal{U}} \lambda_u^{p^\prime}
    Q(u)\right)^{1/p^\prime}\cdot \left(\sum_{v\in \mathcal{V}}
    \mu_v^{q} Q(v)\right)^{1/q}.\label{eqn:two-function-rhc-bound}
\end{align}

Indeed, \eqref{eqn:basic-reverse-hc} is a version of
\eqref{eqn:two-function-rhc-bound}. Let $(X,Y)\sim\DSBS(\alpha).$
Then, $(X,Y)$ is
$(-\frac{2\alpha}{1-2\alpha},2\alpha)$-hypercontractive from
\eqref{eqn:DSBS-hc}. Choosing $\lambda_0=\mu_0=1,
\lambda_1=\mu_1=\epsilon$ with $\epsilon\to 0,$ we obtain
\eqref{eqn:basic-reverse-hc} where $\mathcal{U} = \mathcal{V} =
\{0,1\}.$

The inclusion $\mathcal{R}(X;Y)\subseteq\mathcal{R}(U;V)$ implies the
collection of inequalities \eqref{eqn:two-function-hc-bound} for any
choice of real
$\{\lambda_u\}_{u\in\mathcal{U}},\{\mu_v\}_{v\in\mathcal{V}}$ and the
collection of inequalities \eqref{eqn:two-function-rhc-bound} for any
choice of positive valued
$\{\lambda_u\}_{u\in\mathcal{U}},\{\mu_v\}_{v\in\mathcal{V}}.$ One can
also easily show that the reverse implication from the collection of
inequalities \eqref{eqn:two-function-hc-bound},
\eqref{eqn:two-function-rhc-bound} to
$\mathcal{R}(X;Y)\subseteq\mathcal{R}(U;V)$ holds (using the
equivalent interpretation of hypercontractivity as
H\"{o}lder-contractivity).

Thus, $\mathcal{R}(X;Y)\subseteq\mathcal{R}(U;V)$ is powerful enough
to subsume the application of all possible instantiations of
$\lambda_u,\mu_v$ in the corresponding H\"{o}lder-contractive
inequalities.

The reader should note the importance of the above observation in the
context of thinking abstractly about the hypercontractivity ribbon and
its usefulness when invoking an automated computer search for proving
an impossibility of non-interactive simulation result. If non-interactive
simulation of $(U,V)$ using $(X,Y)$ is possible, then any
H\"{o}lder-contractive inequality satisfied by $(X,Y)$ will also be
satisfied by $(U,V).$ Therefore, if any such inequality satisfied by
all functions of $X$ and $Y$ is violated by some pair of functions of
$U$ and $V,$ then we can conlude non-simulability, i.e. that
simulation of $(U,V)$ using $(X,Y)$ is impossible. However, violation
of any such H\"{o}lder-contractive inequality implies failure of the
inclusion $\mathcal{R}(X;Y)\subseteq \mathcal{R}(U;V),$ so one can get
the same conclusion from the result that failure of the inclusion
$\mathcal{R}(X;Y)\subseteq \mathcal{R}(U;V)$ implies
non-simulability. Further, it is easier to show failure of inclusion
of the hypercontractivity ribbons than it is to show violation of any
specific such H\"{o}lder-contractive inequality, simply because
violation of any H\"{o}lder-contractive inequality implies failure of
inclusion of the hypercontractivity ribbons but failure of inclusion
of the hypercontractivity ribbons just implies that \emph{some}
H\"{o}lder-contractive inequality is violated. Thus, if one wishes to
show non-simulability using a computer search, it suffices to compute
the non-trivial boundaries of the two hypercontractivity ribbons
$q_p^*(X;Y)$ and $q_p^*(U;V)$ (and the corresponding $s_p(X;Y)$ and
$s_p(U;V)$) and find that $s_p(X;Y)<s_p(U;V)$ for some $p\neq 1$
without ever having to prove for some specific H\"{o}lder-contractive
inequality that it is the one being violated. 

To the best of our knowledge, there is no algorithm better than a
brute force search following suitable discretization to compute the
hypercontractivity ribbons. However, the observation above simplifies
the approach of proving an impossibility result using instantiations
of $\lambda_u$ and $\mu_v.$

\section{Main Results}
\label{sec:main-results}

In this section, we state and prove our main results. 

\subsection{Connection between maximal correlation and the
  hypercontractivity ribbon}
Our first result is a
geometric connection between maximal correlation and the
hypercontractivity ribbon. 

\begin{theorem}\label{thm:main-inequality}
If $(X,Y)$ is $(p,q)$-hypercontractive and $p\neq 1,$ then 
\begin{equation}\label{eqn:main-inequality}
\rho_m^2(X;Y)\leq \frac{q-1}{p-1}.
\end{equation}
\end{theorem}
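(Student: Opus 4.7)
The plan is to prove \eqref{eqn:main-inequality} by expanding the hypercontractivity inequality to second order around the constant function $g\equiv 1$. Both sides equal $1$ at $g\equiv 1$, so the first-order term in any zero-mean perturbation vanishes and the squared maximal correlation emerges as the leading quadratic content.

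Concretely, fix $\tilde g:\mathcal{Y}\to\mathbb{R}$ with $\mathbb{E}\tilde g(Y)=0$ and $\mathbb{E}\tilde g(Y)^2>0$, and set $g_\epsilon(Y):=1+\epsilon\tilde g(Y)$. Since $\mathcal{Y}$ is finite, $g_\epsilon>0$ pointwise for all sufficiently small $\epsilon>0$, so either \eqref{eqn:hc} (when $p>1$) or \eqref{eqn:rhc} (when $p<1$) applies to $g_\epsilon$. The binomial expansion $(1+\epsilon h)^r=1+r\epsilon h+\tfrac{r(r-1)}{2}\epsilon^2 h^2+O(\epsilon^3)$ followed by $(1+a)^{1/r}=1+a/r+O(a^2)$ gives, for any bounded zero-mean $h$ and any $r\neq 0$,
\[
\|1+\epsilon h\|_r \;=\; 1+\tfrac{r-1}{2}\,\epsilon^2\,\mathbb{E}h^2+O(\epsilon^3),
\]
and the same formula extends to $r=0$ via $\log(1+\epsilon h)=\epsilon h-\tfrac{\epsilon^2 h^2}{2}+O(\epsilon^3)$ and the outer exponential in $\|W\|_0=\exp(\mathbb{E}\log|W|)$.

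Applying this expansion with $(h,r)=(\tilde g(Y),q)$ on the right-hand side of the hypercontractivity inequality and with $(h,r)=(\mathbb{E}[\tilde g(Y)|X],p)$ on the left-hand side (the conditional expectation being zero-mean as well), then subtracting $1$, dividing by $\epsilon^2$, and letting $\epsilon\to 0$, produces $(p-1)A\le(q-1)B$ when $p>1$ and $(p-1)A\ge(q-1)B$ when $p<1$, where $A:=\mathbb{E}\bigl[(\mathbb{E}[\tilde g(Y)|X])^2\bigr]$ and $B:=\mathbb{E}\tilde g(Y)^2$. Dividing through by $p-1$ (positive when $p>1$, negative when $p<1$, so the inequality direction either stays or reverses accordingly) yields in both regimes the unified bound
\[
\frac{\mathbb{E}\bigl[(\mathbb{E}[\tilde g(Y)|X])^2\bigr]}{\mathbb{E}\tilde g(Y)^2} \;\le\; \frac{q-1}{p-1}.
\]
Taking the supremum over all such $\tilde g$ and invoking the standard $L^2$ characterization
\[
\rho_m^2(X;Y) \;=\; \sup_{\tilde g:\,\mathbb{E}\tilde g(Y)=0}\frac{\mathbb{E}\bigl[(\mathbb{E}[\tilde g(Y)|X])^2\bigr]}{\mathbb{E}\tilde g(Y)^2}
\]
(obtained by optimizing over $f$ in the definition of $\rho_m$ for each fixed $\tilde g$ via Cauchy--Schwarz and noting that the optimal $f(X)\propto\mathbb{E}[\tilde g(Y)|X]$ is automatically zero-mean) completes the proof.

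I do not expect any of the three steps to be individually delicate. The one place that calls for care is the bookkeeping of signs in the $p<1$ case: the reverse hypercontractivity inequality is already reversed and then division by $p-1<0$ reverses it a second time, so the two regimes coalesce into a single clean second-order bound only after tracking both flips. The only technical checkpoint is that the $O(\epsilon^3)$ remainders in the norm expansion are uniform in small $\epsilon$; this follows at once from $\mathcal{Y}$ being finite, which makes $\tilde g$ bounded and keeps $g_\epsilon$ and $\mathbb{E}[g_\epsilon(Y)|X]$ bounded away from $0$, so the $\epsilon\to 0$ limit is rigorous.
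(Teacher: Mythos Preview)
Your proof is correct and follows essentially the same perturbative strategy as the paper: expand the hypercontractivity inequality to second order around the constant function and read off the quadratic term. The only difference is cosmetic---the paper works with the two-function H\"{o}lder-contractive form \eqref{eqn:reverse-holder-tighten}, perturbing both $f$ and $g$ with an auxiliary ratio parameter $r$ that is then optimized (this optimization is effectively the Cauchy--Schwarz step), whereas you use the one-function form \eqref{eqn:hc}/\eqref{eqn:rhc} directly and absorb Cauchy--Schwarz into the $L^2$ characterization of $\rho_m^2$ via the conditional expectation; your route is slightly shorter.
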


\begin{remark}
  For the case $p>1,$ Thm.~\ref{thm:main-inequality} is obtained in
  \cite{AhlswedeGacs76}. In the current form of the statement of
  Thm.~\ref{thm:main-inequality}, the maximal correlation is
  afforded a geometric meaning, namely its square is the slope of a
  straight line bound constraining the hypercontractivity ribbon (see
  Fig~\ref{fig:ribbon}). For $(X,Y)\sim\DSBS(\alpha),$ we have from
  \eqref{eqn:DSBS-mc} and \eqref{eqn:DSBS-hc} that the
  hypercontractivity ribbon $\mathcal{R}(X;Y)$ is precisely the wedge
  obtained by the straight lines $p=q,$ and the straight line
  corresponding to the maximal correlation bound
  $\frac{q-1}{p-1}=\rho_m^2(X;Y).$
\end{remark}

\begin{proof}[Proof of Theorem~\ref{thm:main-inequality}]
  The proof uses a perturbative argument. Let $(X,Y)\sim P(x,y).$ The
  claim is obvious when either $X$ or $Y$ is a constant almost
  surely. So, assume this is not the case and fix functions
  $\phi:\mathcal{X}\mapsto\mathbb{R},
  \psi:\mathcal{Y}\mapsto\mathbb{R}$ such that
\begin{equation}\label{eqn:conditions}
\mathbb{E}\phi(X)=\mathbb{E}\psi(Y)=0,\ 
\mathbb{E}\phi(X)^2=\mathbb{E}\psi(Y)^2=1.
\end{equation}
Fix $r>0.$ Define $f:\mathcal{X}\mapsto\mathbb{R}_{>0},
g:\mathcal{Y}\mapsto\mathbb{R}_{>0}$ by $f(x)=1+\frac{\sigma}{r}
\phi(x), g(y)=1+\sigma r\psi(y).$ Note that for sufficiently small
$\sigma,$ the functions $f, g$ do take only positive values. Fix
$(p,q)\in\mathbb{R}_{X;Y}$ with $p<1.$ We also assume $p\neq 0$ using
the standard limit argument to deal with the case $p=0.$ Using
\eqref{eqn:reverse-holder-tighten} with the functions $f,g$ we just
defined, we have
\begin{align}
  \mathbb{E}[(1+\frac{\sigma}{r}\phi(X))(1+\sigma r\psi(Y))] \geq &
  \left(\mathbb{E}[(1+\frac{\sigma}{r}\phi(X))^{p^\prime}]\right)^{1/p^\prime}\cdot
  \left(\mathbb{E}[(1+\sigma
    r\psi(Y))^{q}]\right)^{1/q}. \label{eqn:perturbation}
\end{align}
For $Z$ satisfying $\mathbb{E}Z=0, \mathbb{E}Z^2=1,$
\begin{align*}
  \left(\mathbb{E}[(1+ a Z)^l]\right)^{1/l} & = \left(1+l\cdot
    a\mathbb{E}
    Z+\frac{l(l-1)}{2}\cdot a^2\mathbb{E}Z^2+O(a^3)\right)^{1/l} \\
  & =1+\frac{l-1}{2}a^2+O(a^3).
\end{align*}
Using this in \eqref{eqn:perturbation}, we get
\begin{align*}
1 + \sigma^2\mathbb{E}[\phi(X)\psi(Y)] \geq \left(1+ \frac{p^\prime-1}{2r^2}\sigma^2+O(\sigma^3)\right)\left(1+\frac{(q-1)r^2}{2}\sigma^2+O(\sigma^3)\right)~.
\end{align*}
Comparing the coefficient of $\sigma^2$ on both sides, we get
\begin{align*}
  \mathbb{E}\phi(X)\psi(Y)\geq
\frac{p^\prime-1}{2r^2}+\frac{(q-1)r^2}{2}.
\end{align*}
Noting that $p^\prime-1,q-1<0$ and taking the supremum over all $r>0,$
we get 
\begin{align}
  \mathbb{E}\phi(X)\psi(Y)\geq -\sqrt{\frac{q-1}{p-1}} \mbox{\ \ \ \
    or\ \ \ \ }   -\mathbb{E}\phi(X)\psi(Y)\leq \sqrt{\frac{q-1}{p-1}}.
\end{align}

Taking the supremum over all $-\phi$ and $\psi$ satisfying
\eqref{eqn:conditions}, we get
$$\rho_m(X;Y)\leq \sqrt{\frac{q-1}{p-1}}.$$
We can similarly prove the inequality in the case when $p>1.$ This
completes the proof.
\end{proof}

\vspace{0.1in}

The main implication of Thm.~\ref{thm:main-inequality} for the
problem of non-interactive simulation is the following corollary,
which gives a necessary and sufficient condition on the source
distribution $P(x,y)$ for which
Observation~\ref{observation:hc-machine} will prove impossibility
results that are at least as strong as
Observation~\ref{observation:mc-machine}. This condition is satisfied
for example, when $P(x,y)$ is a $\DSBS(\epsilon)$ distribution.

\vspace{0.1in}

\begin{corollary}\label{cor:mc-hc-comparison}
Fix a distribution $(X,Y)\sim P(x,y).$ Then the following are equivalent:
\begin{itemize}
\item[(a)] For all $(U,V)\sim Q(u,v),$ $\mathcal{R}(X;Y)\subseteq \mathcal{R}(U;V) \implies \rho_m(X;Y)\geq
\rho_m(U;V).$
\item[(b)] \begin{equation}\label{eqn-main-inequality-equality}
\rho_m(X;Y)=\inf_{(p,q)\in\mathcal{R}(X;Y),p\neq 1}
\sqrt{\frac{q-1}{p-1}}.\end{equation}
\end{itemize}

\end{corollary}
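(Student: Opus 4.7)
The plan is to prove the two directions (a)$\Rightarrow$(b) and (b)$\Rightarrow$(a) separately, using Theorem~\ref{thm:main-inequality} as the main ingredient. Observe first that Theorem~\ref{thm:main-inequality} already gives us one direction of \eqref{eqn-main-inequality-equality} for free: for every $(p,q)\in\mathcal{R}(X;Y)$ with $p\neq 1$ we have $\rho_m^2(X;Y)\leq \frac{q-1}{p-1}$, so
\[
\rho_m(X;Y)\;\leq\;\inf_{(p,q)\in\mathcal{R}(X;Y),p\neq 1}\sqrt{\frac{q-1}{p-1}}.
\]
Thus condition (b) is just the assertion that this inequality is tight, and the whole content of the corollary is whether this tightness is equivalent to the maximal-correlation bound being no worse than the hypercontractivity-ribbon bound.

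For the implication (b)$\Rightarrow$(a), suppose $\mathcal{R}(X;Y)\subseteq\mathcal{R}(U;V)$. Apply Theorem~\ref{thm:main-inequality} to $(U,V)$: for every $(p,q)\in\mathcal{R}(U;V)$ with $p\neq 1$ we get $\rho_m^2(U;V)\leq\frac{q-1}{p-1}$. Since the inclusion makes every such inequality valid also for the $(p,q)\in\mathcal{R}(X;Y)$ with $p\neq 1$, we may infimize over $\mathcal{R}(X;Y)$ and invoke (b):
\[
\rho_m(U;V)\;\leq\;\inf_{(p,q)\in\mathcal{R}(X;Y),p\neq 1}\sqrt{\frac{q-1}{p-1}}\;=\;\rho_m(X;Y),
\]
which is exactly (a).

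For (a)$\Rightarrow$(b) I would argue contrapositively. Assume (b) fails, and let $\alpha:=\inf_{(p,q)\in\mathcal{R}(X;Y),p\neq 1}\sqrt{\frac{q-1}{p-1}}$, so that $\rho_m(X;Y)<\alpha\leq 1$. Choose the parameter $\beta=(1-\alpha)/2\in[0,\tfrac12)$ and let $(U,V)\sim\DSBS(\beta)$; then by \eqref{eqn:DSBS-mc} and \eqref{eqn:DSBS-hc} we have $\rho_m(U;V)=|1-2\beta|=\alpha$ and the ribbon $\mathcal{R}(U;V)$ is precisely the wedge $\bigl\{(p,q):\alpha^2\leq\tfrac{q-1}{p-1}\leq 1\bigr\}$ (together with the diagonal $p=q$). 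The witness $(U,V)$ then satisfies $\rho_m(U;V)=\alpha>\rho_m(X;Y)$, contradicting (a), provided we verify $\mathcal{R}(X;Y)\subseteq\mathcal{R}(U;V)$. This last verification is the most delicate step: for any $(p,q)\in\mathcal{R}(X;Y)$ with $p\neq 1$ we have $\tfrac{q-1}{p-1}\geq\alpha^2$ by the very definition of $\alpha$, and $\tfrac{q-1}{p-1}\leq 1$ from the structural constraint $1\leq q\leq p$ or $1\geq q\geq p$ built into the definition of the ribbon (the two cases handled by noting that $p-1$ and $q-1$ have the same sign, with $|q-1|\leq|p-1|$). Hence $(p,q)\in\mathcal{R}(U;V)$, and the inclusion is established, completing the contrapositive and hence the corollary.

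The only real technical point is the wedge description of $\mathcal{R}(\DSBS(\beta))$ used above, together with the elementary observation $\tfrac{q-1}{p-1}\in[s_p(X;Y),1]$ for all $(p,q)\in\mathcal{R}(X;Y)$; both follow directly from the definitions and the tensorization/duality structure already laid out in Section~\ref{subsec:hc}, so no additional machinery is required.
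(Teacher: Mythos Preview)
Your proof is correct and follows essentially the same route as the paper's. Both directions match: for (b)$\Rightarrow$(a) you combine the inclusion $\mathcal{R}(X;Y)\subseteq\mathcal{R}(U;V)$ with Theorem~\ref{thm:main-inequality} applied to $(U,V)$, exactly as the paper does; for (a)$\Rightarrow$(b) you and the paper both argue contrapositively by producing a $\DSBS(\beta)$ witness with $1-2\beta$ equal to the infimum $\alpha$, using the wedge description \eqref{eqn:DSBS-hc} of the DSBS ribbon to verify $\mathcal{R}(X;Y)\subseteq\mathcal{R}(U;V)$.
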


\vspace{0.1in}

\begin{proof}[Proof of Corollary~\ref{cor:mc-hc-comparison}]

  (b) $\implies $ (a): Assume (b) holds for $P(x,y).$ If
  $\mathcal{R}(X;Y)\subseteq\mathcal{R}(U;V),$ then
  $\inf_{(p,q)\in\mathcal{R}(X;Y),p\neq 1} \sqrt{\frac{q-1}{p-1}}\geq
  \inf_{(p,q)\in\mathcal{R}(U;V),p\neq 1} \sqrt{\frac{q-1}{p-1}}.$
  Now, by hypothesis, $\inf_{(p,q)\in\mathcal{R}(X;Y),p\neq 1}
  \sqrt{\frac{q-1}{p-1}} = \rho_m(X;Y)$ and from
  Thm.~\ref{thm:main-inequality}, we have
  $\inf_{(p,q)\in\mathcal{R}(U;V),p\neq 1} \sqrt{\frac{q-1}{p-1}}\geq
  \rho_m(U;V).$

  $\sim$(b) $\implies $ $\sim$(a): Suppose that for $(X,Y)\sim P(x,y),$ we have
  for some $\delta\neq 0,$
  $$\rho_m(X;Y)= \inf_{(p,q)\in\mathcal{R}(X;Y),p\neq 1}
  \sqrt{\frac{q-1}{p-1}}-\delta.$$ By
  Theorem~\ref{thm:main-inequality}, $\delta>0.$ From \eqref{eqn:DSBS-hc}, we know
  that if $(U,V)\sim \DSBS(\epsilon),$ then for any $p\neq 1,$
  $$\frac{q^*_p(U;V)-1}{p-1}=(1-2\epsilon)^2=\rho_m(U;V)^2.$$
  Choosing $\epsilon$ so that
  $\rho_m(U;V)=1-2\epsilon=\inf_{(p,q)\in\mathcal{R}(X;Y),p\neq 1}
  \sqrt{\frac{q-1}{p-1}},$ we have $\rho_m(X;Y)<\rho_m(U;V)$ and
  $\mathcal{R}(X;Y)\subseteq\mathcal{R}(U;V).$

\end{proof}

\subsection{Limiting chordal slope of the hypercontractivity ribbon}

Our second result proves the existence of $\lim_{p\to 1}s_p(X;Y)$ and
provides a characterization of the limit in terms of a strong data
processing constant for relative entropies that was studied first in
\cite{AhlswedeGacs76}.

\begin{definition}
  Let $D(\mu(z)||\nu(z)) = \sum_z\mu(z)\log\frac{\mu(z)}{\nu(z)}$
  denote the relative entropy of $\mu$ with respect to $\nu.$ Consider
  finite sets $\mathcal{X}$ and $\mathcal{Y},$ and let $P(x,y)$ be a
  joint distribution over the product set
  $\mathcal{X}\times\mathcal{Y}.$ Let $R_X(x)$ be an arbitrary
  probability distribution on $\mathcal{X}.$ Let $R_Y(y)$ be the
  probability distribution on $\mathcal{Y}$ whose probability mass at
  $y$ is $\sum_{x\in\mathcal{X}} \frac{P(x,y)}{P_X(x)}R_X(x).$ If
  $(X,Y)\sim P_X(x,y),$ then define the strong data processing constant
  for relative entropies corresponding to $(X,Y)$ as
  $$s^*(X;Y) := \sup\frac{D(R_Y(y)||P_Y(y))}{D(R_X(x)||P_X(x))}\ ,$$
  where the supremum is taken over all $R_X(x)$
  satisfying $R_X(x)\not\equiv P_X(x)$ and $R_X(x)<<P_X(x).$
\end{definition}

\begin{remark}
  In a recent work \cite{AGKN_14}, it is shown that $s^*$ is also the
  tightest constant for data processing inequalities involving mutual
  information in Markov chains:
  $$s^*(X;Y) = \sup_{U: U-X-Y} \frac{I(U;Y)}{I(U;X)}~.$$
\end{remark}

Our result can be stated as follows. 

\vspace{0.1in}
\begin{theorem}\label{thm:limit_p=1}
  \begin{equation}\label{eqn:limit_p=1}\lim_{p\to 1} s_p(X;Y) =
    \lim_{p\to 1} \frac{q^*_p(X;Y)-1}{p-1} = s^*(Y;X).\end{equation}
\end{theorem}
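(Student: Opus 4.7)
The plan is to analyze the hypercontractivity inequality via a Taylor expansion around $(p,q) = (1,1)$. For any $g : \mathcal{Y} \to \mathbb{R}_{>0}$ normalized so that $\mathbb{E}[g(Y)] = 1$, introduce the probability distribution $R_Y(y) := P_Y(y)g(y)$ on $\mathcal{Y}$ and its backward pushforward $R_X(x) := \sum_y P(x|y) R_Y(y) = P_X(x)\, \mathbb{E}[g(Y)|X=x]$. Setting $\Phi_g(p) := \log ||\mathbb{E}[g(Y)|X]||_p$ and $\Psi_g(q) := \log ||g(Y)||_q$, both functions are smooth near $1$ with $\Phi_g(1) = \Psi_g(1) = 0$, and a direct differentiation yields the clean identities $\Phi_g'(1) = D(R_X || P_X)$ and $\Psi_g'(1) = D(R_Y || P_Y)$. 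This is the bridge between hypercontractivity and KL divergence that the Taylor-expansion approach exploits.

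For the lower bound $\liminf_{p\to 1^+} s_p(X;Y) \geq s^*(Y;X)$, I fix $g$ with $R_Y \neq P_Y$. The hypercontractivity $\Phi_g(p) \leq \Psi_g(q^*_p)$ combined with the Taylor expansions yields, after subtracting the common value $\Phi_g(1) = \Psi_g(1) = 0$ and dividing by $p - 1 > 0$, the bound $D(R_X || P_X) + o(1) \leq D(R_Y || P_Y) \cdot s_p(X;Y) + o(s_p)$. Taking $\liminf_{p \to 1^+}$ gives $D(R_X||P_X)/D(R_Y||P_Y) \leq \liminf_{p \to 1^+} s_p(X;Y)$. Since any probability distribution $R_Y \ll P_Y$ with $R_Y \neq P_Y$ arises via $g = R_Y/P_Y$, taking the supremum over $g$ yields $s^*(Y;X) \leq \liminf_{p \to 1^+} s_p(X;Y)$. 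The reverse hypercontractive case $p \to 1^-$ is essentially identical: the reverse inequality $\Phi_g(p) \geq \Psi_g(q^*_p)$, together with division by the negative quantity $p - 1$, flips inequalities twice to produce the same bound; this is why the Taylor-expansion proof extends immediately to reverse hypercontractivity.

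For the matching upper bound $\limsup_{p \to 1^+} s_p(X;Y) \leq s^*(Y;X)$, fix $M > s^*(Y;X)$; the plan is to show $(p,\, 1 + M(p-1)) \in \mathcal{R}(X;Y)$ for all $p > 1$ sufficiently close to $1$, i.e.\ $\Phi_g(p) \leq \Psi_g(1 + M(p-1))$ uniformly in normalized $g$. The first-order term equals $[D(R_X || P_X) - M \cdot D(R_Y || P_Y)](p-1) \leq -(M - s^*(Y;X)) D(R_Y || P_Y)(p-1)$; by compactness of the simplex of probability distributions on the finite set $\mathcal{Y}$, the $O((p-1)^2)$ remainder admits a uniform bound, so whenever $D(R_Y || P_Y)$ is bounded away from zero the first-order term dominates and the inequality holds. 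The delicate regime is $g \to \mathrm{constant}$, where both $\Phi_g'(1)$ and $\Psi_g'(1)$ vanish simultaneously; there the leading behavior is controlled by a second-order chi-squared approximation whose sharp ratio is $\rho_m^2(X;Y)$, which by Theorem~\ref{thm:main-inequality} combined with the lower bound above is itself at most $s^*(Y;X) < M$, so this boundary case is absorbed as well. The main technical obstacle is executing this uniform Taylor estimate and cleanly interpolating between the first-order-dominated regime and the chi-squared regime near $g \equiv 1$.
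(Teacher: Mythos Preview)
Your proposal follows essentially the same Taylor-expansion route as the paper's proof: both compute $\frac{d}{dp}\big|_{p=1}\log\|\mathbb{E}[g(Y)\mid X]\|_p=\Ent(\mathbb{E}[g(Y)\mid X])=D(R_X\|P_X)$ (and likewise for $g(Y)$), obtain the easy direction by fixing a single $g$, and for the hard direction split the normalized simplex of $g$'s into a compact piece bounded away from the constant function (where the first-order term dominates uniformly) and a neighborhood of $g\equiv 1$ that needs separate care.

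The one concrete gap is in your near-constant argument. You want $\rho_m^2(X;Y)\le s^*(Y;X)$ so that the chi-squared ratio is strictly below $M$, and you justify this by ``Theorem~\ref{thm:main-inequality} combined with the lower bound above.'' But Theorem~\ref{thm:main-inequality} gives $\rho_m^2\le s_p$ for each $p$, while your lower bound gives $s^*(Y;X)\le\liminf_{p\to 1}s_p$; chaining these shows only that both $\rho_m^2$ and $s^*(Y;X)$ lie below $\liminf s_p$, not that $\rho_m^2\le s^*(Y;X)$. The inequality $\rho_m^2\le s^*(Y;X)$ is true and can be obtained directly by a local chi-square expansion of KL, but it does not follow from the two facts you cite. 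The paper's Lemma~\ref{lem:difficult} sidesteps this detour entirely: in the neighborhood of $g\equiv 1$ it uses the defining inequality $\Ent(\mathbb{E}[g(Y)\mid X])\le s^*(Y;X)\,\Ent(g(Y))$ (valid for \emph{all} $g$, not only near-constant ones) together with Pinsker's inequality to control the remainder terms, never invoking $\rho_m$ at all.
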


\vspace{0.1in} The proof of Thm.~\ref{thm:limit_p=1} follows from a
natural Taylor series calculation, and can be found in
Appendix~\ref{subsec:s*_properties}. The following corollary shows
that
$\lim_{p\to\infty} s_p(X;Y) =\lim_{p\to-\infty} s_p(X;Y) = s^*(X;Y).$
The former was established in \cite{AhlswedeGacs76} while the latter
result is new. We believe that using
Theorems~\ref{thm:main-inequality} and~\ref{thm:limit_p=1}, we acquire
a more intuitive proof of the result
$\lim_{p\to\infty} s_p(X;Y) = s^*(X;Y)$ that was obtained in
\cite{AhlswedeGacs76}, while also showing the reverse hypercontractive
case: $\lim_{p\to-\infty} s_p(X;Y) = s^*(X;Y)$

\vspace{0.1in}
\begin{corollary}\label{cor:limit_p=infty}
\begin{equation}\label{eqn:limit_p=infty}\lim_{p\to \infty}
  \frac{q^*_p(X;Y)-1}{p-1} = \lim_{p\to-\infty}
  \frac{q^*_p(X;Y)-1}{p-1} = s^*(X;Y).\end{equation}
\end{corollary}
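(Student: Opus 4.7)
The plan is to reduce Corollary~\ref{cor:limit_p=infty} to Theorem~\ref{thm:limit_p=1} via the duality $(p,q)\in\mathcal{R}(X;Y)\Leftrightarrow(q',p')\in\mathcal{R}(Y;X)$. Combining this duality (which says $q=q^*_p(X;Y)$ iff $p'=q^*_{q'}(Y;X)$, for $p,q\ne 1$) with the H\"older identities $p'-1=1/(p-1)$ and $q'-1=1/(q-1)$ yields the central relation
\begin{equation*}
s_p(X;Y) \;=\; \frac{q^*_p(X;Y)-1}{p-1} \;=\; \frac{p'-1}{q'-1} \;=\; s_{q'}(Y;X)
\end{equation*}
for every $p\ne 1$, where $q=q^*_p(X;Y)$ and $q'=q/(q-1)$. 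Thus the corollary will follow once I show $q'\to 1$ as $p\to\pm\infty$, since Theorem~\ref{thm:limit_p=1} applied to $(Y;X)$ gives $\lim_{r\to 1}s_r(Y;X)=s^*(X;Y)$. The case of independent $(X,Y)$ is trivial (both sides vanish), so I assume dependence throughout.

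The main obstacle, showing $q'(p)\to 1$, rests on two elementary facts about $q^*_r(Y;X)$, valid on each of the hypercontractive branch $r>1$ and the reverse branch $r<1$. First, \emph{monotonicity}: $q^*_r(Y;X)$ is non-decreasing in $r$. On the hypercontractive branch this is immediate from $\|\mathbb{E}[g(X)|Y]\|_{r_1}\le\|\mathbb{E}[g(X)|Y]\|_{r_2}\le\|g(X)\|_q$ for $r_1<r_2$ with $(r_2,q)\in\mathcal{R}(Y;X)$; the reverse branch uses the same chain of norm monotonicity. Second, \emph{separation}: $q^*_r(Y;X)\ne 1$ whenever $r\ne 1$. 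Indeed, if $(r,1)\in\mathcal{R}(Y;X)$ with $r>1$, then
\begin{equation*}
\|\mathbb{E}[g(X)|Y]\|_r \;\le\; \|g(X)\|_1 \;=\; \mathbb{E}\mathbb{E}[g(X)|Y] \;\le\; \|\mathbb{E}[g(X)|Y]\|_r
\end{equation*}
for every $g\ge 0$, which forces $\mathbb{E}[g(X)|Y]$ to be almost surely constant for every $g$, contradicting dependence; the reverse-branch version is the mirror statement using $\|\cdot\|_r\le\|\cdot\|_1$ for $r\le 1$.

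With these in hand, take $p\to\infty$: then $p'\to 1^+$, and the central relation gives $q^*_{q'}(Y;X)=p'\to 1^+$. If some subsequence of $q'$ satisfied $q'\ge 1+\delta$ for some $\delta>0$, then monotonicity and separation would yield $q^*_{q'}(Y;X)\ge q^*_{1+\delta}(Y;X)>1$ along the subsequence, contradicting $p'\to 1$. Hence $q'\to 1^+$, and Theorem~\ref{thm:limit_p=1} applied to $(Y;X)$ gives
\begin{equation*}
\lim_{p\to\infty}s_p(X;Y) \;=\; \lim_{q'\to 1^+}s_{q'}(Y;X) \;=\; s^*(X;Y).
\end{equation*}
The case $p\to-\infty$ is handled identically on the reverse branch: $p'\to 1^-$ through values in $(0,1)$, the reverse-branch versions of monotonicity and separation force $q'\to 1^-$ (also through $(0,1)$), and Theorem~\ref{thm:limit_p=1} again produces $s^*(X;Y)$. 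Everything beyond the two elementary facts above is routine bookkeeping with H\"older conjugates.
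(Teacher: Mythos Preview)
Your approach is essentially the same as the paper's: both transfer Theorem~\ref{thm:limit_p=1} from $p\to 1$ to $p\to\pm\infty$ via the duality $(p,q)\in\mathcal{R}(X;Y)\Leftrightarrow(q',p')\in\mathcal{R}(Y;X)$. The one substantive difference is in how you force $q'\to 1$ (equivalently $q^*_p(X;Y)\to\infty$). The paper uses Theorem~\ref{thm:main-inequality}, which gives the uniform lower bound $s_p\ge\rho_m^2>0$ and hence $q^*_p-1\ge\rho_m^2(p-1)\to\infty$; you replace this with the elementary pair ``monotonicity $+$ separation'' of $q^*_r(Y;X)$, which is a pleasant simplification and avoids invoking Theorem~\ref{thm:main-inequality} altogether.

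There is, however, a small gap in the ``central relation.'' The duality gives only one direction cleanly: from $q=q^*_p(X;Y)$ you get $(q',p')\in\mathcal{R}(Y;X)$, hence $q^*_{q'}(Y;X)\le p'$ and thus $s_{q'}(Y;X)\le s_p(X;Y)$. The asserted equality $q^*_{q'}(Y;X)=p'$ (and hence $s_p(X;Y)=s_{q'}(Y;X)$) would fail precisely where $q^*_p(X;Y)$ is constant on an interval of $p$; ruling this out amounts to continuity/strict monotonicity of $q^*_p$, which is true on finite alphabets but is not the one-liner you treat it as. Fortunately your argument does not actually need the equality. Your proof that $q'\to 1$ already goes through with only $q^*_{q'}(Y;X)\le p'$: if $q'\ge 1+\delta$ along a subsequence then $q^*_{q'}(Y;X)\ge q^*_{1+\delta}(Y;X)>1$, contradicting $q^*_{q'}(Y;X)\le p'\to 1$. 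Together with $s_{q'}(Y;X)\le s_p(X;Y)$ this yields $\liminf_{p\to\infty}s_p(X;Y)\ge s^*(X;Y)$. For the matching upper bound, instead of invoking the pointwise equality, dualize the \emph{inclusion} coming from Theorem~\ref{thm:limit_p=1}: for $r\in(1,1+\delta)$ one has $(r,1+(s^*(X;Y)+\epsilon)(r-1))\in\mathcal{R}(Y;X)$, hence its dual lies in $\mathcal{R}(X;Y)$, and as $r$ sweeps $(1,1+\delta)$ the first coordinate of the dual sweeps $(1+\tfrac{1}{(s^*+\epsilon)\delta},\infty)$, giving $s_p(X;Y)\le s^*(X;Y)+\epsilon$ for all sufficiently large $p$. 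This is exactly the paper's $A_1(\epsilon)$-half of the sandwich, and it completes your argument without the unjustified ``iff.'' The reverse branch is handled the same way.
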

\vspace{0.1in}

The proof of Corollary~\ref{cor:limit_p=infty} is in
Appendix~\ref{subsec:s*_properties}.
Corollary~\ref{cor:sufficient_condition}, which follows immediately
from Corollary~\ref{cor:mc-hc-comparison}, Thm.~\ref{thm:limit_p=1}
and Corollary~\ref{cor:limit_p=infty} provides a sufficient condition
for \eqref{eqn-main-inequality-equality} to hold.  \vspace{0.1in}
\begin{corollary}\label{cor:sufficient_condition}
If $\rho_m(X;Y)=\min\{\sqrt{s^*(X;Y)}, \sqrt{s^*(Y;X)}\},$ then for
any $(U,V)\sim Q(u,v),$ we have 
$$\mathcal{R}(X;Y)\subseteq \mathcal{R}(U;V) \implies \rho_m(X;Y)\geq
\rho_m(U;V).$$
\end{corollary}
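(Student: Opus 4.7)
The plan is to deduce Corollary~\ref{cor:sufficient_condition} directly from Corollary~\ref{cor:mc-hc-comparison} by showing that the hypothesis $\rho_m(X;Y) = \min\{\sqrt{s^*(X;Y)},\sqrt{s^*(Y;X)}\}$ forces condition~(b) of that corollary, namely \eqref{eqn-main-inequality-equality}, to hold. The implication (b)$\Rightarrow$(a) then yields the claim.

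First, I would rewrite the infimum in \eqref{eqn-main-inequality-equality} in terms of the non-trivial boundary slope. For each fixed $p\neq 1$, the slice $\{q:(p,q)\in\mathcal{R}(X;Y)\}$ is the interval $[q^*_p(X;Y),p]$ when $p>1$ and the interval $[p,q^*_p(X;Y)]$ when $p<1$; on either slice $(q-1)/(p-1)$ is monotone in $q$ and attains its minimum at $q=q^*_p(X;Y)$, giving the value $s_p(X;Y)$. Hence
\[
\inf_{(p,q)\in\mathcal{R}(X;Y),\,p\neq 1}\frac{q-1}{p-1}\;=\;\inf_{p\neq 1}s_p(X;Y).
\]

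Second, I would sandwich this infimum. Theorem~\ref{thm:main-inequality} gives the lower bound $\rho_m^2(X;Y)\leq s_p(X;Y)$ for every $p\neq 1$, so $\rho_m^2(X;Y)\leq \inf_{p\neq 1}s_p(X;Y)$. For the upper bound, Theorem~\ref{thm:limit_p=1} gives $\lim_{p\to 1}s_p(X;Y)=s^*(Y;X)$ and Corollary~\ref{cor:limit_p=infty} gives $\lim_{p\to\pm\infty}s_p(X;Y)=s^*(X;Y)$, so $\inf_{p\neq 1}s_p(X;Y)\leq \min\{s^*(X;Y),s^*(Y;X)\}$. Chaining these,
\[
\rho_m^2(X;Y)\;\leq\;\inf_{p\neq 1}s_p(X;Y)\;\leq\;\min\{s^*(X;Y),s^*(Y;X)\}.
\]

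Under the hypothesis the two outer quantities coincide, so equality holds throughout. In particular
\[
\rho_m(X;Y)\;=\;\inf_{(p,q)\in\mathcal{R}(X;Y),\,p\neq 1}\sqrt{\frac{q-1}{p-1}},
\]
which is exactly condition~(b) of Corollary~\ref{cor:mc-hc-comparison}. Applying the implication (b)$\Rightarrow$(a) from that corollary finishes the proof. There is no real obstacle: the content of Corollary~\ref{cor:sufficient_condition} is that the two limiting slopes of the hypercontractivity ribbon already saturate the maximal-correlation bound supplied by Theorem~\ref{thm:main-inequality}, and everything else is a routine chain of inequalities among results proved earlier.
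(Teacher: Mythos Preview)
Your proposal is correct and is precisely the argument the paper has in mind: the paper states that Corollary~\ref{cor:sufficient_condition} ``follows immediately from Corollary~\ref{cor:mc-hc-comparison}, Thm.~\ref{thm:limit_p=1} and Corollary~\ref{cor:limit_p=infty}'' without further detail, and you have supplied exactly those details---rewriting the infimum in \eqref{eqn-main-inequality-equality} as $\inf_{p\neq 1}s_p(X;Y)$, sandwiching it between $\rho_m^2(X;Y)$ (via Theorem~\ref{thm:main-inequality}) and $\min\{s^*(X;Y),s^*(Y;X)\}$ (via the two limit results), and then invoking (b)$\Rightarrow$(a) of Corollary~\ref{cor:mc-hc-comparison}.
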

\vspace{0.1in} Note that from \eqref{eqn:DSBS-mc}, \eqref{eqn:DSBS-hc}
and Thm.~\ref{thm:limit_p=1}, DSBS sources always satisfy the
condition in Corollary~\ref{cor:sufficient_condition}. One can also
show that the condition holds for source distributions corresponding
to the input-output pair resulting from a uniformly distributed input
into a binary input symmetric output channel. The above ideas suggest
that for a recent conjecture regarding Boolean functions
\cite{KumarCourtade13}, hypercontractivity is going to be a more
useful tool than maximal correlation. Indeed, evidence for this can be
found in \cite{AGKN_Allerton13}, where usage of $s^*$ helps in an
automated proof of an inequality that cannot be proved using maximal
correlation.

\begin{example}

  Suppose we choose $P(x,y)$ to be $\DSBS(\epsilon),$ and $Q(u,v)$
  specified by $Q(U=1) = s,$ $Q(V=1|U=0) = c, Q(V=0|U=1)=d.$ For
  certain values of $s,c,d,$ non-interactive simulation is possible
  and for others, it is impossible. For fixed values of $s,$ this is
  shown graphically in Fig.~\ref{fig:rho_m_s_p}.

\begin{figure}[h]
  \begin{center}
    \subfigure[$s=0.3,\epsilon=0.2$]{\includegraphics[width=1.8in,
      height=!]{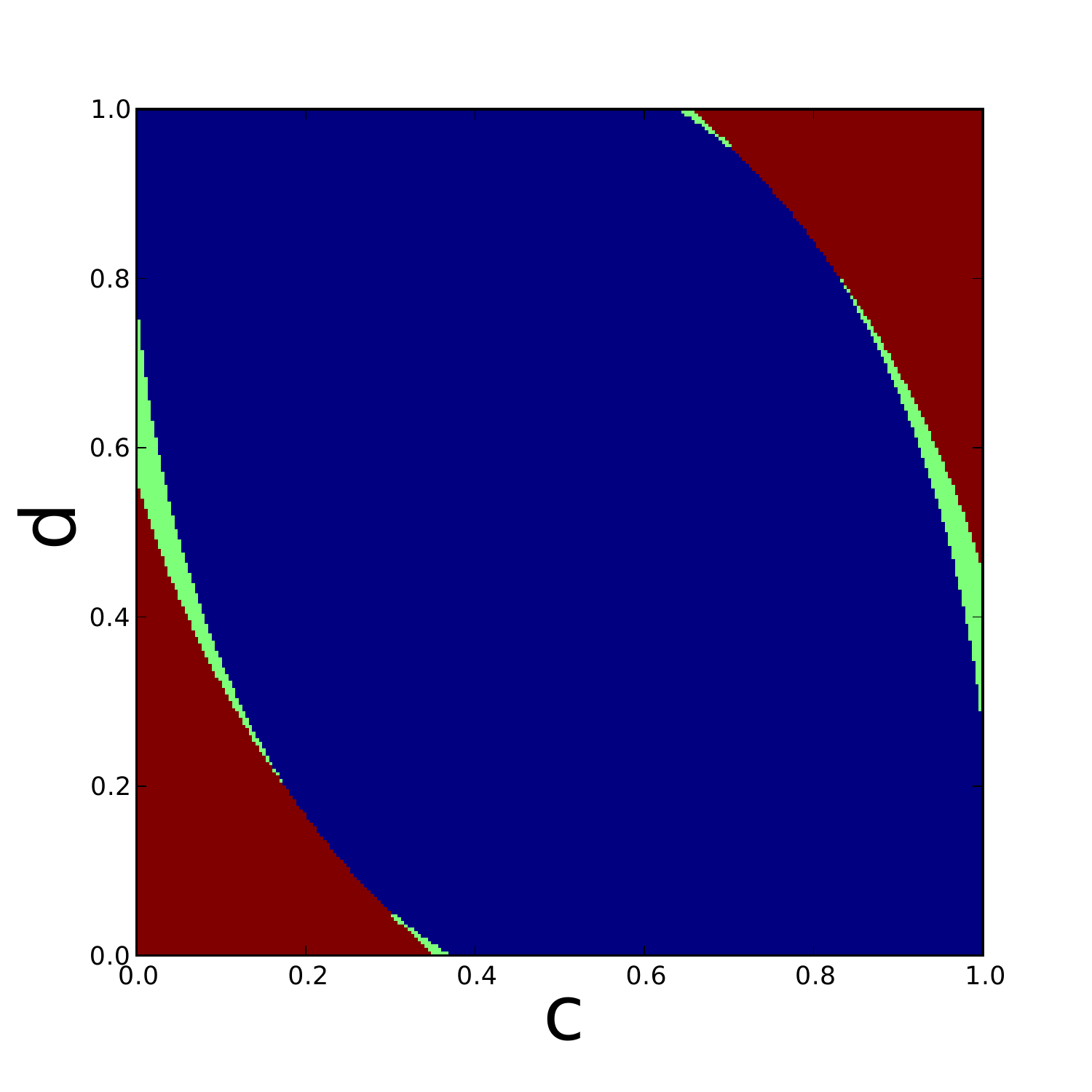}}\hspace{0.1in}
    \subfigure[$s=0.3,\epsilon=0.4$]{\includegraphics[width=1.8in,
      height=!]{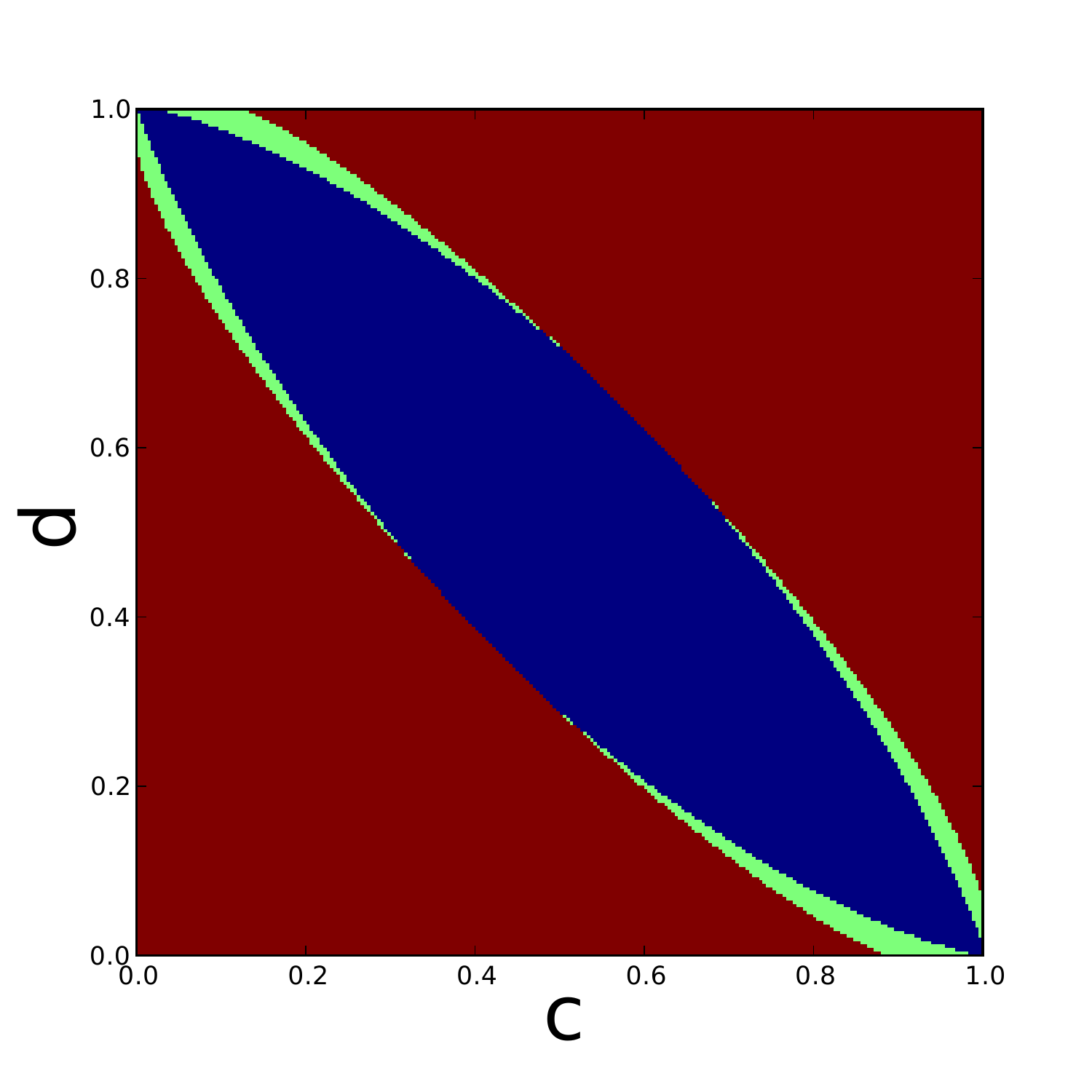}}
    \subfigure[$s=0.5,\epsilon=0.2$]{\includegraphics[width=1.8in,
      height=!]{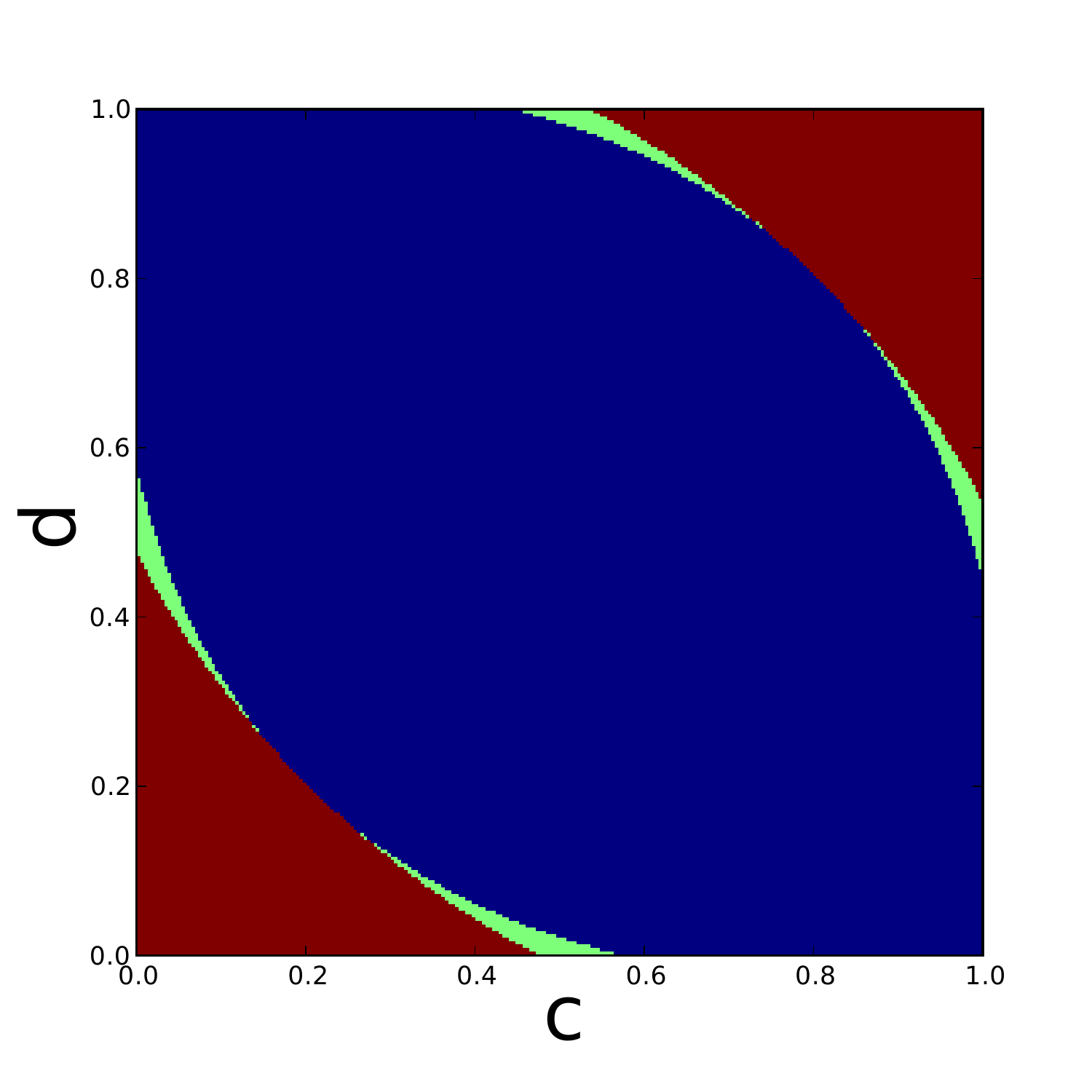}}
    \caption{Suppose the source distribution $P(x,y)$ is
      $\DSBS(\epsilon),$ and the target distribution is $Q(u,v)$
      specified by $Q(U=1) = s,$ $Q(V=1|U=0) = c, Q(V=0|U=1)=d.$ The
      plots above show restrictions on the space of distributions
      $(s,c,d)$ that can be simulated.  The X co-ordinate represents
      $c$ and the Y co-ordinate represents $d.$ In each plot, we fix
      $s,$ $\epsilon$ as specifed and $p=1.5.$ The blue region
      indicates $\rho_m^2\leq s_p\leq (1-2\epsilon)^2,$ the green
      region indicates $\rho_m^2\leq (1-2\epsilon)^2<s_p$ and finally,
      the red region indicates $(1-2\epsilon)^2<\rho_m^2\leq s_p.$
      Thus, with $p=1.5,$ the red region is ruled out as impossible by
      $\rho_m$ and $s_p$, the green region is ruled out by $s_p,$ and
      the blue region is ruled out by neither $\rho_m$ nor by $s_p.$
      Note that this does not mean all points in the blue region can
      be simulated by suitable choice of functions, only that our
      tools (using this particular choice of $p$) fail to prove
      impossibility for those points. Note that along the $c=d$ line,
      $(U,V)$ is a DSBS source as well, so both maximal correlation
      and hypercontractivity (for any $p$) give an impossibility
      result if and only if $c<\epsilon$ or $c>1-\epsilon$ in
      accordance with Sec.~\ref{subsec:example1}.}
    \label{fig:rho_m_s_p}
  \end{center}
\end{figure}

\end{example}


\section{Non-interactive simulation with $k\geq 3$ agents}
\label{sec:triple}

The non-interactive simulation problem we have considered can be
naturally extended to $k$-agents.

\begin{definition}
Let $\mathcal{X}_i, \mathcal{U}_i$ denote finite sets for
$i=1,2,\ldots, k.$ Given a \textit{source distribution}
$P(x_1,x_2,\ldots, x_k)$ over $\Pi_{i=1}^k\mathcal{X}_i$ and a
\textit{target distribution} $Q(u_1,u_2,\ldots, u_k)$ over
$\Pi_{i=1}^k\mathcal{U}_i,$ we say that \emph{non-interactive
  simulation} of $Q(u_1,u_2,\ldots, u_k)$ using $P(x_1,x_2,\ldots,
x_k)$ is possible if for any $\epsilon>0,$ there exists a positive
integer $n,$ a finite set $\mathcal{R}$
and functions $f_i:\mathcal{X}_i^n\times\mathcal{R}\mapsto
\mathcal{U}_i$ for $i=1,2,\ldots,k$ such that
$$d_{\mathrm{TV}}\left((f_1(X_1^n,M_1),f_2(X_2^n,M_2),\ldots,
  f_k(X_k^n,M_k));(U_1,U_2,\ldots, U_k)\right)\leq \epsilon$$ where
$\{(X_{1,j},X_{2,j},\ldots, X_{k,j})\}_{j=1}^n$ is a sequence of i.i.d. samples
drawn from $P(x_1,x_2,\ldots, x_k),$ $M_1,M_2,\ldots, M_k$ are
uniformly distributed in $\mathcal{R},$ mutually independent of each
other and of the samples drawn from the source, $(U_1,U_2,\ldots,
U_k)$ is drawn from $Q(u_1,u_2,\ldots, u_k),$ and
$d_{\mathrm{TV}}(\cdot\,;\cdot)$ is the total variation distance.
\end{definition}

In this section, we make simple observations about how
hypercontractivity and maximal correlation may be used to prove
impossibility results for this non-interactive simulation problem with
$k$ agents. For any set $A\subseteq\{1,2,\ldots, k\},$ let us use the
notation $X_A:=(X_i: i\in A), U_A:=(U_i:i\in A).$

Recall that for the case of two random variables $(X,Y)$ and $1\leq q<p,$ we
have $(p,q)\in\mathcal{R}(X;Y)$ if either of the two following
equivalent conditions hold:
\vspace{0.1in}
\begin{itemize}
\item $||\mathbb{E}[g(Y)|X]||_p\leq ||g(Y)||_q\ \ \forall g:\mathcal{Y}\mapsto\mathbb{R};$
\item $\mathbb{E}f(X)g(Y)\leq ||f(X)||_{p^\prime}||g(Y)||_q\ \ \forall
  f:\mathcal{X}\mapsto\mathbb{R}, \forall g:\mathcal{Y}\mapsto\mathbb{R}.$
\end{itemize}
\vspace{0.1in}
Similarly, for $1\geq q>p,$ we have $(p,q)\in\mathcal{R}(X;Y)$ if
either of the two following equivalent conditions hold:
\vspace{0.1in}
\begin{itemize}
\item $||\mathbb{E}[g(Y)|X]||_p\geq ||g(Y)||_q\ \ \forall
  g:\mathcal{Y}\mapsto\mathbb{R}_{>0}$;
\item $\mathbb{E}f(X)g(Y)\geq ||f(X)||_{p^\prime}||g(Y)||_q\ \ \forall
  f:\mathcal{X}\mapsto\mathbb{R}_{>0}, \forall g:\mathcal{Y}\mapsto\mathbb{R}_{>0}.$
\end{itemize}
\vspace{0.1in} We can define a H\"{o}lder-contraction region
$\mathcal{H}(X;Y)$ by observing how much H\"{o}lder's inequality and
the reverse H\"{o}lder's inequality may be tightened. Define
$(p_1,p_2)\in\mathcal{H}(X;Y)$ if \vspace{0.1in}
\begin{itemize}
\item $p_1,p_2\geq 1,$ and $\forall f:\mathcal{X}\mapsto\mathbb{R},
  \forall g:\mathcal{Y}\mapsto\mathbb{R},$ we have
  $\mathbb{E}f(X)g(Y)\leq ||f(X)||_{p_1}||g(Y)||_{p_2};$
\item $p_1,p_2\leq 1,$ and $\forall
  f:\mathcal{X}\mapsto\mathbb{R}_{>0},
  \forall g:\mathcal{Y}\mapsto\mathbb{R}_{>0},$ we have
  $\mathbb{E}f(X)g(Y)\geq ||f(X)||_{p_1}||g(Y)||_{p_2}.$
\end{itemize}
\vspace{0.1in} This prompts a natural extension to $k$-random
variables using the $k$-random variable H\"{o}lder inequalities.  The
most general H\"{o}lder and reverse H\"{o}lder inequalities for $k$
random variables are respectively given by:
\begin{align}
  \label{eq:general-Holder}
  \mathbb{E}\Pi_{i=1}^k W_i \leq \Pi_{i=1}^k ||W_i||_{p_i},& \qquad
  p_i> 1, \sum_{i=1}^k \frac{1}{p_i}= 1; \\
  \label{eq:general-reverse-Holder}
  \mathbb{E}\Pi_{i=1}^k W_i \geq \Pi_{i=1}^k ||W_i||_{p_i},& \qquad
  p_i< 1, p_i\neq 0,\mbox{exactly one $p_i>0$},\sum_{i=1}^k
  \frac{1}{p_i}= 1, W_i\geq 0.
\end{align}

\begin{proof}[Proof of H\"{o}lder and reverse H\"{o}lder inequalities]
  By the weighted arithmetic mean-geometric mean inequality, we have
  for any real numbers $y_1,y_2,\ldots, y_k\geq 0,$ and
  $p_1,p_2,\ldots, p_k>1$ satisfying $\sum_{i=1}^k \frac{1}{p_i} = 1,$
  \begin{align}
    \label{eq:weighted-AM-GM}
    \Pi_{i=1}^k y_i \leq \sum_{i=1}^k \frac{y_i^{p_i}}{p_i}.
  \end{align}
Setting $y_i = \frac{|W_i|}{||W_i||_{p_i}}$ and taking expectations
gives the H\"{o}lder inequality.

Now, if $0<p_1<1, p_2,p_3,\ldots, p_k<0$ satisfying $\sum_{i=1}^k
\frac{1}{p_i} = 1,$ we may set $q_1=\frac{1}{p_1},
q_i=\frac{-p_i}{p_1}, i=2,3,\ldots,k,$ so that $q_i>1$ and
$\sum_{i=1}^k \frac{1}{q_i} = 1.$ Using \eqref{eq:weighted-AM-GM} with
$q_i$'s, we get 
\begin{align}
  \label{eq:reformulated}
  \Pi_{i=1}^k y_i \leq p_1y_1^{\frac{1}{p_1}} + \sum_{i=2}^k
  \frac{p_1}{-p_i}y_i^{-\frac{p_i}{p_1}}.
\end{align}
For any $x_1,x_2,\ldots,x_k>0,$ choose $y_1 = \Pi_{i=1}^k
x_i^{p_1}$ and $y_i = x_i^{-p_1}$ for $i=2,3,\ldots,k,$ to get 
\begin{align}
  \label{eq:re-reformulated}
\sum_{i=1}^k \frac{x_i^{p_i}}{p_i} \leq \Pi_{i=1}^k x_i.
\end{align}
Setting $x_i = \frac{|W_i|}{||W_i||_{p_i}}$ and taking expectations
proves the reverse H\"{o}lder inequality for $W_i>0$ almost surely,
$i=1,2,\ldots, k.$ If $W_i\geq 0,$ we can set $W_i^\prime =
W_i+\epsilon$ and let $\epsilon\downarrow 0$ to complete the proof.
\end{proof}

\begin{remark}
  Both H\"{o}lder and reverse H\"{o}lder inequalities can also be
  proved by recursively invoking the inequalities for two
  variables. As a demonstration, fix any $0<p,q<1.$ For any
  non-negative real-valued $W_1,W_2,W_3,$
  \begin{align*}
    \mathbb{E}W_1W_2W_3 & \geq ||W_1W_2||_p |W_3|_{\frac{-p}{1-p}} \\
    & = \left(\mathbb{E}(W_1W_2)^p\right)^{\frac{1}{p}}
      |W_3|_{\frac{-p}{1-p}} \\
      & \geq \left(||W_1^p||_q||W_2^p||_{\frac{-q}{1-q}}\right)^{\frac{1}{p}}
      |W_3|_{\frac{-p}{1-p}} \\
      & = ||W_1||_{pq} ||W_2||_{\frac{-pq}{1-q}} ||W_3||_{\frac{-p}{1-p}}.
  \end{align*}

  It is easy to check that any reverse H\"{o}lder inequality may be
  obtained in this way by suitable choice of $p,q.$
\end{remark}

\begin{remark}
  The reverse H\"{o}lder inequality will also hold if some of the
  $p_i$ were equal to zero as long as the point $(p_1,p_2,\ldots,p_k)$
  is the limit of points satisfying $p_i\leq 1, p_i\neq
  0,\mbox{exactly one $p_i>0$},\sum_{i=1}^k \frac{1}{p_i}= 1.$ In
  particular, if we set for any integer $M>1,$
  $p_1^{(M)}=\frac{1}{Mk}, p_2^{(M)}=p_3^{(M)}=\ldots=p_k^{(M)} =
  -\frac{k-1}{Mk-1},$ then $(p_1^{(M)},p_2,^{(M)},\ldots,p_k^{(M)})$
  is a legitimate choice for the reverse H\"{o}lder's
  inequality. Taking the limit as $M\to\infty,$ we get the inequality
  $\mathbb{E}\Pi_{i=1}^k W_i \geq \Pi_{i=1}^k ||W_i||_{0},$ which is
  also valid for all random variables $W_i\geq 0$ and is a reverse
  H\"{o}lder's inequality.
\end{remark}

\begin{remark}
  The restriction in reverse H\"{o}lder inequality that exactly one
  $p_i>0$ is necessary. If no such $p_i$ exists, then the inequality
  is a consequence of $\mathbb{E}\Pi_{i=1}^k W_i \geq \Pi_{i=1}^k
  ||W_i||_{0}$ in the previous remark and the montonicity of norms. On
  the other hand, if more than one such $p_i$ exists, say $p_1,p_2>0,$
  then we can choose any mutually exclusive events $A,B$ such that
  $P(A\cap B) = 0, P(A)>0, P(B)>0.$ Set $W_1 = 1_A, W_2=1_B,
  W_3=W_4=\ldots, W_k=1.$ The reverse H\"{o}lder inequality, if true,
  would then yield $P(A\cap B)\geq
  P(A)^{\frac{1}{p_1}}P(B)^{\frac{1}{p_2}}$ which is false.
\end{remark}

Define $(p_1,p_2,\ldots,
p_k)\in\mathcal{H}(X_1;X_2;\ldots;X_k)$ if 
\begin{itemize}
\item $p_1,p_2, \ldots, p_k\geq 1,$ and $\forall
  f_i:\mathcal{X}_i\mapsto\mathbb{R}, i=1,2,\ldots, k$ we have
  $$\mathbb{E}\Pi_{i=1}^k f_i(X_i)\leq \Pi_{i=1}^k||f_i(X_i)||_{p_i};$$
\item $p_1,p_2, \ldots, p_k\leq 1,$ and $\forall
  f_i:\mathcal{X}_i\mapsto\mathbb{R}_{>0}, i=1,2,\ldots, k$ we have
  $$\mathbb{E}\Pi_{i=1}^k f_i(X_i)\geq
  \Pi_{i=1}^k||f_i(X_i)||_{p_i};$$
\end{itemize}

\begin{remark}
  The restriction to the orthant $p_1,p_2, \ldots, p_k\geq 1$ for the
  forward H\"{o}lder contraction is without loss of generality:
  Assuming $X_1$ is a non-constant random variable and $f_1$ is chosen
  so that $f_1(X_1)$ is non-constant and $f_2,f_3,\ldots, f_k$ are
  chosen to be constants, the inequality will hold only if $p_1\geq
  1.$ Likewise, the restriction to the orthant $p_1,p_2, \ldots,
  p_k\leq 1,$ for the reverse H\"{o}lder contraction is without loss
  of generality.
\end{remark}

It is easy to check that tensorization, data processing and
appropriate semi-continuity properties continue to hold for
$\mathcal{H}(X_1;X_2;\ldots;X_k)$ so we have the following
observation.

\vspace{0.1in}
\begin{adjustwidth}{0.6in}{0.6in}
\begin{observation}
  Non-interactive simulation of $(U_1,U_2,\ldots, U_k)\sim
  Q(u_1,u_2,\ldots, u_k)$ using $(X_1,X_2,\ldots, X_k)\sim
  P(x_1,x_2,\ldots, x_k)$ is possible only if, for all non-empty
  subsets $S_1,S_2,\ldots, S_m\subseteq \{1,2,\ldots, k\},$
  $\mathcal{H}(X_{S_1};X_{S_2};\ldots; X_{S_m})\subseteq
  \mathcal{H}(U_{S_1}; U_{S_2};\ldots; U_{S_m}).$
\end{observation}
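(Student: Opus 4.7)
My plan is to mirror the proof of Observation~\ref{observation:mc-machine} almost verbatim, replacing the single scalar $\rho_m$ by the region $\mathcal{H}$. Concretely, I would first establish that for any choice of non-empty $S_1,\ldots,S_m\subseteq\{1,\ldots,k\}$, the region $\mathcal{H}(X_{S_1};\ldots;X_{S_m})$ obeys analogues of data processing, tensorization, and lower semi-continuity, and then chain them together exactly as in the two-agent proof.

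Data processing says that if $\phi_j$ is any function defined on $\Pi_{i\in S_j}\mathcal{X}_i$, then $\mathcal{H}(X_{S_1};\ldots;X_{S_m})\subseteq\mathcal{H}(\phi_1(X_{S_1});\ldots;\phi_m(X_{S_m}))$; this is immediate from the definition, since testing the $k$-variable H\"older-contractive inequalities on functions of $\phi_j(X_{S_j})$ is just a restriction to a smaller class of test functions. Tensorization says that if $(X^{(1)}_{S_1},\ldots,X^{(1)}_{S_m})$ and $(X^{(2)}_{S_1},\ldots,X^{(2)}_{S_m})$ are independent samples from $P$, then
\[
\mathcal{H}\bigl((X^{(1)}_{S_1},X^{(2)}_{S_1});\ldots;(X^{(1)}_{S_m},X^{(2)}_{S_m})\bigr)=\mathcal{H}(X_{S_1};\ldots;X_{S_m}),
\]
and iterating gives the same identity for $k_n$ i.i.d. copies. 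The same tensorization, applied to the mutually independent private randomnesses $M_1,\ldots,M_k$, shows that adjoining them to the $X$-tuple does not enlarge the region (an independent block contributes only the trivial orthant membership, already built into the definition of $\mathcal{H}$). Lower semi-continuity says that if $(V^{(n)}_{S_1},\ldots,V^{(n)}_{S_m})\to (V_{S_1},\ldots,V_{S_m})$ in total variation, then any H\"older-contractive inequality that holds along the sequence continues to hold in the limit; on finite alphabets, both $\mathbb{E}\Pi_j f_j(V^{(n)}_{S_j})$ and $\Pi_j\|f_j(V^{(n)}_{S_j})\|_{p_j}$ depend continuously on the joint distribution, and the inequality $\leq$ (or $\geq$, in the reverse orthant) passes to the limit.

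With these pieces in place, the proof becomes a direct parallel of Observation~\ref{observation:mc-machine}: given a sequence of schemes $f_i^{(n)}$ and distributions $Q_n$ of $(U^{(n)}_1,\ldots,U^{(n)}_k)$ with $d_{\mathrm{TV}}(Q_n,Q)\to 0$, I apply in order (i) tensorization across the $k_n$ i.i.d. source samples together with the independent private randomnesses to obtain $\mathcal{H}\bigl((X_{S_1}^{k_n},M_{S_1});\ldots;(X_{S_m}^{k_n},M_{S_m})\bigr)=\mathcal{H}(X_{S_1};\ldots;X_{S_m})$; (ii) block-wise data processing with $\phi_j=(f_i^{(n)}:i\in S_j)$ to obtain $\mathcal{H}(X_{S_1};\ldots;X_{S_m})\subseteq\mathcal{H}(U^{(n)}_{S_1};\ldots;U^{(n)}_{S_m})$ for every $n$; and (iii) lower semi-continuity to pass to the limit and conclude $\mathcal{H}(X_{S_1};\ldots;X_{S_m})\subseteq\mathcal{H}(U_{S_1};\ldots;U_{S_m})$.

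The main obstacle, by far, is proving the tensorization property in the multivariate setting. For the two-variable hypercontractivity ribbon $\mathcal{R}(X;Y)$ this is Witsenhausen's theorem, and the $k$-block H\"older-contractive analogue should follow by the same style of induction on the number of independent blocks: conditioning on the first block and applying the single-block inequality reduces matters to an inequality on conditional $L^{p_j}$-norms, which can then be combined via (reverse) Minkowski's inequality in the remaining blocks. I expect the forward orthant ($p_j\geq 1$) and the reverse orthant ($p_j\leq 1$, with strictly positive test functions) to require separate but parallel treatment, with the reverse case needing the usual positivity caveats already visible in the definition of $\mathcal{R}(X;Y)$; the data processing and semi-continuity steps are then routine.
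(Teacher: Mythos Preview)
Your proposal is correct and follows exactly the approach the paper intends: the paper itself does not prove this observation in detail but simply asserts that ``tensorization, data processing and appropriate semi-continuity properties continue to hold for $\mathcal{H}(X_1;X_2;\ldots;X_k)$,'' and your plan fills in precisely those three steps and chains them as in Observation~\ref{observation:mc-machine}. One minor simplification: for the H\"older-contractive formulation of $\mathcal{H}$, tensorization does not actually require Minkowski's inequality; you can iterate the single-block inequality directly, just as the paper does in Appendix~\ref{subsec:s_p_properties} for the two-variable case (condition on one block, apply the inequality in the other block to get a product of conditional $L^{p_j}$-norms, then apply the inequality in the first block to those norms, and note that the iterated $L^{p_j}$-norm over independent blocks equals the joint $L^{p_j}$-norm).
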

\end{adjustwidth}
\vspace{0.1in}




Similarly, using maximal correlation, we can make the following observation:

\vspace{0.1in}
\begin{adjustwidth}{0.6in}{0.6in}
\begin{observation}
  Non-interactive simulation of $(U_1,U_2,\ldots, U_k)\sim
  Q(u_1,u_2,\ldots, u_k)$ using $(X_1,X_2,\ldots, X_k)\sim
  P(x_1,x_2,\ldots, x_k)$ is possible only if for all non-empty
  subsets $S_1,S_2\subseteq \{1,2,\ldots, k\},$ we have
  $\rho_m(X_{S_1};X_{S_2}) \geq \rho_m(U_{S_1};U_{S_2}).$
\end{observation}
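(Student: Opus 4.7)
The proof will mirror the argument used for Observation~\ref{observation:mc-machine}, treating the tuple $X_{S_j}$ as a single vector-valued random variable. Concretely, assume non-interactive simulation is possible, and fix a sequence $k_n$, finite alphabets $\mathcal{R}_n$, and functions $f_{i,n}:\mathcal{X}_i^{k_n}\times \mathcal{R}_n\to \mathcal{U}_i$ such that $U_{i,n}=f_{i,n}(X_i^{k_n},M_i)$ yields $(U_{1,n},\ldots,U_{k,n})\to (U_1,\ldots,U_k)$ in total variation. Fix two non-empty subsets $S_1,S_2\subseteq\{1,\ldots,k\}$. The goal is to bound $\rho_m(U_{S_1,n};U_{S_2,n})$ by $\rho_m(X_{S_1};X_{S_2})$ for each $n$, and then pass to the limit using lower semi-continuity.

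First I would dispense with the overlapping case $S_1\cap S_2\neq \emptyset$. If some index $i$ lies in both subsets, then $X_i$ appears in both $X_{S_1}$ and $X_{S_2}$, and similarly $U_i$ appears in both $U_{S_1}$ and $U_{S_2}$. Plugging any non-constant zero-mean unit-variance function of $X_i$ (respectively $U_i$) into both arguments of the maximal correlation supremum yields $\rho_m(X_{S_1};X_{S_2})=\rho_m(U_{S_1};U_{S_2})=1$ whenever the relevant components are non-degenerate, so the inequality is trivial. (The degenerate sub-cases where a shared coordinate is almost surely constant reduce to the disjoint situation.)

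For disjoint $S_1,S_2$, the three key properties of $\rho_m$ recorded in Section~\ref{subsec:mc} extend verbatim to vector-valued arguments. The chain of inequalities is
\begin{align}
\rho_m(U_{S_1,n};U_{S_2,n}) &\leq \rho_m\!\left(X_{S_1}^{k_n},M_{S_1};\, X_{S_2}^{k_n},M_{S_2}\right) \\
&= \max\Bigl\{\,\max_{1\leq j\leq k_n}\rho_m(X_{S_1,j};X_{S_2,j}),\;\rho_m(M_{S_1};M_{S_2})\Bigr\} \\
&= \rho_m(X_{S_1};X_{S_2}),
\end{align}
where the first line uses the data processing inequality with the maps $f_{i,n}$ applied coordinate-wise within each block, the second line applies tensorization to the independent pairs $\{(X_{S_1,j},X_{S_2,j})\}_{j=1}^{k_n}$ together with the independent pair $(M_{S_1},M_{S_2})$ (independence of the $M_i$'s across $i$, combined with $S_1\cap S_2=\emptyset$, gives $\rho_m(M_{S_1};M_{S_2})=0$), and the third line uses that each $(X_{S_1,j},X_{S_2,j})$ has the same law as $(X_{S_1},X_{S_2})$.

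Finally, since $(U_{S_1,n},U_{S_2,n})\to (U_{S_1},U_{S_2})$ in total variation, the lower semi-continuity of $\rho_m$ gives
\[
\rho_m(U_{S_1};U_{S_2}) \leq \liminf_{n\to\infty}\rho_m(U_{S_1,n};U_{S_2,n}) \leq \rho_m(X_{S_1};X_{S_2}),
\]
which is the claimed inequality. I do not anticipate a genuine obstacle here: every ingredient is a direct analog of the two-agent proof, and the only delicate point is the careful disjointness bookkeeping for the private randomness $M_{S_1},M_{S_2}$, which is handled precisely because the $M_i$'s were chosen mutually independent.
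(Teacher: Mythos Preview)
Your proposal is correct and is precisely the argument the paper has in mind: the paper does not actually write out a proof of this observation, merely introducing it with ``Similarly, using maximal correlation, we can make the following observation,'' so the intended justification is exactly the vector-valued replay of the proof of Observation~\ref{observation:mc-machine} that you give. Your extra care in separating the overlapping case $S_1\cap S_2\neq\emptyset$ and in checking that $\rho_m(M_{S_1};M_{S_2})=0$ for disjoint index sets goes slightly beyond what the paper spells out, but is in the same spirit.
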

\end{adjustwidth}
\vspace{0.1in}



\begin{example}
  We define the following distributions of \emph{DSBS triples} as
  shown in Fig.~\ref{fig:DSBStriple}. For chosen $0\leq
  \epsilon_X,\epsilon_Y, \epsilon_Z<\frac 12,$ we define $(X,Y,Z)\sim
  \DSBStriple(\epsilon_X,\epsilon_Y,\epsilon_Z)$ as the unique triple
  joint distribution satisfying $(Y,Z)\sim\DSBS(\epsilon_X),
  (X,Y)\sim\DSBS(\epsilon_Z), (X,Z)\sim\DSBS(\epsilon_Y)$ (note that
  there are two such distributions if
  $\epsilon_X=\epsilon_Y=\epsilon_Z=\frac{1}{2}).$ Such a distribution
  exists as long as the triangle inequalities
  $\epsilon_X+\epsilon_Y\geq \epsilon_Z, \epsilon_X+\epsilon_Z\geq
  \epsilon_Y, \epsilon_Z+\epsilon_Y\geq \epsilon_X$ are satisfied and
  the joint distribution of $(X,Y,Z)$ is given by:

  \begin{figure}[h]
    \begin{center}
      \includegraphics[width=3in, height=!]{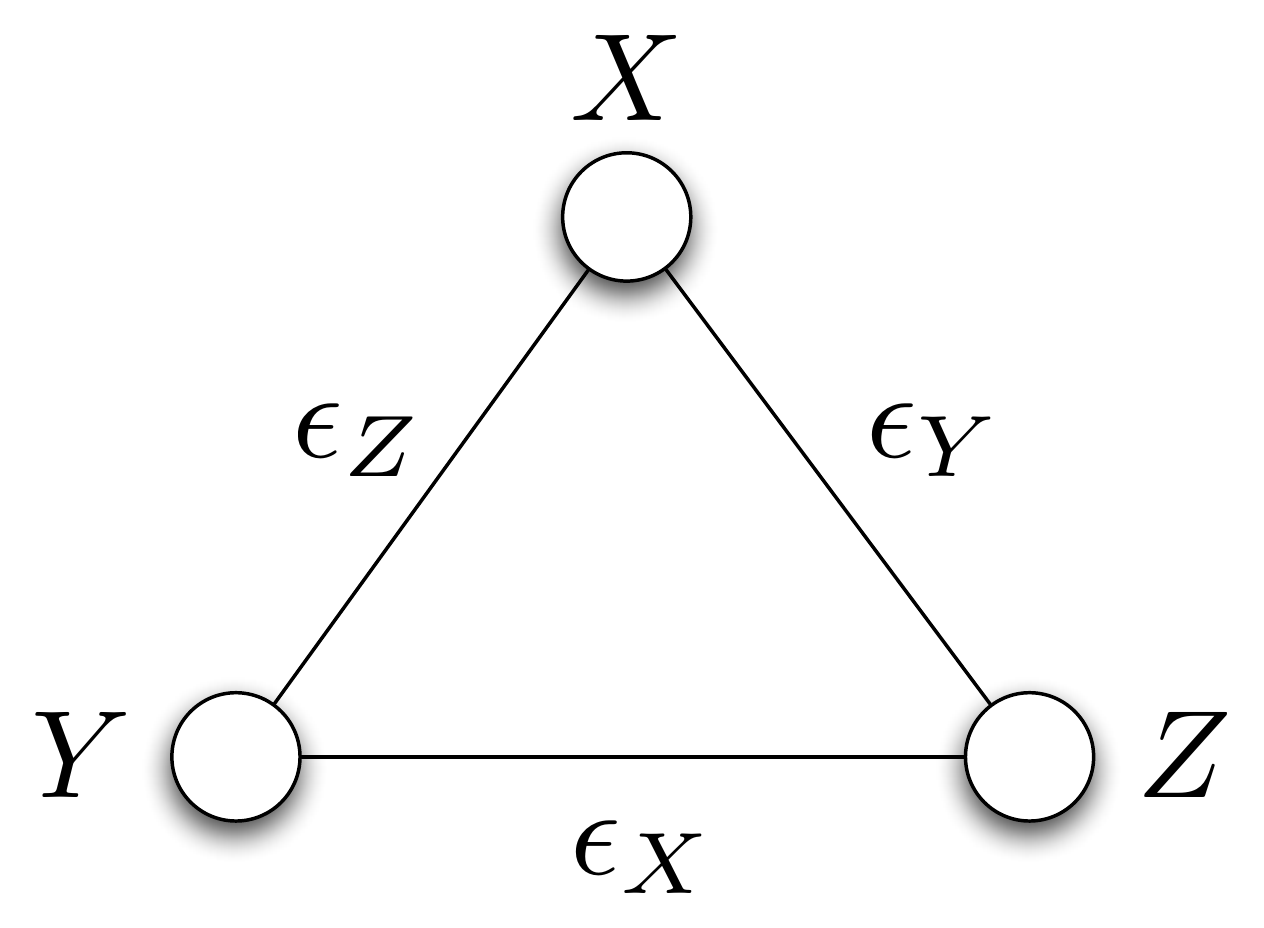}
      \caption{$(X,Y,Z)\sim \DSBStriple(\epsilon_X,\epsilon_Y,\epsilon_Z)$}
      \label{fig:DSBStriple}
    \end{center}
  \end{figure}

  \begin{align}
    P_{X,Y,Z}(0,0,0) = P_{X,Y,Z}(1,1,1) & = 
    \frac{2-\epsilon_X-\epsilon_Y-\epsilon_Z}{4} \\
    P_{X,Y,Z}(0,0,1) = P_{X,Y,Z}(1,1,0) & =
    \frac{\epsilon_X+\epsilon_Y-\epsilon_Z}{4} \\
    P_{X,Y,Z}(0,1,0) = P_{X,Y,Z}(1,0,1) & =
    \frac{\epsilon_X-\epsilon_Y+\epsilon_Z}{4} \\
    P_{X,Y,Z}(0,1,1) = P_{X,Y,Z}(1,0,0) & = 
    \frac{-\epsilon_X+\epsilon_Y+\epsilon_Z}{4}.
  \end{align}

  If either $A$ or $B$ is binary-valued, then one can simply write
  \cite{Witsenhausen75}
  \begin{align}
    \rho_m^2(A;B) = -1+\sum_{a,b} \frac{p_{A,B}(a,b)^2}{p_{A}(a)p_{B}(b)}.
  \end{align}
  Using this simple formula, we find that the various maximal
  correlation terms for $(X,Y,Z)\sim\DSBStriple(\epsilon_X,\epsilon_Y,
  \epsilon_Z)$ are given by:

  \begin{align}
    \rho_m(X;Y)  & = 1-2\epsilon_Z, \label{eq:start}\\
    \rho_m(X;Y,Z) & =
    \sqrt{\frac{(\epsilon_Y-\epsilon_Z)^2}{\epsilon_X}+\frac{(1-\epsilon_Y-\epsilon_Z)^2}{1-\epsilon_X}}.\label{eq:end}
  \end{align}

  Now, consider the following three-agent non-interactive simulation
  problem. Agents Alice, Bob, and Charlie observe $X^n, Y^n, Z^n$
  respectively and output (as a function of their observations and
  their private randomness) $\tilde{U}, \tilde{V}, \tilde{W}$
  respectively, which is required to be close in total variation to
  the target distribution $(U,V,W)$ as shown in Fig.~\ref{fig:three}.

  \begin{figure}[h]
    \begin{center}
      \includegraphics[width = 2in,
      height=!]{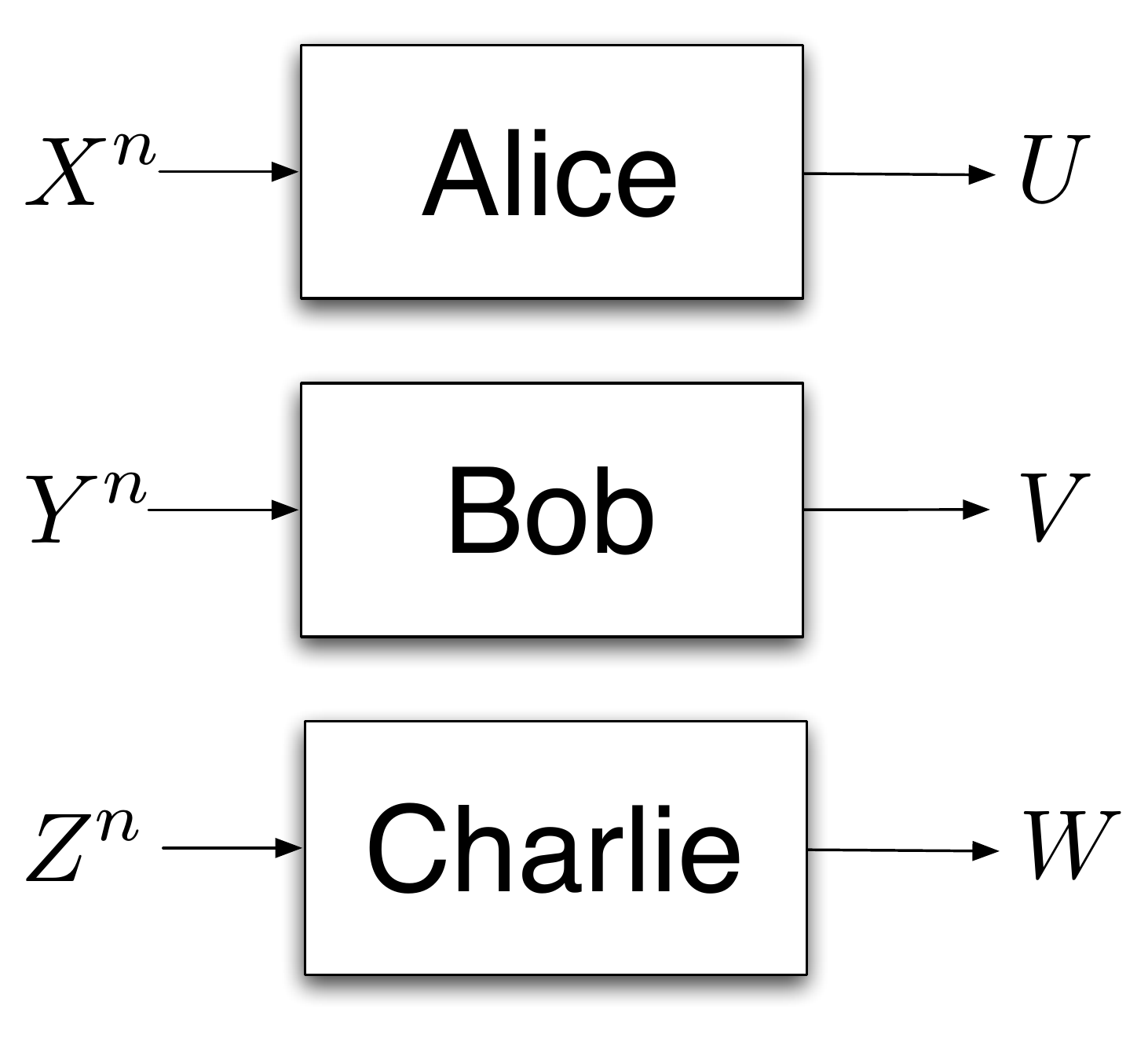}
      \caption{Three-user non-interactive simulation problem}
      \label{fig:three}
    \end{center}
  \end{figure}

  Suppose that for some $\epsilon<\frac{1}{2},$ the source and target
  distributions are specified by
  $(X,Y,Z)\sim\DSBStriple(\epsilon,\epsilon,\epsilon)$ and
  $(U,V,W)\sim\DSBStriple(\epsilon,2\epsilon(1-\epsilon),\epsilon)$ as
  shown in Fig.~\ref{fig:example}. In Section~\ref{subsec:example1},
  we pointed out that for a two-agent problem, non-interactive
  simulation of a DSBS target distribution with parameter
  $\beta<\frac{1}{2}$ using a DSBS source distribution with parameter
  $\alpha<\frac{1}{2}$ is possible if and only if the target
  distribution is more noisy, i.e. $\alpha\leq \beta.$ Thus, for this
  example, each pair of agents can perform the marginal pair
  simulation desired of them. However, the three agents cannot
  simulate the desired triple joint distribution.

\begin{figure}[h]
  \begin{center}
    \subfigure[Source distribution]{\includegraphics[viewport=10 8 354 250, width=1.6in, height=!]{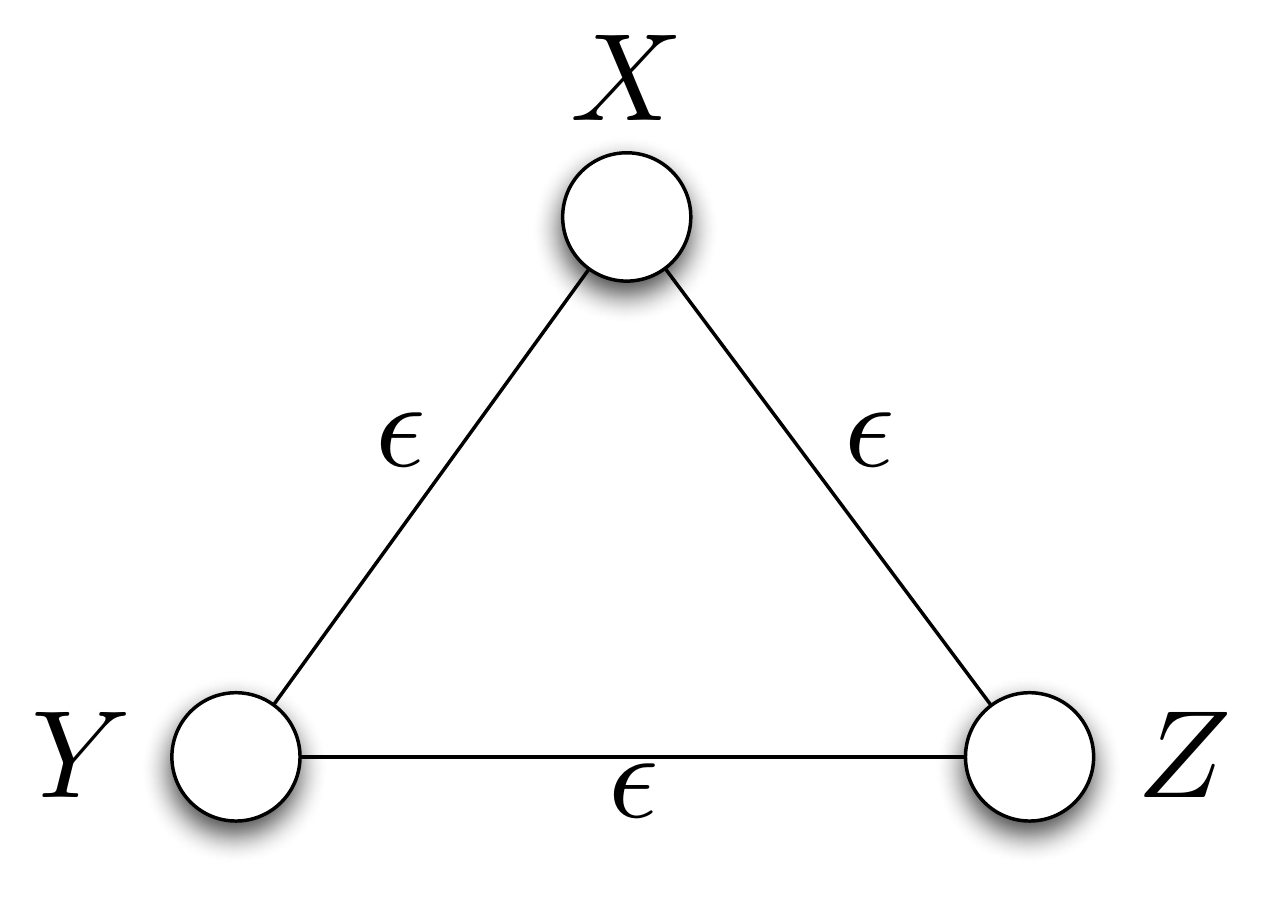}}\hspace{0.1in}
    \subfigure[Target distribution]{\includegraphics[viewport=10 10 360 255, width=1.6in, height=!]{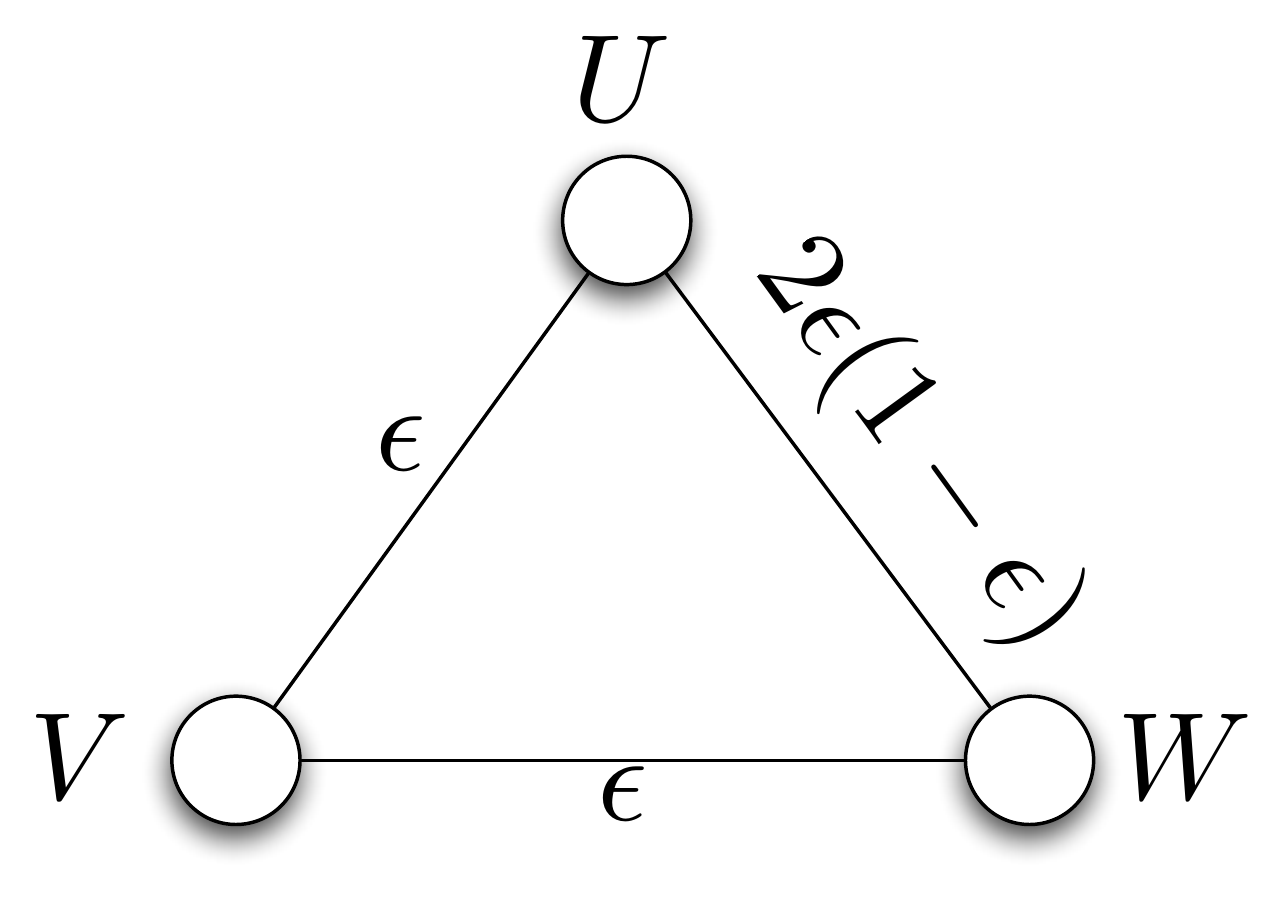}}
    \caption{Three random variable simulation example: Every pair of
      agents can achieve the desired simulation but the triple cannot.}
    \label{fig:example}
  \end{center}
\end{figure}


Using the formula \eqref{eq:end}, we get
\begin{align}
\rho_m(X,Z;Y) & = \frac{1-2\epsilon}{\sqrt{1-\epsilon}}\, , \\
\rho_m(U,W;V) & = \frac{1-2\epsilon}{\sqrt{1-2\epsilon+2\epsilon^2}}\,
.
\end{align}
For $0<\epsilon<\frac{1}{2},$ we have
$1-2\epsilon+2\epsilon^2<1-\epsilon,$ which gives
$\rho_m(X,Z;Y)<\rho_m(U,W;V).$ This shows that even if agents Alice
and Charlie were to combine their observations and their random
variable generation tasks to form one agent Alice-Charlie, then
Alice-Charlie and Bob cannot achieve the desired non-interactive
simulation.
\end{example}

\begin{example}
  Consider the following choices of source distribution $P(x,y,z)$ and
  target distribution $Q(u,v,w).$ 
  
  \begin{align*}
  P(x,y,z) = \begin{cases} 
    a_0 & \mbox{ if } (x,y,z) = (0,0,0), \\
    a_2 & \mbox{ if } (x,y,z) = (0,1,1), (1,0,1), (1,1,0),
\end{cases}
\end{align*}

where
\begin{align}
  \label{eq:0}
   a_0+3a_2=1,
\end{align}
 i.e. $(X,Y,Z)$ take values on the 4 sequences that
satisfy $X\oplus Y\oplus Z = 0$ (addition modulo 2).

\begin{align*}
  Q(u,v,w) = \begin{cases} 
    b_0 & \mbox{ if } (u,v,w) = (0,0,0),\\
    b_1 & \mbox{ if } (u,v,w) = (0,0,1), (0,1,0), (1,0,0),\\
    b_2 & \mbox{ if } (u,v,w) = (0,1,1), (1,0,1), (1,1,0),\\
    b_3 & \mbox{ if } (u,v,w) = (1,1,1).
  \end{cases}
\end{align*}

We will choose these parameters so that for some $0<\gamma<1,$ we have 

\begin{align}
  b_0 + b_1 & = a_0 + 2a_2\gamma + a_2\gamma^2,\label{eq:1} \\
  b_1 + b_2 & = a_2(1-\gamma^2),  \label{eq:2}\\
  b_2 + b_3 & = a_2(1-\gamma)^2~. \label{eq:3}
\end{align}

Consider the question of whether $(U, V, W)$ can be simulated from
$(X, Y, Z).$ For simulation of pair $(U,V)$ from $(X,Y),$ note that if
$A_1, A_2\sim\Ber(\gamma)$ i.i.d. and mutually independent of $(X,Y),$
then $$\left(X\oplus (A_1\cdot 1_{X=1}), Y\oplus (A_2\cdot 1_{Y=1})\right)\stackrel{d}{=} (U,V)$$ because of conditions
\eqref{eq:1}, \eqref{eq:2}, \eqref{eq:3}. By symmetry then, every pair
of agents can achieve the desired simulation.

Now, if we imagine two agents observe $(X,Y)$ and $Z$ respectively and
are required to simulate $(U,V)$ and $W$ respectively, then again this
is possible since $(X,Y)$ uniquely determines $Z,$ so the agents now
have access to shared randomness which can be used to generate any
required joint distribution.

However, consider the specific choice:
\begin{align*}
  a_0 = 0.825, \gamma = 0.2, b_0 = 0.8,
\end{align*}
so that the other parameters are fixed from \eqref{eq:0},
\eqref{eq:1}, \eqref{eq:2}, \eqref{eq:3} to be:
\begin{align*}
a_2=0.058333...,\qquad b_1 = 0.0506666...,\qquad b_2 = 0.005333...,\qquad
  b_3 = 0.032~.
\end{align*}

Here, we find computationally that
\begin{align}
  \kappa & :=\inf\{p\geq 1: (p,p,p)\in\mathcal{H}(X;Y;Z)\} = 1.93...;   \label{eq:kappa}\\
  \zeta & :=\inf\{p\geq 1: (p,p,p)\in\mathcal{H}(U;V;W)\} = 2.07....~.    \label{eq:zeta}
\end{align}

We present numerical evidence supporting the above
claims. Specifically, we will show that $1.85<\kappa<1.95$ and
$\zeta>2.05.$

Using H\"{o}lder's inequality, it is easy to verify that the following
two statements are equivalent:
\begin{align}
\mathbb{E}f(X)g(Y)h(Z) & \leq ||f(X)||_p||g(Y)||_p||h(Z)||_p,\ \forall
  f:\mathcal{X}\to\mathbb{R}, g:\mathcal{Y}\to\mathbb{R},
  h:\mathcal{Z}\to\mathbb{R},  \\
  ||\mathbb{E}[f(X)g(Y)|Z]||_{p^\prime} & \leq ||f(X)||_p ||g(Y)||_p,\
                                          \forall
                                          f:\mathcal{X}\to\mathbb{R},
                                          g:\mathcal{Y}\to\mathbb{R},
\end{align}
and furthermore, equivalently, all functions above may have co-domain
$\mathbb{R}_{\geq 0}.$ We choose $f(x) = (1+f)1_{x=1}+(1-f)1_{x=0}$
and $g(y) = (1+g)1_{y=1} + (1-g)1_{y=0}.$ It suffices to consider
functions of this form since the inequalities above are
homogeneous. Fig.~\ref{fig:proof_holder_contractivity} shows contour
plots of the ratio
$\frac{||\mathbb{E}[f(X)g(Y)|Z]||_{p^\prime}}{||f(X)||_p ||g(Y)||_p}$
where the X-axis represents the variable $f\in [-1,1]$ and the Y-axis
represents the variable $g\in [-1,1].$ For $p=1.95,$ the ratio is
upper-bounded by 1, whereas for $p=1.85,$ the ratio takes the value
$1.0088...$ at $f=g=-1.$ (Note that the color bar in
Fig.~\ref{fig:proof_holder_contractivity} has a maximum value of $1.0$
for $p=1.95$ and a maximum value of a little greater than $1.0$ for
$p=1.85.$) Thus, $(1.95,1.95,1.95)\in\mathcal{H}(X;Y;Z)$ but
$(1.85,1.85,1.85)\not\in\mathcal{H}(X;Y;Z)$ and so,
$1.85<\kappa<1.95.$

\begin{figure}[t]
  \begin{center}
    \subfigure[$p = 1.95$]{\includegraphics[viewport=10 8 354 250, width=2.8in, height=!]{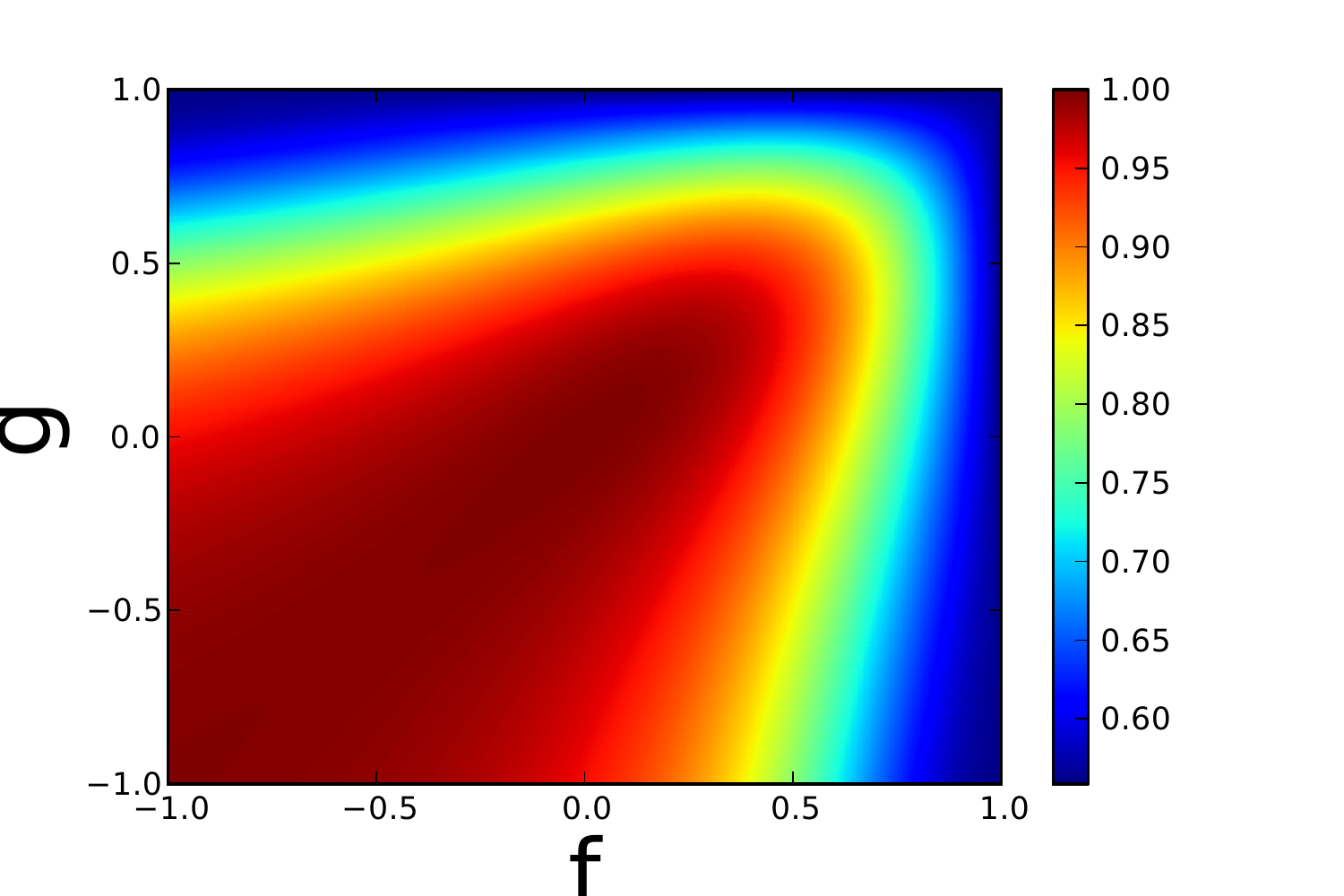}}\hspace{0.45in}
    \subfigure[$p = 1.85$]{\includegraphics[viewport=10 10 360 255, width=2.8in, height=!]{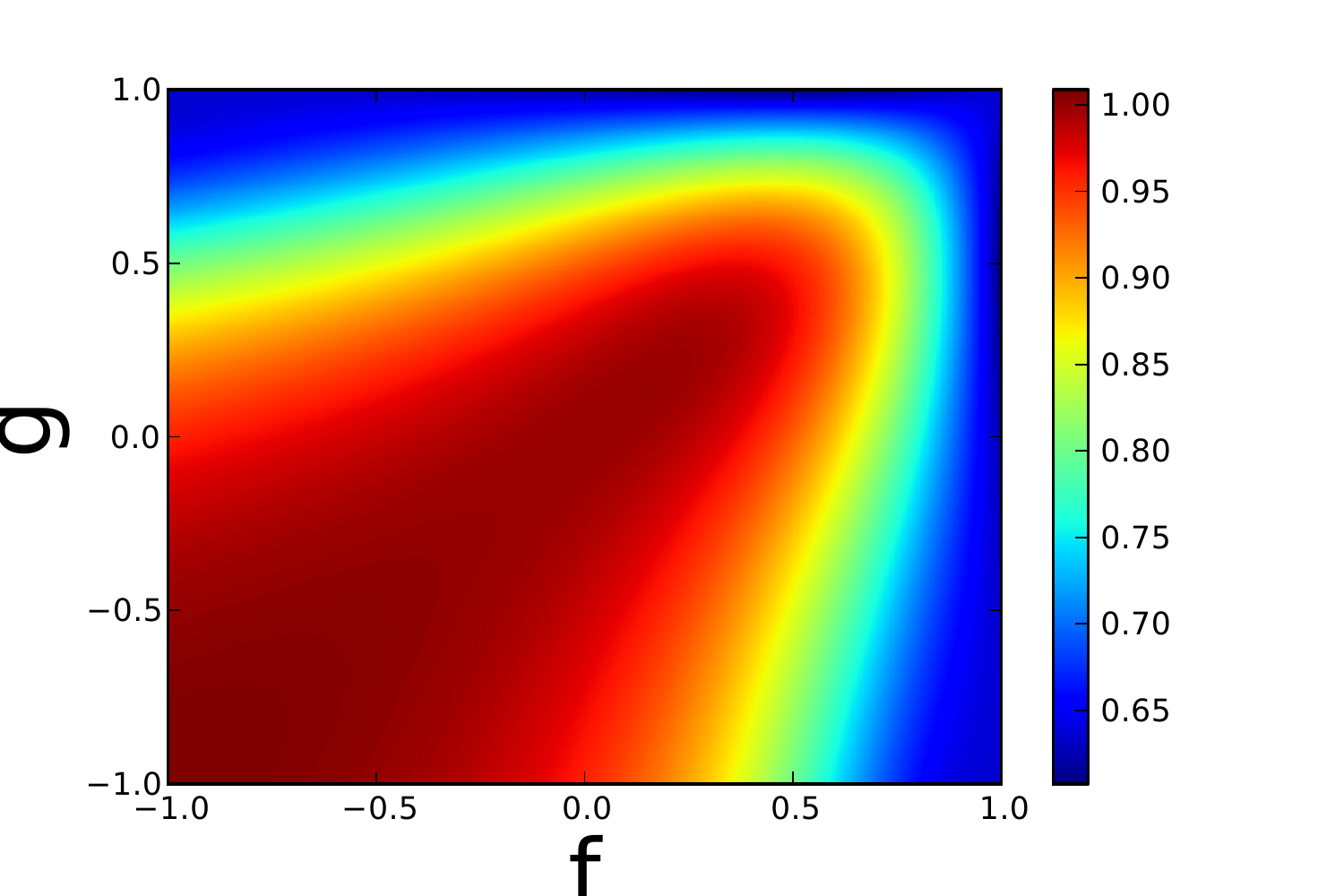}}
    \caption{Contour plots of the ratio
      $\frac{||\mathbb{E}[f(X)g(Y)|Z]||_{p^\prime}}{||f(X)||_p
        ||g(Y)||_p}$
      where $f(x) = (1+f)1_{x=1}+(1-f)1_{x=0}$ and
      $g(y) = (1+g)1_{y=1} + (1-g)1_{y=0}.$ The X-axis represents the
      variable $f\in [-1,1]$ and the Y-axis represents the variable
      $g\in [-1,1].$ We see numerically that for $p=1.95,$ the ratio
      is upper bounded by $1$ everywhere, but for $p=1.85,$ the ratio
      is maximized at $f=g=-1$ where it takes the value $1.0088...$
      This implies that $(1.95,1.95,1.95)\in\mathcal{H}(X;Y;Z)$ but
      $(1.85,1.85,1.85)\not\in\mathcal{H}(X;Y;Z).$}
    \label{fig:proof_holder_contractivity}
  \end{center}
\end{figure}

Now, consider the function
$\delta(\theta) = 9\cdot 1_{\theta=1} + 1_{\theta=0}.$ Then,

\begin{align}
  \mathbb{E}\delta(U) \delta (V) \delta (W) & = 26.792 \\
  ||\delta(U)||_{2.05}||\delta(V)||_{2.05}||\delta(W)||_{2.05} & = \left(||\delta(U)||_{2.05}\right)^3
                                                  = (2.9747...)^3 =
                                                  26.322... < 26.792.                                        
\end{align}
This proves that $(2.05,2.05,2.05)\not\in\mathcal{H}(U;V;W)$ and so,
$\zeta>2.05.$

Since $\kappa<1.95$ and $\zeta>2.05,$ the inclusion
$\mathcal{H}(X;Y;Z) \subseteq \mathcal{H}(U;V;W)$ is false and so, the
simulation of $(U,V,W)$ from $(X,Y,Z)$ is impossible.

\end{example}

\section{Acknowledgements}

We would like to thank Jingbo Liu, Elchanan Mossel and Vinod Prabhakaran for
useful discussions.

\vspace{0.1in} 

Research support from the ARO MURI grant W911NF-08-1-0233, ``Tools
for the Analysis and Design of Complex Multi-Scale Network'', from the
NSF grant CNS-0910702 and ECCS-1343398, from the NSF Science and Technology Center
grant CCF-0939370,``Science of Information'', from Marvell
Semiconductor Inc., and from the U.C. Discovery program is gratefully
acknowledged.

\bibliographystyle{IEEE} 
\bibliography{non_interactive_simulation_biblio}

\appendix
\subsection{Proof of the claimed properties of $\rho_m$}
\label{subsec:rho_m_properties}

In this subsection, we prove the claimed properties of maximal correlation.

\begin{itemize}
\item (data processing inequality) For any functions $\phi, \psi,$
  $\rho_m(X;Y)\geq \rho_m(\phi(X),\psi(Y)).$

  \emph{Proof:} This is straightforward from the definition of $\rho_m.$

\item (tensorization) If $(X_1,Y_1)$ and $(X_2,Y_2)$ are independent, then
  $\rho_m(X_1,X_2;Y_1,Y_2)=\max\{\rho_m(X_1;Y_1),\rho_m(X_2;Y_2)\}.$

  \emph{Proof:} This property was shown by Witsenhausen
  \cite{Witsenhausen75}. The following exposition of Witsenhausen's
  proof is by Kumar \cite{Kumar10}. If we define
  $|\mathcal{X}|\times|\mathcal{Y}|$ matrices $P,Q$ by
  $P_{x,y} = P(x,y)$ and $Q_{x,y} = \frac{P(x,y)}{\sqrt{P(x)P(y)}},$
  then the top two singular values of $Q$ are $\sigma_1(Q)=1$ and
  $\sigma_2(Q) = \rho_m(X;Y)$ (for proof, see \cite{Kumar10}). The
  tensorization property then follows from the fact that the singular
  values of the tensor product of two matrices $A\otimes B$ are given
  by $\sigma_i(A)\sigma_j(B).$
  
  \item (Lower semi-continuity) If the space of probability
  distributions on $\mathcal{X}\times\mathcal{Y}$ is endowed with the
  total variation distance metric, then $\rho_m(X;Y)$ is a lower
  semi-continuous function of the joint distribution $P(x,y).$

  \emph{Proof:} Suppose $(X,Y), (X_1,Y_1), (X_2,Y_2),\ldots $ are
  random variable pairs taking values in the finite set
  $\mathcal{X}\times\mathcal{Y}$ satisfying
  $d_{\mathrm{TV}}((X_n,Y_n);(X,Y))\to 0$ as $n\to\infty.$ We will
  show that $\rho:=\liminf_{n\to\infty} \rho_m(X_n;Y_n) \geq
  \rho_m(X;Y).$ Let $\{j_n\}_{n=1}^\infty$ be a subsequence so that
  $\rho=\lim_{n\to\infty} \rho_m(X_{j_n};Y_{j_n}).$

  For any $\epsilon>0,$ there exists a $j(\epsilon)$ such that
  $\rho_m(X_{j_n};Y_{j_n})\leq \rho+\epsilon$ for all $j_n\geq
  j(\epsilon).$ Fix any functions $f:\mathcal{X}\mapsto\mathbb{R},
  g:\mathcal{Y}\mapsto\mathbb{R}$ such that
  $\mathbb{E}f(X)=\mathbb{E}g(Y)=0$ and $\mathbb{E}f(X)^2,
  \mathbb{E}g(Y)^2\leq 1.$ We will show $\mathbb{E}f(X)g(Y)\leq \rho$
  which will complete the proof.

  If $\mathbb{E}f(X)^2=0$ or $\mathbb{E}g(Y)^2=0,$ there is nothing to
  prove. So, suppose $\mathbb{E}f(X)^2,\mathbb{E}g(Y)^2>0.$ Since
  $\mathcal{X}\times\mathcal{Y}$ is a finite set,
  $d_{\mathrm{TV}}((X_{j_n},Y_{j_n});(X,Y))\to 0$ implies that
  $\Var(f(X_{j_n}))\to\Var(f(X))>0, \Var(g(Y_{j_n}))\to\Var(g(Y))>0.$
  There exists $j(f,g)$ such that $\Var(f(X_{j_n}))\geq
  \frac{\Var(f(X))}{2}, \Var(g(Y_{j_n})) \geq \frac{\Var(g(Y))}{2}$ for
  all $j\geq n(f,g).$

  Define for $j_n\geq \max\{j(\epsilon), j(f,g)\}$ the functions
  $f_{j_n}:\mathcal{X}\mapsto\mathbb{R}, g_{j_n}:\mathcal{Y}\mapsto\mathbb{R}$
  given by

  \begin{align}
    f_{j_n}(X)=\frac{f(X_{j_n})-\mathbb{E}f(X_{j_n})}{\sqrt{\Var(f(X_{j_n}))}}~, \\
    g_{j_n}(Y)=\frac{g(Y_{j_n})-\mathbb{E}g(Y_{j_n})}{\sqrt{\Var(g(Y_{j_n}))}}~,
  \end{align}
  which is possible since for such $j_n$ we have
  $\Var(f(X_{j_n})),\Var(g(Y_{j_n}))>0.$

  Again, we will have
  $\mathbb{E}f_{j_n}(X)g_{j_n}(Y)\to
  \frac{\mathbb{E}f(X)g(Y)}{\sqrt{\mathbb{E}f(X)^2\mathbb{E}g(Y)^2}}\geq
  \mathbb{E}f(X)g(Y).$
  But by definition, we have for $j_n\geq \max\{j(\epsilon), j(f,g)\}$
  that
  $\mathbb{E}f_{j_n}(X)g_{j_n}(Y)\leq \rho_m(X_{j_n};Y_{j_n})\leq
  \rho+\epsilon.$
  This gives $\mathbb{E}f(X)g(Y)\leq \rho+\epsilon.$ Since
  $\epsilon>0$ was arbitrary, we have $\mathbb{E}f(X)g(Y)\leq \rho.$

\end{itemize}

\subsection{Proof of the claimed properties of $s_p$}
\label{subsec:s_p_properties}

In this subsection, we prove the claimed properties of $s_p$ for
$p\neq 1.$

\begin{itemize}
\item (data processing inequality) For any functions $\phi, \psi,$
  $s_p(X;Y)\geq s_p(\phi(X);\psi(Y)).$

  \emph{Proof:} Let $W = \phi(X), Z = \psi(Y).$ Suppose for $1\leq
  q\leq p,$ we have $||\mathbb{E}[g(Y)|X]||_p\leq ||g(Y)||_q$ for all
  functions $g:\mathcal{X}\mapsto\mathbb{R}.$ For any function of $Z,$
  say $\theta(Z),$ we have

  \begin{align}
    ||\mathbb{E}[\theta(Z)|W]||_p & =
    ||\mathbb{E}[\theta(\psi(Y))|\phi(X)]||_p \\
    & \stackrel{(a)}{=} ||\mathbb{E}[\mathbb{E}[\theta(\psi(Y))|X]|\phi(X)]||_p \\
    & \stackrel{(b)}{\leq} ||\mathbb{E}[\theta(\psi(Y))|X]||_p \\
    & \leq ||\theta(\psi(Y))||_q \\
    & = ||\theta(Z)||_q,
  \end{align}
  where (a) follows from successive conditioning and (b) follows from
  Jensen's inequality: $||\mathbb{E}[A|\phi(X)]||_p\leq ||A||_p.$
  Similarly, we can deal with the case $1\geq q\geq p.$ This completes
  the proof.

\item (tensorization) If $(X_1,Y_1)$ and $(X_2,Y_2)$ are independent,
  then $s_p(X_1,X_2;Y_1,Y_2)=\max\{s_p(X_1;Y_1),s_p(X_2;Y_2)\}.$

  \emph{Proof:} Suppose $(X_1,Y_1)\sim P_1(x_1,y_1)$ and
  $(X_2,Y_2)\sim P_2(x_2,y_2)$ are both $(p,q)$-hypercontractive, with
  $p<1, p\neq 0.$ We remark that for the case of $p=0,$ we take limits
  in the standard way. Then,
  \begin{align}
    \mathbb{E}f(X_1)g(Y_1) & \geq ||f(X_1)||_{p^\prime}||g(Y_1)||_q \
    \forall\ f:\mathcal{X}_1\mapsto\mathbb{R}_{>0},\ \ \forall\ g:\mathcal{Y}_1\mapsto\mathbb{R}_{>0};\label{eq:X1Y1-hc}\\
    \mathbb{E}f(X_2)g(Y_2) & \geq ||f(X_2)||_{p^\prime}||g(Y_2)||_q \
    \forall\ f:\mathcal{X}_2\mapsto\mathbb{R}_{>0},\ \ \forall\ g:\mathcal{Y}_2\mapsto\mathbb{R}_{>0}.\label{eq:X2Y2-hc}    
  \end{align}
  
  Now, fix any positive-valued functions
  $f:\mathcal{X}_1\times\mathcal{X}_2\mapsto\mathbb{R}_{>0},
  g:\mathcal{Y}_1\times\mathcal{Y}_2\mapsto\mathbb{R}_{>0}.$

  \begin{align}
    \mathbb{E}f(X_1,X_2)g(Y_1,Y_2) & = \sum_{x_1,y_1}
    P_1(x_1,y_1) \sum_{x_2,y_2} P_2(x_2,y_2) f(x_1,x_2)g(y_1,y_2) \\
    & \stackrel{(a)}{\geq} \sum_{x_1,y_1} P_1(x_1,y_1)
    \left(\sum_{x_2}P_{X_2}(x_2)f(x_1,x_2)^{p^\prime}\right)^{\frac{1}{p^\prime}}\left(\sum_{y_2}P_{Y_2}(y_2)g(y_1,y_2)^{q}\right)^{\frac{1}{q}} \\
    & \stackrel{(b)}{\geq}
    \left(\sum_{x_1,x_2}P_{X_1}(x_1)P_{X_2}(x_2)f(x_1,x_2)^{p^\prime}\right)^{\frac{1}{p^\prime}}\left(\sum_{y_1,y_2}P_{Y_1}(y_1)P_{Y_2}(y_2)g(y_1,y_2)^{q}\right)^{\frac{1}{q}}\\
    & = ||f(X_1,X_2)||_{p^\prime}||g(Y_1,Y_2)||_q,
  \end{align}
  where (a) follows from \eqref{eq:X2Y2-hc} and (b) follows from
  \eqref{eq:X1Y1-hc}. This means $((X_1,X_2),(Y_1,Y_2))$ is
  $(p,q)$-hypercontractive. It is easy to see that if one of
  $(X_1,Y_1)$ or $(X_2,Y_2)$ is not $(p,q)$-hypercontractive, then
  $((X_1,X_2),(Y_1,Y_2))$ is not $(p,q)$-hypercontractive. Thus,
  $$q_p^*(X_1,X_2;Y_1,Y_2) = \min\{q_p^*(X_1;Y_1),q_p^*(X_2;Y_2)\},$$
  which gives $$s_p(X_1,X_2;Y_1,Y_2) =
  \max\{s_p(X_1;Y_1),s_p(X_2;Y_2)\}.$$

  For $p>1,$ the proof is similar; in this case, we
  find $$q_p^*(X_1,X_2;Y_1,Y_2) =
  \max\{q_p^*(X_1;Y_1),q_p^*(X_2;Y_2)\},$$ and $$s_p(X_1,X_2;Y_1,Y_2)
  = \max\{s_p(X_1;Y_1),s_p(X_2;Y_2)\}.$$
\item (lower semi-continuity) If the space of probability
  distributions on $\mathcal{X}\times\mathcal{Y}$ is endowed with the
  total variation distance metric, then $s_p(X;Y)$ is a lower
  semi-continuous function of the joint distribution $P(x,y).$

  \emph{Proof:} Let us fix $p<1.$ An identical proof holds for the
  case of $p>1.$ Suppose $(X,Y), (X_1,Y_1), (X_2,Y_2),\ldots $ are
  random variable pairs taking values in the finite set
  $\mathcal{X}\times\mathcal{Y}$ satisfying
  $d_{\mathrm{TV}}((X_n,Y_n);(X,Y))\to 0$ as $n\to\infty.$ Let $s :=
  \liminf_{n\to\infty} s_p(X_n;Y_n)\geq 0.$ We will show that $s\geq
  s_p(X;Y).$ Let $\{j_n\}_{n=1}^\infty$ be a subsequence so that
  $s=\lim_{n\to\infty} s_p(X_{j_n};Y_{j_n}).$

  We may assume without loss of generality that $s<1.$ For any
  $\epsilon>0,$ there exists a $j(\epsilon)$ such that
  $s_p(X_{j_n};Y_{j_n})\leq s+\epsilon$ for all $j_n\geq j(\epsilon).$
  We would like to show $s_p(X;Y)\leq s,$ i.e., that for any functions
  $f:\mathcal{X}\mapsto\mathbb{R}_{>0},
  g:\mathcal{Y}\mapsto\mathbb{R}_{>0},$ the following holds:
  \begin{align}\mathbb{E}f(X)g(Y) & \geq
    ||f(X)||_{p^\prime}||g(Y)||_{1+s(p-1)}. \end{align}
  For any given functions $f:\mathcal{X}\mapsto\mathbb{R}_{>0},
  g:\mathcal{Y}\mapsto\mathbb{R}_{>0},$ and any $j_n\geq j(\epsilon),$
  we have from $s_p(X_{j_n};Y_{j_n})\leq s+\epsilon$ that for $j_n\geq n(\epsilon),$
  \begin{align}\mathbb{E}f(X_{j_n})g(Y_{j_n}) & \geq
                                                ||f(X_{j_n})||_{p^\prime}||g(Y_{j_n})||_{1+(s+\epsilon)(p-1)}.
  \end{align} 
  
  From the portmanteau lemma \cite{Billingsley}, we get
  \begin{align}\mathbb{E}f(X)g(Y) & \geq
  ||f(X)||_{p^\prime}||g(Y)||_{1+(s+\epsilon)(p-1)}. \end{align}

Since this is true for each $\epsilon>0,$ we get from continuity of
$||.||_q$ in $q$ that
\begin{align}\mathbb{E}f(X)g(Y) & \geq ||f(X)||_{p^\prime}||g(Y)||_{1+s(p-1)}. \end{align}

Since this is true for any functions
$f:\mathcal{X}\mapsto\mathbb{R}_{>
  0},g:\mathcal{Y}\mapsto\mathbb{R}_{>0},$ we have $s_p(X;Y)\leq s.$

\begin{remark} Note that this implies that $q_p(X;Y)=1+s_p(X;Y)(p-1)$ is lower
  semi-continuous in the joint distribution for fixed $p>1$ and upper
  semi-continuous in the joint distribution for fixed $p<1.$
\end{remark}

\begin{remark}
  Lower semi-continuity of $\rho_m$ and $s_p$ was enough for our
  purposes. Indeed, $\rho_m$ and $s_p$ are not continuous in the
  underlying joint distribution. As an example, let $(X_n,Y_n)$ be
  binary-valued
  and have a joint probability distribution given by $\begin{bmatrix}\frac{1}{n} & 0 \\
    0 & 1-\frac{1}{n}\end{bmatrix}.$ Then, $(X_n,Y_n)\stackrel{d}{\to}
  (X,Y)$ where $(X,Y)$ has a joint probability distribution given by $\begin{bmatrix}0 & 0 \\
    0 & 1\end{bmatrix}.$ But $\rho_m(X_n;Y_n) = s_p(X_n;Y_n)= 1$ for
  each $n$ and each $p\neq 1,$ while $\rho_m(X;Y)=s_p(X;Y)=0.$

However, it may be shown that if $(X,Y)\sim P(x,y)$ satisfies the
assumption $P(x)>0\ \ \forall x\in\mathcal{X}, P(y)>0\ \ \ \forall
y\in\mathcal{Y},$ then $(X_n,Y_n)\stackrel{d}{\to} (X,Y)$ implies
$\lim_{n\to\infty} \rho_m(X_n;Y_n) = \rho_m(X;Y).$ To see this, use
the characterization $\rho_m(X;Y) = \sigma_2(A_{X;Y}),$ where the
matrix $A_{X;Y}$ is specified by $[A_{X;Y}]_{x,y} =
\frac{P(x,y)}{\sqrt{P(x)P(y)}}$ and $\sigma_2(\cdot)$ is the second
largest singular value \cite{Witsenhausen75, Kumar10}. Under the
assumption, $A_{X_n;Y_n}\to A_{X;Y}$ and the second largest singular
value is a continuous matrix functional.
\end{remark}

\end{itemize}

\subsection{Limiting properties of $s_p$: Proofs of
  Thm.~\ref{thm:limit_p=1} and Corollary~\ref{cor:limit_p=infty}}
\label{subsec:s*_properties}

  As in \cite{Mossel11}, we define for any non-negative random
  variable $X,$ the function
  $\Ent(X):=\mathbb{E}[X\log X] - \mathbb{E}[X]\cdot\log
  \mathbb{E}[X],$ where by convention $0\log 0:=0.$
  By strict convexity of the function $x\mapsto x\log x$ and Jensen's
  inequality, we get that $\Ent(X)\geq 0$ and equality holds if and only
  if $X$ is a constant almost surely. Also, we note that $\Ent(\cdot)$
  is homogenous, that is, $\Ent(aX) = a\Ent(X)$ for any $a\geq 0.$

  We begin by presenting first a simple lemma.

\begin{lemma}
  \label{lem:simplest}
For any random variable $Z$ satisfying $0\leq Z\leq K$ for some
constant $K>0$ and $\mathbb{E}Z=1$ and $0\leq u\leq 1,$ we have
\begin{align}
  1+ u\Ent(Z) - u^2L_1(K)
  \leq \|Z\|_{1+u} \leq 1 +
  u\Ent(Z)+ u^2 L_0(K),
\end{align}
where $L_0(K) = \frac 12\max\{K^u,1\}\max_{0\leq z\leq K} z(\log z)^2$
and $L_1(K) = (\max_{0\leq z\leq K} |z\log z|) + \frac12 (\max_{0\leq z\leq K} |z\log z|) ^2.$
\end{lemma}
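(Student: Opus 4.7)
The plan is to perform a pointwise Taylor expansion of $z^{1+u} = z\cdot z^u$ around $u=0$, take expectations, and then pass from $\mathbb{E}Z^{1+u}$ to $||Z||_{1+u} = (\mathbb{E}Z^{1+u})^{1/(1+u)}$ via standard one-variable manipulations. The key identity is that $\mathbb{E}Z=1$ forces $\mathbb{E}[Z\log Z] = \Ent(Z)$, so the first-order term in $u$ is exactly $\Ent(Z)$; the two-sided bound then reduces to controlling the quadratic remainder and handling the logarithmic singularity at $z=0$.

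For the upper bound, I would use Taylor's theorem with Lagrange remainder on $z^u = e^{u\log z}$: for each $z>0$ there exists $\xi$ between $0$ and $u\log z$ with $z^u = 1 + u\log z + \tfrac{1}{2}(u\log z)^2 e^{\xi}$. Since $e^{\xi} \leq \max\{1,z^u\} \leq \max\{1,K^u\}$, multiplying by $z$ gives the pointwise inequality $z^{1+u} \leq z + uz\log z + \tfrac{u^2}{2}\max\{1,K^u\}\cdot z(\log z)^2$. Taking expectations and bounding $\mathbb{E}[Z(\log Z)^2] \leq \max_{0\leq z\leq K} z(\log z)^2$ yields $\mathbb{E}Z^{1+u} \leq 1 + u\Ent(Z) + u^2 L_0(K)$. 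Finally, Jensen's inequality gives $\mathbb{E}Z^{1+u} \geq (\mathbb{E}Z)^{1+u} = 1$, and on $[1,\infty)$ the map $x\mapsto x^{1/(1+u)}$ is bounded above by the identity (since $\tfrac{1}{1+u}\leq 1$ and $\log x\geq 0$), so $||Z||_{1+u} \leq \mathbb{E}Z^{1+u}$, closing the upper inequality.

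For the lower bound, I would use $e^t \geq 1+t$ to obtain $z^u \geq 1 + u\log z$ for $z>0$; multiplying by $z\geq 0$ and taking expectations yields $\mathbb{E}Z^{1+u} \geq 1 + u\Ent(Z)$. Taking logarithms, applying $\log(1+x) \geq x - x^2/2$ (valid for $x\geq 0$, as seen by checking that $f(x)=\log(1+x)-x+x^2/2$ has $f(0)=0$ and $f'(x)=x^2/(1+x)\geq 0$), dividing by $1+u$, and exponentiating via $e^y\geq 1+y$ produces
\begin{align*}
||Z||_{1+u} \geq 1 + \frac{u\Ent(Z)}{1+u} - \frac{u^2\Ent(Z)^2}{2(1+u)}~.
\end{align*}
Writing $\frac{u\Ent(Z)}{1+u} = u\Ent(Z) - \frac{u^2\Ent(Z)}{1+u}$ and using $1+u\geq 1$ then gives $||Z||_{1+u} \geq 1 + u\Ent(Z) - u^2\bigl(\Ent(Z) + \Ent(Z)^2/2\bigr)$, and since $\Ent(Z) = \mathbb{E}[Z\log Z] \leq \max_{0\leq z\leq K}|z\log z|$ pointwise, monotonicity of $t\mapsto t+t^2/2$ on $[0,\infty)$ gives $\Ent(Z)+\Ent(Z)^2/2 \leq L_1(K)$. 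The only delicate point in the whole argument is the logarithmic singularity at $z=0$: although $\log z\to-\infty$, both $z\log z$ and $z(\log z)^2$ are bounded on $[0,K]$ with the convention $0\log 0 = 0$, which is precisely why the error constants are written in terms of these maxima; past this, everything is monotone rearrangement of Bernoulli- and $\log$-type inequalities and I do not anticipate any serious obstacles.
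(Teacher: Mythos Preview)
Your proof is correct and follows essentially the same approach as the paper: pointwise Taylor expansion of $z^{1+u}=ze^{u\log z}$, take expectations to bound $\mathbb{E}Z^{1+u}$, then pass to $\|Z\|_{1+u}$. The only differences are in the auxiliary one-variable inequalities used for that last step: for the upper bound you use $x^{1/(1+u)}\le x$ on $[1,\infty)$ (slightly slicker), while the paper uses $(1+x)^r\le 1+rx$; for the lower bound you go via $\log(1+x)\ge x-x^2/2$ and $e^y\ge 1+y$, while the paper applies $(1+x)^r\ge 1+rx-\tfrac{r(1-r)}{2}x^2$ directly, yielding a marginally sharper intermediate bound (an extra factor $u/(1+u)$ on the $\Ent(Z)^2$ term) that is discarded anyway.
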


\begin{proof}[Proof of Lemma~\ref{lem:simplest}]
  For any constant $0\leq u\leq 1$ and any $\theta\in\mathbb{R},$ a
  Taylor's series expansion yields
  
  $$1+u\theta \leq e^{u\theta} \leq 1+u\theta + \frac{u^2}{2} \theta^2\max\{e^{u\theta},1\}~.$$

  Thus, for any $0\leq z\leq K$ for some constant $K>0,$ and $0\leq
  u\leq 1,$ we have using $z^{1+u} = ze^{u\log z},$ 

  $$z+uz\log z \leq z^{1+u} \leq z+uz\log z + \frac{u^2}{2}z(\log
    z)^2\max\{z^u,1\}~.$$

    For any random variable $Z$ satisfying $0\leq Z\leq K$ almost
    surely and any $0\leq u\leq 1,$

    \begin{align}
      \mathbb{E}Z+u\mathbb{E}[Z\log Z] \leq \mathbb{E}[Z^{1+u}] & \leq \mathbb{E}Z+u\mathbb{E}[Z\log Z] + \frac{u^2}{2}\max\{K^u,1\}\mathbb{E}[Z(\log
                                                                  Z)^2]
                                                                  \nonumber
      \\
                                                                & \leq \mathbb{E}Z+u\mathbb{E}[Z\log Z] + u^2L_0(K)~.\label{eq:one_calc}
    \end{align}
    
    Now, again a Taylor's expansion yields that for $0\leq r\leq 1$
    and any $x\geq 0,$ we have

    \begin{align}
      1+rx -
      \frac{x^2}{2}r(1-r)
      \leq (1+x)^r \leq 1+rx~. \label{eq:two_calc}
    \end{align}

    Suppose $Z$ is any random variable that satisfies $0\leq Z\leq K$
    and $\mathbb{E}Z=1.$ Then $\mathbb{E}[Z\log Z] = \Ent(Z)\geq 0.$
    For any $0\leq u\leq 1,$ we get using the lower bounds in both
    \eqref{eq:one_calc} and \eqref{eq:two_calc} with the choice
    $r=\frac{1}{1+u}$ and $x = u\Ent(Z),$ 

    \begin{align*}
      1+ \frac{1}{1+u} u\Ent(Z) - \frac{u^2\Ent(Z)^2}{2}
      \frac{1}{1+u}\frac{u}{1+u} \leq \left(\mathbb{E}[Z^{1+u}]\right)^{\frac{1}{1+u}}.
    \end{align*}

    Similarly, using the upper bounds in both
    \eqref{eq:one_calc} and \eqref{eq:two_calc} with the choice
    $r=\frac{1}{1+u}$ and $x = u\Ent(Z)+u^2L_0(K),$ we get 
    \begin{align*}
    \left(\mathbb{E}[Z^{1+u}]\right)^{\frac{1}{1+u}} \leq 1 +
      \frac{1}{1+u} u\Ent(Z) +\frac{1}{1+u}u^2 L_0(K)~.
    \end{align*}

Putting the above two inequalities together,
\begin{align*}
      1+ \frac{1}{1+u} u\Ent(Z) - \frac{u^2\Ent(Z)^2}{2}
      \frac{1}{1+u}\frac{u}{1+u} \leq \|Z\|_{1+u} \leq 1 +
      \frac{1}{1+u} u\Ent(Z) +\frac{1}{1+u}u^2 L_0(K).
\end{align*}

Define $L_2(K) = \max_{0\leq z\leq K} |z\log z|$ and
observing that for $0\leq u\leq 1,$ we have $\frac 12 \leq
\frac{1}{1+u} \leq 1,$ we obtain

\begin{align*}
  1+ \frac{u\Ent(Z)}{1+u} - \frac{u^3}{2}L_2(K)^2
  \leq \|Z\|_{1+u} \leq 1 +
  \frac{u\Ent(Z)}{1+u} +u^2L_0(K).
\end{align*}

Further using the fact that for $0\leq u\leq 1,$ we have $1-u\leq
\frac{1}{1+u} \leq 1,$ we get

 \begin{align*}
  1+ u\Ent(Z) - u^2L_2(K) - \frac{u^3}{2}L_2(K)^2
  \leq \|Z\|_{1+u} \leq 1 +
  u\Ent(Z)+u^2L_0(K).
\end{align*}

Finally, since $L_1(K) = L_2(K) + \frac 12
L_2(K)^2$ and $u\leq 1,$ we have 
\begin{align}
  1+ u\Ent(Z) - u^2L_1(K) 
  \leq \|Z\|_{1+u} \leq 1 +
  u\Ent(Z)+u^2L_0(K). \label{eq:main-norm-bounds}
\end{align}

\end{proof}

Next, we present the proof of Thm.~\ref{thm:limit_p=1}.

\begin{proof}[Proof of Theorem~\ref{thm:limit_p=1} ]

  The theorem is easily seen to be true when $Y$ is a constant almost
  surely. We assume then that this is not the case and that $P_Y(y)>0$
  for all $y\in\mathcal{Y}$ and $P_X(x)>0$ for all $x\in\mathcal{X}$
  without loss of generality. Define
  $s := \sup \frac{\Ent(\mathbb{E}[g(Y)|X])}{\Ent(g(Y))},$ where the
  supremum is taken over functions
  $g:\mathcal{Y}\mapsto\mathbb{R}_{\geq 0}$ such that $g(Y)$ is not a
  constant almost surely.

For any distribution $R_Y(y)\not\equiv P_Y(y)$ consider the
non-constant non-negative valued function $g$ given by
$g(y) := \frac{R_Y(y)}{P_Y(y)}.$ This choice yields
$\Ent(g(Y))=D(R_Y(y)||P_Y(y))$ and
$\Ent(\mathbb{E}[g(Y)|X])=D(R_X(x)||P_X(x))),$ where
$R_X(x) = \sum_y \frac{P_{X,Y}(x,y)}{P_Y(y)}R_Y(y).$ Along with homogeneity
of $\Ent(\cdot),$ this means that $s= s^*(Y;X)$ and thus, from the
data processing inequality $0\leq s\leq 1.$

For non-negative $g,$ we always have
\begin{equation}\label{eqn:1-norm-equality}
  ||\mathbb{E}[g(Y)|X]||_1= ||g(Y)||_1\ \ \forall
  g:\mathcal{Y}\mapsto\mathbb{R}_{\geq 0}.
\end{equation}

Let $\mathcal{G}$ be the set of all non-negative functions
$g:\mathcal{Y}\mapsto\mathbb{R}_{\geq 0}$ that satisfy
$||g(Y)||_1=1.$ Note that for any $g\in\mathcal{G},$ both $g(Y)$ and
$\mathbb{E}[g(Y)|X]$ are bounded between $0$ and
$K:=\frac{1}{\min_{y} P_Y(y)}$ almost surely.

If $0\leq m\leq 1$ is any parameter satisfying $m<s,$ then
$(1+ \tau, 1+m\tau)\not\in\mathcal{R}(X;Y)$ for all sufficiently
small $\tau>0.$ To see this, fix $g_0$ to be any function in
$\mathcal{G}$ that satisfies
\begin{equation}\label{eqn:g_0}
  \frac{\Ent(\mathbb{E}[g_0(Y)|X])}{\Ent(g_0(Y))}\geq m+\frac{\delta}{2},
\end{equation}
where $\delta:=s-m.$ From Lemma~\ref{lem:simplest}, we have that for
any $g\in\mathcal{G},$
\begin{align}
  &1 +
    m\tau\Ent(g(Y)) - m^2\tau^2L_1(K) \leq ||g(Y)||_{1+m\tau} \leq 1+ m\tau\Ent(g(Y)) + m^2\tau^2 L_0(K),\label{eqn:m-tau}\\
  & 1 +
    \tau\Ent(\mathbb{E}[g(Y)|X]) - \tau^2L_1(K) \leq ||\mathbb{E}[g(Y)|X]||_{1+\tau} \leq 1+ \tau\Ent(\mathbb{E}[g(Y)|X]) + \tau^2 L_0(K).\label{eqn:tau}
\end{align}

Putting together \eqref{eqn:g_0}, \eqref{eqn:m-tau}, \eqref{eqn:tau},
we get the existence of $\tau_0>0$ such that
\begin{equation}
||\mathbb{E}[g_0(Y)|X]||_{1+\tau} > ||g_0(Y)||_{1+m\tau}\ \ \forall \tau: 
0<\tau\leq \tau_0.
\end{equation}

Thus, $s = s^*(Y;X) \geq \limsup_{p\to 1^+} s_p(X;Y) = \limsup_{p\to 1^+} \frac{q_p^*(X;Y)-1}{p-1}.$

If for some $0\leq m\leq 1$ we have $m>s,$ then define for any $g\in\mathcal{G},$
$$\tau(g) :=\max\{\zeta: 0\leq \zeta\leq 1,
||\mathbb{E}[g(Y)|X]||_{1+\eta} \leq ||g(Y)||_{1+m\eta} \mbox{ for all
} 0\leq \eta\leq \zeta\}.$$

From \eqref{eqn:1-norm-equality}, we have $\tau(g)\geq 0$ for all
$g\in\mathcal{G}.$

Let $g_1\in \mathcal{G}$ denote the constant function 1. Then,
$\tau(g_1) = 1.$ Lemma~\ref{lem:difficult} below shows that there is
an open neighborhood $U$ of $g_1$ in $\mathcal{G}$ and a constant
$\tau_0>0$ such that
$\tau(g) \geq \tau_0\ \forall g\in U.$

Over the compact set $\mathcal{G}\setminus U,$ we define 
$$\tau^\prime(g) :=\max\{\zeta: 0\leq \zeta\leq 1,
1 + \eta\Ent(\mathbb{E}[g(Y)|X]) +\eta^2L_0(K) \leq 1 + m\eta\Ent(g(Y)) -
m^2\eta^2L_1(K) \mbox{ for all } 0\leq \eta\leq \zeta\}.$$

Then, $\tau^\prime(g)\leq \tau(g)$ from Lemma~\ref{lem:simplest}. And
indeed,
$$\tau^\prime(g) = \min\left\{\frac{m\Ent(g(Y)) -
  \Ent(\mathbb{E}[g(Y)|X])}{L_0(K) + m^2L_1(K)},1\right\}.$$
Since $\tau^\prime(g)$ is continuous in $g$ over
$\mathcal{G}\setminus U,$ and furthermore strictly positive over that
set (since $m>s$ and because $\Ent(g(Y))>0$ for $g$ non-constant), we
have that $\tau^\prime$ attains its infimum over the compact set
$\mathcal{G}\setminus U.$ Since $\tau^\prime(g)\leq \tau(g),$ we also
have that $\inf_{g\in\mathcal{G}\setminus U} \tau(g)>0.$

Then,
$\inf_{g\in \mathcal{G}} \tau(g) =
\min\left\{\tau_0,\inf_{g\in\mathcal{G}\setminus U} \tau(g)\right\} >0.$
Using homogeneity of the norm, this establishes that
$(1+\tau,1+m\tau)\in\mathcal{R}(X;Y)$ for all $0\leq \tau\leq \tau_0$
for some $\tau_0>0$ and thus, that
$s = s^*(Y;X) \leq \liminf_{p\to 1^+} s_p(X;Y) = \liminf_{p\to 1^+}
\frac{q_p^*(X;Y)-1}{p-1}.$

Therefore,
$s = s^*(Y;X) = \lim_{p\to 1^+} s_p(X;Y) = \lim_{p\to 1^+}
\frac{q_p^*(X;Y)-1}{p-1}.$

Similarly, we can show the reverse hypercontractive case namely, that
$s = s^*(Y;X) = \lim_{p\to 1^-} s_p(X;Y) = \lim_{p\to 1^-}
\frac{q_p^*(X;Y)-1}{p-1}.$ This completes the proof of the theorem.

\begin{lemma}
  \label{lem:difficult}
  When $1\geq m>s,$ there exists an open neighborhood $U$ of the constant
  function $g_1$ in $\mathcal{G}$ and a constant $\tau_0>0$ such that
  $\tau(g) \geq \tau_0$ for all
  $g\in U.$
\end{lemma}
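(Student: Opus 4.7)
The plan is to carry out a careful second-order perturbation analysis around the constant function $g_1 \equiv 1$. For $g \in \mathcal{G}$ near $g_1$, I would write $g(y) = 1 + \epsilon h(y)$ where $\epsilon = \|g-1\|_\infty$ and $\|h\|_\infty = 1$. Since $\mathbb{E}[g(Y)] = 1$ we have $\mathbb{E}[h(Y)] = 0$, and because $P_Y$ has full support (as assumed without loss of generality at the start of the proof of Theorem~\ref{thm:limit_p=1}), the normalization $\|h\|_\infty = 1$ gives the uniform lower bound $\mathbb{E}[h(Y)^2] \geq p_{\min} := \min_y P_Y(y) > 0$.

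Next I would Taylor expand both norms to second order in $\epsilon$. Writing $\tilde h(X) := \mathbb{E}[h(Y)|X]$, expanding $(1+\epsilon h)^{1+m\tau}$ via the binomial series, taking expectations using $\mathbb{E}[h]=0$, and then extracting the $(1+m\tau)$-th root yields
\begin{align*}
\|g\|_{1+m\tau} &= 1 + \tfrac{m\tau}{2}\,\epsilon^2\,\mathbb{E}[h^2] + \tau\cdot O(\epsilon^3) + O(\tau^2\epsilon^4), \\
\|\mathbb{E}[g(Y)|X]\|_{1+\tau} &= 1 + \tfrac{\tau}{2}\,\epsilon^2\,\mathbb{E}[\tilde h^2] + \tau\cdot O(\epsilon^3) + O(\tau^2\epsilon^4),
\end{align*}
where the $O$-constants depend only on the uniform bound $K = 1/p_{\min}$ and are uniform in $h$ with $\|h\|_\infty \leq 1$ and in $\tau \in [0,1]$. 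Subtracting, the leading-order difference equals $\tfrac{\tau\epsilon^2}{2}\bigl(m\,\mathbb{E}[h^2] - \mathbb{E}[\tilde h^2]\bigr)$.

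By the Cauchy--Schwarz characterization of maximal correlation applied to the mean-zero random variables $h(Y)$ and $\tilde h(X)$, one has $\mathbb{E}[\tilde h^2] \leq \rho_m^2(X;Y)\,\mathbb{E}[h^2]$. Combining Theorem~\ref{thm:main-inequality} with the inequality $s \geq \limsup_{p\to 1^+} s_p(X;Y)$ that is already established in the preceding portion of the proof of Theorem~\ref{thm:limit_p=1} gives $\rho_m^2(X;Y) \leq s$, so $m\,\mathbb{E}[h^2] - \mathbb{E}[\tilde h^2] \geq (m-s)\,\mathbb{E}[h^2] \geq (m-s)\,p_{\min}$. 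Hence the leading term is bounded below by $\tfrac{1}{2}(m-s)\,p_{\min}\,\tau\epsilon^2$, which dominates the remainder $C\,(\tau\epsilon^3 + \tau^2\epsilon^4)$ uniformly for $\tau \in [0,\tau_0]$ (with, say, $\tau_0 = 1$) provided $\epsilon$ is smaller than a threshold $\epsilon_0$ depending only on $m-s$, $p_{\min}$, and $C$. Taking $U := \{g \in \mathcal{G} : \|g-1\|_\infty < \epsilon_0\}$ then yields $\tau(g) \geq \tau_0$ for all $g \in U$, as required.

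The main technical obstacle is ensuring the Taylor remainders are genuinely uniform in $h$ over the unit sphere of $L^\infty$. This requires a Lagrange-type estimate for $z \mapsto (1+z)^{1+u}$ on $|z| \leq 1/2$ with $u$ bounded, noting that $\binom{1+u}{k}$ carries a factor of $u$ for every $k \geq 2$, which is the source of the extra $\tau$ in the remainder. The companion estimate for $\mathbb{E}[g(Y)|X] = 1 + \epsilon \tilde h$ is identical because Jensen's inequality furnishes $|\tilde h(X)| \leq \|h\|_\infty \leq 1$ almost surely, so the same constants apply.
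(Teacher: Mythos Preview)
Your argument is correct and takes a genuinely different route from the paper's. The paper works with the $\Ent$ functional directly: it lower-bounds $\|g\|_{1+m\eta}$ via convexity of the exponential and the defining inequality $\Ent(\mathbb{E}[g(Y)\mid X])\le s\,\Ent(g(Y))$, upper-bounds $\|\mathbb{E}[g(Y)\mid X]\|_{1+\eta}$, and then controls the second-order term using Pinsker's inequality together with an auxiliary function $\kappa(\alpha)=\max_{|v-1|\le\alpha} v(\log v)^2$, assembling everything into an explicit lower bound $\beta(\rho)$ on $\tau(g)$. Your approach replaces all of this by a straight second-order perturbation: write $g=1+\epsilon h$, expand both norms, and observe that the coefficient of $\tau\epsilon^2$ is $\tfrac12\bigl(m\,\mathbb{E}h^2-\mathbb{E}\tilde h^2\bigr)$, which is uniformly positive by the contraction $\mathbb{E}\tilde h^2\le\rho_m^2\,\mathbb{E}h^2$ and the hypothesis $m>s\ge\rho_m^2$. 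This is shorter and more transparent; the price is that you import the inequality $\rho_m^2\le s$ from outside the lemma, whereas the paper's proof is self-contained in terms of $s$ alone.

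One point deserves care. Your justification of $\rho_m^2(X;Y)\le s$ appeals to ``the inequality $s\ge\limsup_{p\to1^+}s_p(X;Y)$ already established in the preceding portion of the proof.'' If you examine that portion, the argument there (for $m<s$) shows $(1+\tau,1+m\tau)\notin\mathcal R(X;Y)$ for small $\tau$, i.e.\ $s_{1+\tau}>m$, which yields $\liminf_{p\to1^+}s_p\ge s$, not $\limsup\le s$; the bound $\limsup_{p\to1^+}s_p\le s$ is precisely what the present lemma is used to prove, so invoking it here would be circular. The fix is immediate: prove $\rho_m^2\le s^*(Y;X)$ directly by the local $\chi^2$ expansion of KL divergence. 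For $R_Y=P_Y(1+\epsilon h)$ with $\mathbb{E}h(Y)=0$ one has $D(R_Y\|P_Y)=\tfrac{\epsilon^2}{2}\mathbb{E}h^2+O(\epsilon^3)$ and $D(R_X\|P_X)=\tfrac{\epsilon^2}{2}\mathbb{E}\tilde h^2+O(\epsilon^3)$, so the defining ratio for $s^*(Y;X)$ is at least $\mathbb{E}\tilde h^2/\mathbb{E}h^2$, and taking the supremum over $h$ gives $s^*(Y;X)\ge\rho_m^2$. With this in place your proof goes through cleanly, and indeed with $\tau_0=1$.
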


\begin{proof}[Proof of Lemma~\ref{lem:difficult}]
  Let $\mathcal{F}$ denote the set of all functions
  $f:\mathcal{Y}\mapsto\mathbb{R}$ such that $\mathbb{E}[f(Y)] = 0$
  and $\mathbb{E}[f(Y)^2] = 1.$ For any $f\in\mathcal{F},$ and any
  $y\in\mathcal{Y},$ we have $|f(y)|\leq \frac{1}{\min_{y}\sqrt{P_Y(y)}}.$

  For $0<\epsilon_0 < \frac 12\min_{y}\sqrt{P_Y(y)},$ the set
  $U(\epsilon_0):= \{g_1+\epsilon f: f\in\mathcal{F}, 0\leq
  \epsilon<\epsilon_0\}$
  is an open neighborhood of the constant function $g_1$ in
  $\mathcal{G}.$ Furthermore, $\frac 12\leq g(y)\leq \frac 32$ for all
  $y\in\mathcal{Y}$ and all $g\in U(\epsilon_0).$

  Let $m = (1+\delta)s$ where $s<1$ and $m\leq 1$ and where $\delta>0.$

  For $g\in\mathcal{G},$ denote $\chi_g(x) = \mathbb{E}[g(Y)|X=x]$ and note that
  $\frac 12\leq \chi_g(x)\leq \frac 32$ for all $x\in\mathcal{X}.$

  Now, for $0\leq \eta \leq 1,$
  \begin{align}
    \|g(Y)\|_{1+m\eta} & = \left(\sum_y P_Y(y)g(y)e^{m\eta \log
                         g(y)}\right)^{\frac{1}{1+m\eta}} \\
    & \geq e^{\frac{m\eta}{1+m\eta}\Ent(g(Y))} \label{eq:convexity}\\
    & \geq e^{\frac{m\eta}{s(1+m\eta)}\Ent(\mathbb{E}[g(Y)|X])} \label{eq:definition_of_s}\\
    & \geq e^{(1+\delta)\frac{\eta}{(1+\eta)}\Ent(\chi_g(X))} \\
    & \geq \left(1 + \eta(1+\delta)\Ent(\chi_g(X)) +
      \frac{\eta^2}{2}(1+\delta)^2\Ent(\chi_g(X))^2\right)^{\frac{1}{1+\eta}} \label{eq:exp_lower_bound},
  \end{align}

where \eqref{eq:convexity} follows from convexity of the exponential
function, \eqref{eq:definition_of_s} follows from the definition of
$s$ and \eqref{eq:exp_lower_bound} follows from $e^u\geq
1+u+\frac{u^2}{2}$ for $u\geq 0.$

Likewise, we have 
\begin{align}
  \|\mathbb{E}[g(Y)|X]\|_{1+\eta} & = \left(\sum_{x} P_X(x)
                                    \chi_g(x)e^{\eta\log
                                    \chi_g(x)}\right)^{\frac{1}{1+\eta}} \\
                                  & \leq \left(\sum_{x} P_X(x)
                                    \chi_g(x)\left(1+\eta\log \chi_g(x) +
                                    a\frac{\eta^2}{2}(\log \chi_g(x))^2\right)\right)^{\frac{1}{1+\eta}} \\
                                  & \leq \left(1+\eta\Ent(\chi_g(X)) +
                                    a\frac{\eta^2}{2}\sum_xP_X(x)\chi_g(x)(\log \chi_g(x))^2\right)^{\frac{1}{1+\eta}},
\end{align}
where $a>1$ is a constant such that $e^u \leq 1 + u + a\frac{u^2}{2}$
for $|u|\leq \log 2.$

Note that $\Ent(\chi_g(X)) = D(Q_X||P_X)$ where $Q_X(x) = P_X(x)\chi_g(x)$ for
all $x\in\mathcal{X}.$ By Pinsker's inequality, 

$$\Ent(\chi_g(X)) \geq \frac
12 \left(\sum_{x} |P_X(x)\chi_g(x) - P_X(x)|\right)^2.$$

Thus, for all $x\in\mathcal{X},$ we have

$$|\chi_g(x) - 1|\leq \frac{1}{\min_x P_X(x)}\sqrt{2\Ent(\chi_g(X))}.$$

If we define for $0\leq \alpha\leq 1,$ the function $\kappa(\alpha) := \max_{1-\alpha\leq v\leq 1+\alpha}
v(\log v)^2,$ where $\kappa(\alpha)\to 0$ as $\alpha\to 0,$ then we have

\begin{align}
  \|\mathbb{E}[g(Y)|X]\|_{1+\eta} &\leq \left(1+\eta\Ent(\chi_g(X)) +
                                    a\frac{\eta^2}{2}\kappa\left(\frac{\sqrt{2\Ent(\chi_g(X))}}{\min_x
                                    P_X(x)}\right)\right)^{\frac{1}{1+\eta}}. \label{eq:exp_upper_bound}
\end{align}

Using \eqref{eq:exp_lower_bound} and \eqref{eq:exp_upper_bound}, we
find that for any $g\in U(\epsilon_0),$ we have $\tau(g)\geq
\beta(\Ent(\chi_g(X)))$ where 

\begin{align*}
  \beta(\rho):=
  \begin{cases}
    1 & \mbox{ if } a\kappa\left(\frac{\sqrt{2\rho}}{\min_x
                                    P_X(x)}\right) - (1+\delta)^2\rho^2\leq 0 \\
   \min\left\{ \frac{2\delta\rho}{a\kappa\left(\frac{\sqrt{2\rho}}{\min_x
                                    P_X(x)}\right) -
                                (1+\delta)^2\rho^2},1\right\} &
                              \mbox{ else.}
  \end{cases}
\end{align*}
Given any $\theta>0,$ there exists $0<\epsilon_1<\epsilon_0$ small
enough so that $\Ent(\chi_g(X))\leq \Ent(g(Y))\leq \theta$ for all
$g\in U(\epsilon_1).$ This means that for all $g\in U(\epsilon_1),$ we
have $\tau(g) \geq \inf_{0\leq \rho\leq \theta} \beta(\rho).$ Since
$\kappa(\alpha) = \alpha^2 + O(\alpha^3)$ for small $\alpha>0,$ it
follows that $\inf_{0\leq \rho\leq \theta} \beta(\rho)>0$ for
sufficiently small $\theta.$ This completes the proof of the lemma.

  


\end{proof}

\end{proof}

\vspace{0.1in}

Now, we present the proof of Corollary~\ref{cor:limit_p=infty}.

\vspace{0.1in}

\begin{proof}[Proof of Corollary~\ref{cor:limit_p=infty}]
  If $X$ and $Y$ are independent, then it is clear that $\rho_m(X;Y) =
  s^*(X;Y)=0$ and $q_p^*(X;Y)=1$ for all $p\neq 1.$ The claim is
  obvious in this case.

Suppose $X$ and $Y$ are not independent. Fix any $\epsilon$ satisfying
$0<\epsilon < s^*(Y;X).$ Note that by
Theorems~\ref{thm:main-inequality} and~\ref{thm:limit_p=1}, we have
$s^*(Y;X) = \lim_{p\to 1} s_p(X;Y)\geq \rho_m^2(X;Y)>0.$

From Thm.~\ref{thm:limit_p=1}, we have that there exists a
$\delta>0$ such that 
\begin{align}
  0<|p-1|\leq\delta  & \implies s^*(Y;X)-\epsilon\leq
  \frac{q_p^*(X;Y)-1}{p-1}\leq s^*(Y;X)+\epsilon.\label{eq:derivativebound}
\end{align}

Now, define 

\begin{align}
  A(\epsilon)&:= \left\{(p,q): 0<|p-1|\leq \delta,
    s^*(Y;X)+\epsilon\leq \frac{q-1}{p-1}\leq 1\right\},
  \\
  B(\epsilon)&:= \left\{(p,q): 0<|p-1|\leq \delta,
    s^*(Y;X)-\epsilon\leq \frac{q-1}{p-1}\leq
    1\right\}\cup\{(1,1)\}\nonumber \\
  & \hspace{40pt}\cup \left\{(p,q): |p-1|\geq \delta,
    \rho_m^2(X;Y)\leq \frac{q-1}{p-1}\leq
    1\right\}.
\end{align}

From \eqref{eq:derivativebound} and Thm.~\ref{thm:main-inequality}, it is
clear that 
\begin{align}
  A(\epsilon) \subseteq\mathcal{R}(X;Y)\subseteq B(\epsilon).
\end{align}

By using the duality $(p,q)\in\mathcal{R}(X;Y) \Leftrightarrow
(q^\prime,p^\prime)\in\mathcal{R}(Y;X)$ for $p,q\neq 1,$ we obtain

\begin{align}
  A_1(\epsilon) \subseteq\mathcal{R}(Y;X)\subseteq B_1(\epsilon),
\end{align}

where 

\begin{align}
  A_1(\epsilon)&:= \left\{(p,q): |q-1|\geq \frac{1}{\delta},
    s^*(Y;X)+\epsilon\leq \frac{q-1}{p-1}\leq 1\right\},
  \\
  B_1(\epsilon)&:= \left\{(p,q): |q-1|\geq \frac{1}{\delta},
    s^*(Y;X)-\epsilon\leq \frac{q-1}{p-1}\leq
    1\right\}\cup\{(1,1)\}\nonumber \\
  & \hspace{40pt}\cup \left\{(p,q): 0<|q-1|\leq\frac{1}{\delta},
    \rho_m^2(X;Y)\leq \frac{q-1}{p-1}\leq
    1\right\}.
\end{align}

This immediately gives
\begin{align}
  s^*(Y;X)-\epsilon \leq \lim\inf_{p\to-\infty}
  \frac{q_p^*(Y;X)-1}{p-1} \leq \lim\sup_{p\to-\infty}
  \frac{q_p^*(Y;X)-1}{p-1} \leq s^*(Y;X)+\epsilon,\\
  s^*(Y;X)-\epsilon \leq \lim\inf_{p\to\infty}
  \frac{q_p^*(Y;X)-1}{p-1} \leq \lim\sup_{p\to\infty}
  \frac{q_p^*(Y;X)-1}{p-1} \leq s^*(Y;X)+\epsilon.
\end{align}

Since this is true for each sufficiently small $\epsilon>0,$
interchanging $X$ and $Y$ completes the proof.

\end{proof}

\end{document}